\documentclass[a4paper,fleqn]{cas-sc}

\usepackage{amssymb}
\usepackage{amsmath}
\usepackage{amsthm}

\usepackage{wrapfig}
\usepackage{overpic}
\usepackage{tabularx} 
\usepackage[all]{xy}
\usepackage{contour} 
\contourlength{0.5pt}

\usepackage{booktabs} % Not included in the journal style

\usepackage{xcolor}

\definecolor{lavender}{RGB}{180, 160, 210}

\usepackage{etoolbox}
\newbool{anonymous}

\usepackage[dvipsnames]{xcolor} % For editing convenient
\usepackage{float}

\newcommand{\notforarxiv}[1]{}
\newif\ifshowedit
\showeditfalse

\ifshowedit
  \newcommand{\edit}[1]{\begingroup\color{blue}#1\endgroup}
  
\else
  \newcommand{\edit}[1]{#1}
  
\fi

\newcommand{\FP}{\mathcal{F}_\mathcal{P}}   % faces    -> oriented planes
\newcommand{\CE}{\mathcal{E}_\mathcal{C}}   % edges    -> oriented cones
\newcommand{\SV}{\mathcal{V}_\mathcal{S}}   % vertices -> oriented spheres

\newcommand{\Lf}{\mathcal{L}}

\newcommand{\OS}{\mathrm{OrSpheres}(\R^3)}
\newcommand{\OP}{\mathrm{OrPlanes}(\R^3)}
\newcommand{\OC}{\mathrm{OrCones}(\R^3)}

\newtheorem{defn}{Definition}[section]
\newtheorem{thm}[defn]{Theorem}

\newtheorem{cor}[defn]{Corollary}
\newtheorem{prop}[defn]{Proposition}

\newtheorem{remark}[defn]{Remark}
\def\be{\begin{equation}}
\def\ee{\end{equation}}

\newcommand{\R}{\mathbb{R}}

\def\la{\langle}
\def\ra{\rangle}

\makeatletter
\newsavebox{\@brx}
\newcommand{\lla}[1][]{\savebox{\@brx}{$\m@th{#1\langle}$}%
  \mathopen{\copy\@brx\kern-0.5\wd\@brx\usebox{\@brx}}}
\newcommand{\rra}[1][]{\savebox{\@brx}{$\m@th{#1\rangle}$}%
  \mathclose{\copy\@brx\kern-0.5\wd\@brx\usebox{\@brx}}}
\makeatother

\usepackage[authoryear,longnamesfirst]{natbib}

\def\tsc#1{\csdef{#1}{\textsc{\lowercase{#1}}\xspace}}
\tsc{WGM}
\tsc{QE}
\tsc{EP}
\tsc{PMS}
\tsc{BEC}
\tsc{DE}
\begin{document}
\let\WriteBookmarks\relax
\def\floatpagepagefraction{1}
\def\textpagefraction{.001}
\shorttitle{Approximation by Quad Meshes in Laguerre Geometry}

%\begin{frontmatter}

\title [mode = title]{Approximation by Quad Meshes in Laguerre Geometry}                      
% \tnotemark[1,2]

% \tnotetext[1]{This document is the results of the research
%    project funded by the National Science Foundation.}

% \tnotetext[2]{The second title footnote which is a longer text matter
%    to fill through the whole text width and overflow into
%    another line in the footnotes area of the first page.}

\setbool{anonymous}{false}  % set to false to reveal authors

%--- Function to revela authors or keep them anonymouse
\ifbool{anonymous}{
    \author[1]{Anonymous Authors}
    \affiliation[1]{Placeholder}
    \affiliation[2]{Placeholder}
}{
    \shortauthors{A. Ramos-Cisneros, M. Skopenkov, H. Pottmann}
    \author[1]{Anthony Ramos-Cisneros}[orcid=0000-0001-5430-6193]
    \cormark[1]
    \ead{anthony.cisneros@kaust.edu.sa}
    \credit{Writing - Original Draft, Investigation, Conceptualization, Methodology, Software, Visualization, Writing - Review \& Editing}
    \affiliation[1]{organization={King Abdullah University of Science and Technology},
                    city={Thuwal},
                    postcode={23955-6900}, 
                    state={Makkah},
                    country={Saudi Arabia}}
    \author[1]{Mikhail Skopenkov}[orcid=0000-0003-2453-0009]
    \ead{mikhail.skopenkov@gmail.com}
    \credit{Writing - Review \& Editing, Formal Analysis}
    \author[2]{Helmut Pottmann}[orcid=0000-0002-3195-9316]
    \ead{pottmann@geometrie.tuwien.ac.at}
    \credit{Writing - Original draft preparation, Investigation, Formal Analysis, Conceptualization, Writing - Review \& Editing, Supervision}
    \affiliation[2]{organization={Institute of Discrete Mathematics and Geometry, Vienna University of Technology}, 
                    city={Vienna},
                    postcode={1040},
                    country={Austria}}
}

% \author[1]{Anonymous Authors}
% \affiliation[1]{Placeholder}
% \affiliation[2]{Placeholder}

% \shortauthors{A. Ramos-Cisneros, M. Skopenkov, H. Pottmann}
% \author[1]{Anthony Ramos-Cisneros}[
%                         orcid=0000-0001-5430-6193]
% \cormark[1]

% \ead{anthony.cisneros@kaust.edu.sa}

% \credit{Writing - Original Draft, Investigation, Conceptualization, Methodology, Software, Visualization, Writing - Review \& Editing }

% \affiliation[1]{organization={King Abdullah University of Science and Technology},
%                 city={Thuwal},
% %               citysep={}, % Uncomment if no comma needed between city and postcode
%                 postcode={23955-6900}, 
%                 state={Makkah},
%                 country={Saudi Arabia}}

% \author[1]{Mikhail Skopenkov}[orcid=0000-0003-2453-0009]

% \ead{mikhail.skopenkov@gmail.com}

% \credit{Writing - Review \& Editing, Formal Analysis}

% \author[2]{Helmut Pottmann}[orcid=0000-0002-3195-9316]
% % \fnmark[2]
% \ead{pottmann@geometrie.tuwien.ac.at}
% % \ead[URL]{https://www.university.org}

% \credit{Writing - Original draft preparation, Investigation, Formal Analysis, Conceptualization, Writing - Review \& Editing, Supervision}

% \affiliation[2]{organization={Institute of Discrete Mathematics and Geometry, Vienna University of Technology}, 
%                 city={Vienna},
%                 postcode={1040},
%                 country={Austria}}

\cortext[cor1]{Corresponding author}
%\cortext[cor2]{Principal corresponding author}
% \fntext[fn1]{This is the first author footnote, but is common to third
%   author as well.}
% \fntext[fn2]{Another author footnote, this is a very long footnote and
%   it should be a really long footnote. But this footnote is not yet
%   sufficiently long enough to make two lines of footnote text.}

% \nonumnote{This note has no numbers. In this work we demonstrate $a_b$
%   the formation Y\_1 of a new type of polariton on the interface
%   between a cuprous oxide slab and a polystyrene micro-sphere placed
%   on the slab.
% %   }

\begin{abstract}
We study analogs of planar-quadrilateral meshes in Laguerre sphere geometry
and the approximation of smooth surfaces by them. These new \emph{Laguerre meshes} can be viewed as watertight surfaces formed by planar quadrilaterals 
%(corresponding to the vertices of a mesh)
, strips of right circular cones (representing the edges), and spherical panels. In the smooth limit, we get an analog of conjugate nets in Laguerre geometry, which we call \emph{Laguerre conjugate nets} with respect to an attached sphere congruence. We introduce the notion of \emph{Laguerre conjugate directions}, provide a method for computing them, and apply them to approximate surfaces by L-meshes with prescribed radii of spherical panels. 
\end{abstract}

% \begin{graphicalabstract}
% \includegraphics[width=1\linewidth]{res4.png}
% \end{graphicalabstract}

\notforarxiv{
\begin{highlights}
\item Analogs of planar-quadrilateral meshes in Laguerre sphere geometry are introduced and studied.
\item Laguerre conjugate nets are defined and studied.
\item A computational framework for approximating smooth surfaces with L-nets is presented.
\end{highlights}
}

% \begin{highlights}
% \item Analogs of planar-quadrilateral meshes in Laguerre sphere geometry are introduced and studied.
% \item Laguerre conjugate nets are defined and studied.
% \item A computational framework for approximating smooth surfaces with L-nets is presented.
% \end{highlights}

\begin{keywords}
sphere mesh \sep Laguerre geometry \sep discrete differential geometry \sep
sphere congruence \sep L-mesh \sep L-conjugacy
\end{keywords}
\maketitle
\section{Introduction} \label{sec:intro}
%%%%%%%%%%%%%%%%%%%%%%%%%%%%%%%%%%%%%%%%%%%%%%%%%%%%%%%%%%%%%%%%%%%%%%%%%%%%

The availability of spherical glass panels for cladding architectural skins has motivated recent research on meshes with spherical panels. %\edit{Similarly, the availability of rotational cylindrical and conical panels, preferred in architecture due to their reduced manufacturing cost, has motivated research on paneling freeform surfaces with developable faces~\citep{gavriil2019optimizing}.} 
This work aims at an extension of basic properties and results from discrete differential geometry on 
meshes with planar faces \citep{bobenko-2009-ddg} towards
their counterparts in sphere geometries. Initially,
the focus has been on meshes in M\"obius geometry \citep{spheremesh-2023,sphereapprox-2024}, which are formed
by spherical faces and circular edges, and on their application
to computational design and approximation problems. 

\begin{wrapfigure}[8]{l}{0.25\linewidth}
\vspace{-0.3cm}
\hspace{0.3cm}
\includegraphics[width=0.9\linewidth]{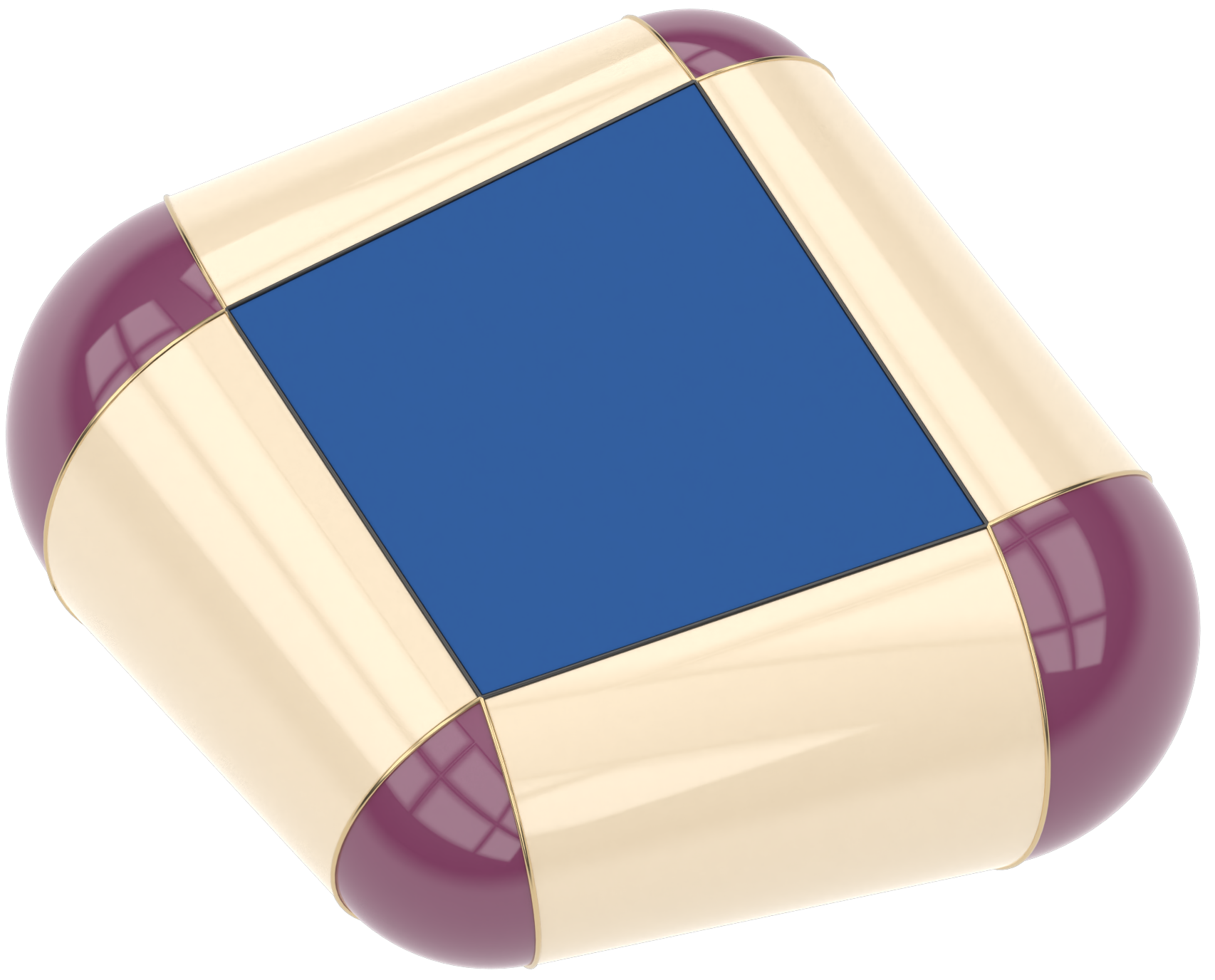}
\end{wrapfigure}

A study of their
counterparts in Laguerre sphere geometry  is not 
straightforward and a topic of our ongoing research. %\citep{Anthony-Laguerre1}.
One can see these Laguerre meshes, shortly called \emph{L-meshes}, as watertight surfaces
that are formed by planar polygons, 
%(corresponding to the faces of a mesh)
strips of right circular cones (representing
the edges), and spherical %quadrilateral 
panels (see \edit{see inset figure,} Figs.~\ref{fig:QL} and~\ref{fig:roof}). 
\edit{%Rotational cylindrical and 
Here, conical panels are also useful in architecture due to their reduced manufacturing cost, a particular case of paneling freeform surfaces with developable pieces~\citep{gavriil2019optimizing}.}

If an L-mesh closely approximates an underlying $C^2$ reference
surface $f$, its appearance depends on the
sign of the Gaussian curvature of $f$. In areas of positive curvature of $f$, 
the L-mesh is \edit{typically} a  $C^1$ surface, in contrast to their
M\"obius geometric counterparts, which are just $C^0$ surfaces. 
In negatively curved areas, the surface
represented by an L-mesh still has continuous tangent
planes, but is not regular anymore. %the contact points do not form a smooth surface. 
Rather, the surface exhibits regression curves, the appearance of which is known from
offset surfaces. This is not surprising, as offsetting 
is a special Laguerre transformation. More generally,
applying a Laguerre transformation to a smooth
L-mesh may result in the appearance of
singularities in the point set. 

\begin{figure}
\centering
  \includegraphics[width=0.9\linewidth]{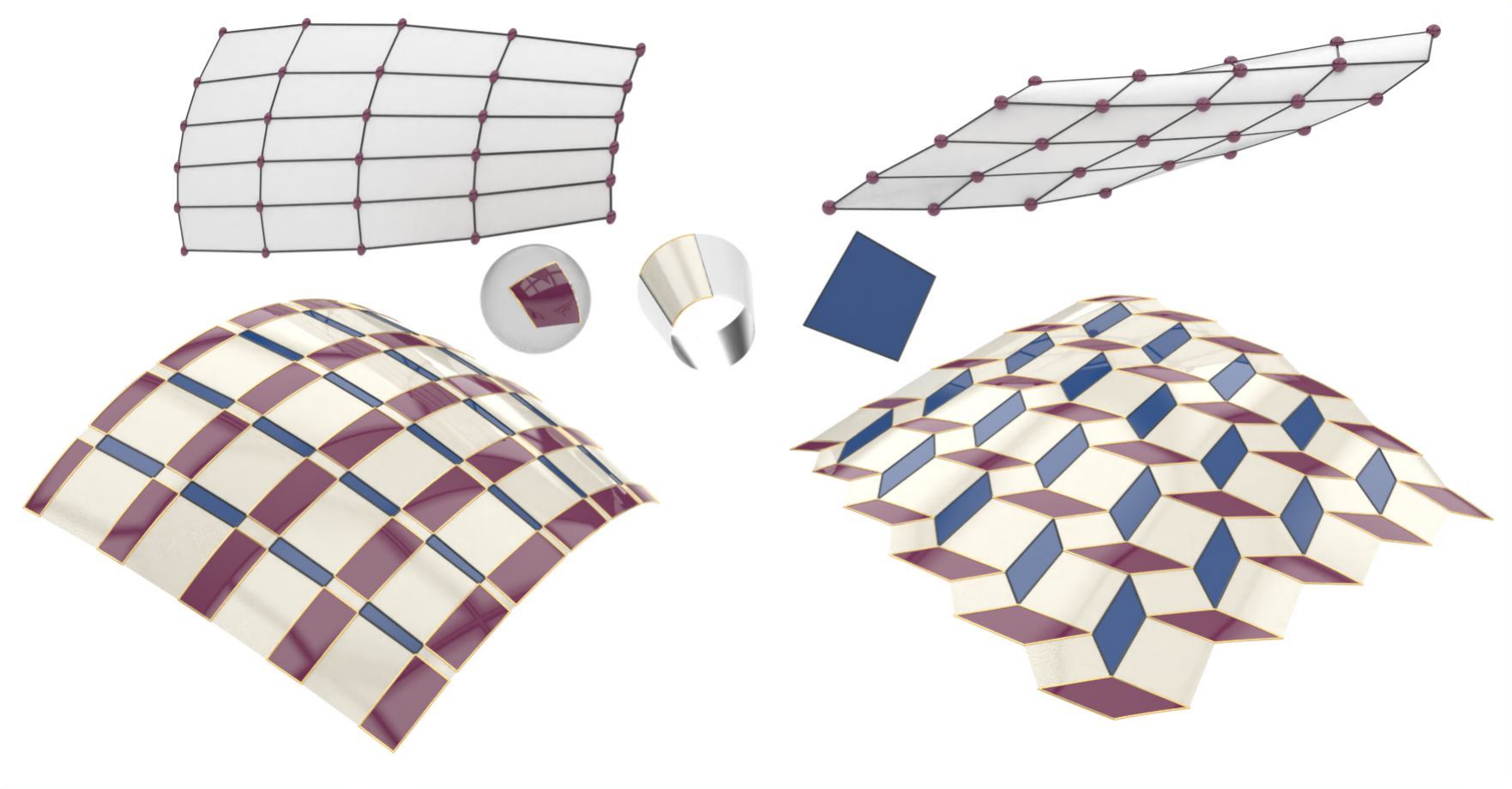}
\caption{An L-mesh represents a smooth surface in positively curved areas,
and consists of planar quads (blue), spherical panels bounded by circular
arcs (red), and pieces of rotational cones (beige), %which we 
\edit{viewed} as its edges. %The presence of \edit{conical patches} implies that the overall pattern is smooth 
For an L-mesh that is a discrete principal parameterization, \edit{the overall pattern is smooth} (left), while in a more general case, one has a staircase pattern (right). \edit{The sphere centers (red) form a mesh (gray) with %smooth, 
gradually curving discrete parameter lines (top row).}
%, as illustrated by the top view of the sphere center mesh (top row)}, shown as a \bluevarnew{gray} mesh with black edges and red vertices.
}
\label{fig:QL}
\end{figure}
The present paper deals with the \emph{approximation
of a given surface $f$
by a quad-L-mesh}, 
\edit{in which every planar and every spherical panel is surrounded by four conical patches
in a “gingham” pattern}
%\edit{, a surface composed of planar panels, conical sections, and spherical panels} 
(see, for example, Figs.~\ref{fig:spiral_defM} and~\ref{fig:seashellex5}). While we allow isolated combinatorial singularities,
we focus on \edit{regions with square grid ($\mathbb{Z}^2$) combinatorics} and call them \emph{L-nets}. However,
we want to \emph{prescribe the sphere radii of the
mesh spheres}. This may depend on the application.
A natural choice can be spheres that are good local
approximations of $f$.

Let us first look at a limit case where
all spheres in an L-net have radius zero.
Now, it is the same as a quad mesh with
planar faces, called Q-net in discrete differential geometry. Such Q-nets are discrete
counterparts of conjugate surface parameterizations \citep{bobenko-2009-ddg} and
this property is fundamental for solving the
approximation problem (see e.g.~\citep{liu+2006,pottmann-2015-ag}). One cannot
just apply numerical optimization to any initial quad mesh and push it towards an approximating
Q-net. Rather, the edges of the initial mesh should follow the directions in a conjugate frame field of $f$.

\edit{Our problem has a similar structure, which led us to a generalization of the concept of conjugacy to a Laguerre geometric counterpart, which we call \emph{L-conjugacy}. 
This allows us to follow a similar approximation approach for the present problem, where the sphere radii are prescribed. L-conjugacy is defined for a surface $f$ equipped with an attached sphere congruence, i.e., a 2-parameter family of spheres tangent to $f$. As we did not find such a concept in the literature, we first provide a thorough study  and then apply it to the approximation.} 

\begin{figure}
    \centering
    \includegraphics[width=1\linewidth]{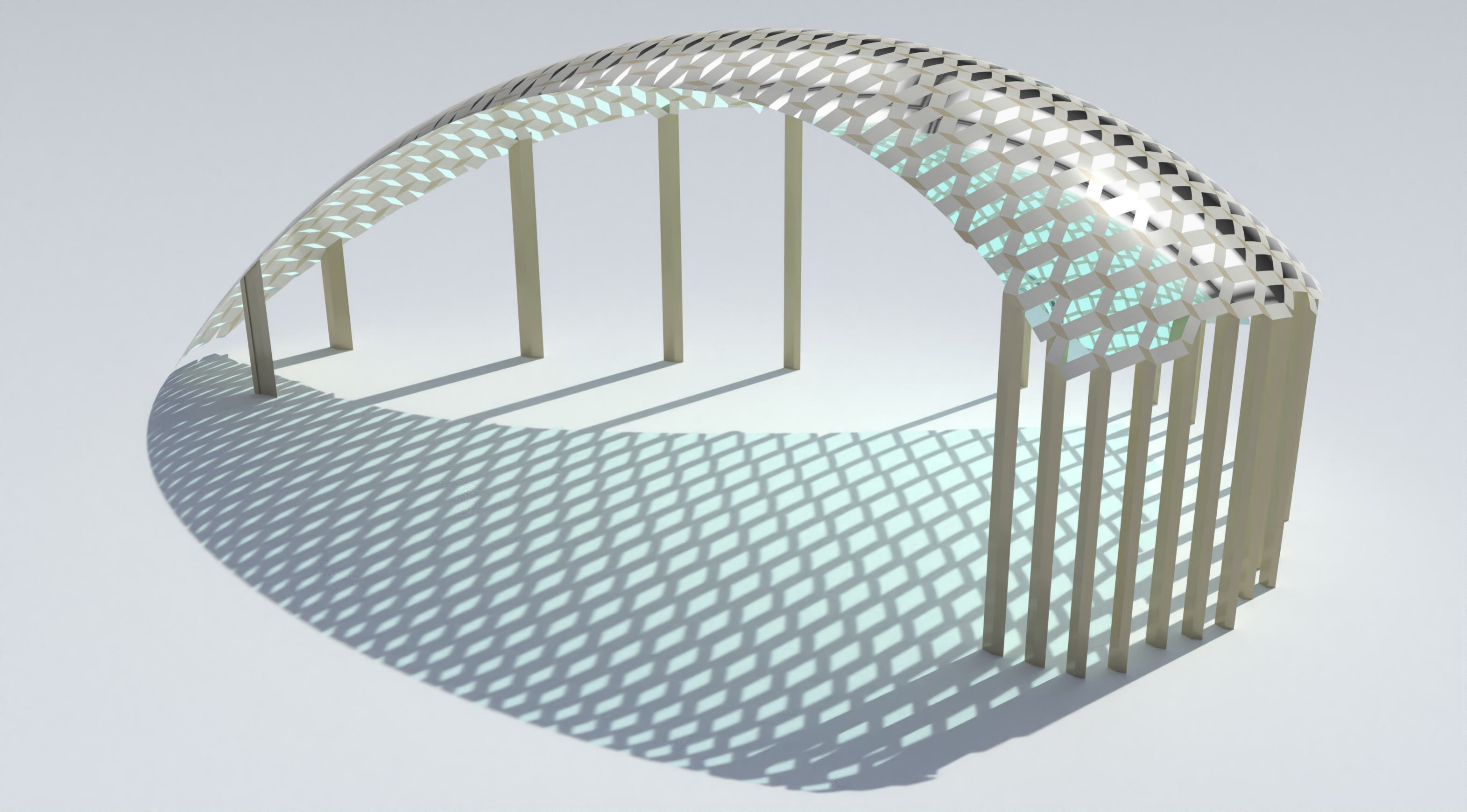}
    \caption{Roof concept design based on a portion of an L-net (shown in Fig.~\ref{fig:spiral_defM}), illustrating the watertight surface generated from the corresponding L-net structure.}
    \label{fig:roof}
\end{figure}

\subsection{Contributions and overview}
%--------------------------------------

The goal of the present research is the approximation of a given
surface with positive Gaussian curvature by L-nets. 

L-nets consist of smoothly joined planar, conical and spherical
panels.
For good approximation quality and other applications, we want to \edit{control the spheres through their radii}. Hence, our L-nets approximate a surface $f$ that appears as an envelope of a two-parameter family of spheres. 

After an introduction to basic concepts
of Laguerre geometry and meshes in this sphere geometry (Sec.~\ref{sec:basics}),
we introduce L-nets, in the discrete and smooth version in Sec.~\ref{sec:L-net} and~\ref{sec:limit}, respectively.
The control over the \edit{spheres through their radii}
amounts to the new concept of L-conjugate
parameterizations of a surface with respect to an attached sphere congruence.
We first derive the characterizing geometric properties and analytical representations of
these parameterizations in the 4-dimensional Minkowski
space (Sec.~\ref{ssec:L-conj-cyclo}), and then turn to the \edit{%Euclidean model of 
Laguerre geometry} 
in Euclidean 3-space (Sec.~\ref{ssec:L-conj}). Sec.~\ref{sec:curvature} relates
L-conjugacy to a dual viewpoint of surface theory that is based on dual
curvature radii. The number of self-L-conjugate directions provides 
a classification of contact elements of surfaces (Sec.~\ref{ssec:dualcurv}). 
In Sec.~\ref{ssec:L-asymp}, we discuss L-asymptotic parameterizations and their relation to a Laguerre geometric formulation of principal symmetric parameterizations \citep{s-nets-2020}. L-conjugacy with respect to important particular attached sphere congruences is studied in Sec.~\ref{ssec:special}. 
In Sec.~\ref{sec:approx}, L-conjugacy serves for initialization of a numerical optimization algorithm
which computes an L-net that approximates a given surface with control
over its \edit{spheres}. Various examples demonstrate the effectiveness
of our approach and illustrate the design flexibility and the quality of the resulting L-nets.
We conclude with \edit{computed examples demonstrating our algorithm} and pointers to future research in Sec.~\ref{sec:results}. 
 
\subsection{Related work}
%-------------------------

In view of the recent study of meshes with spherical
faces  within
M\"obius geometry \citep{spheremesh-2023,sphereapprox-2024},
%and its extension to Laguerre geometry \citep{Anthony-Laguerre1}, 
we keep the overview of prior research  short and refer for more details to the extensive literature reviews in these papers.

Our motivation for the research on spherical panel meshes is rooted in 
the paneling of architectural freeform skins \citep{eigensatz+2010, pottmann-2015-ag},
especially in solutions based on quad meshes with planar
faces (Q-nets). Since they are discrete counterparts to conjugate parameterizations of
surfaces, 
the major approach for approximating with Q-nets at first \edit{determines} a conjugate direction field on the given reference surface. Quad remeshing along these directions and post-optimization yield the final approximating Q-net (see e.g. \citep{liu+2006,liu:conjugate,zadravec-2010-vf}). There is a \edit{large space} of possible conjugate direction fields. A very recent contribution employs \edit{a data-driven approach based on neural networks to generate conjugate direction fields guided by user-specified strokes on the surface, avoiding expensive nonlinear optimization \citep{tao2025learningconjugatedirectionfields}.} 

In geometry processing, sphere meshes or approximations
of surfaces with spherical faces occur, for example, in \citep{sphere-panels,TGB:2013:SM,thiery16,tagliasacchi16}. 

Sphere geometries, especially from a differential geometric perspective, are
treated in the monographs by %Blaschke 
\cite{blaschke-1929} and %Hertrich-Jeromin 
\cite{hertrich-jeromin_2003}\edit{; a more recent exposition focusing on non-Euclidean Laguerre geometry is presented in \citep{alex2020noneuclidean}}. 

%Contributions to discrete differential sphere geometry include
%work on the Willmore energy for triangle meshes \citep{discrete-willmore},
% contributions to discrete principal nets \citep{bobenko-suris-isothermic,bobenko-2009-ddg,bobenko-suris-2007,liu+2006}, their extension to smooth cyclidic nets \citep{bobenko+2012} and principal symmetric nets \citep{s-nets-2020}. 
%Both principal and principal symmetric nets belong to Lie
% sphere geometry and \edit{have been studied} mainly from the perspectives of M\"obius geometry and/or Laguerre geometry.
\edit{Contributions to discrete differential sphere geometry include work on the Willmore energy for triangle meshes \citep{discrete-willmore}, discrete principal nets \citep{bobenko-suris-isothermic,bobenko-2009-ddg,bobenko-suris-2007,liu+2006}, their extension to smooth cyclidic nets \citep{bobenko+2012} and principal symmetric nets \citep{s-nets-2020}, as well as more recent discretizations of principal parametrizations such as checkerboard patterns \citep{peng2019checkerboard,dellinger2024discrete} and principal binets \citep{affolter2026principal}, formulated within Lie, M\"obius, and Laguerre geometry.}

Applications of Laguerre geometry in CAGD have so far mainly dealt with
% algebraic geometric 
problems \edit{in algebraic geometry}, such as rational offset families of curves or surfaces
 \citep{peternell-1998-lgaro,pottmann1998}. 
 We will also use the cyclographic model of Laguerre geometry %, which plays
in 4-dimensional Minkowski space. In this context, we refer to remarkable recent contributions on special quad meshes in Minkowski 3-space and their close relation
to certain
 models in statistical mechanics \citep{affolter+2024+ii,affolter+2025}. 
The isotropic model of Laguerre geometry is yet another classical model with nice applications, e.g., in CNC machining  \citep{pottmann-2009-lms,CNC-skopenkov-2020,skopenkov-2012-ruledlag}.

\begin{table}[htb]
    \centering
    \edit{
    \begin{tabular}{llll}
        \toprule
        \multicolumn{2}{c}{\textbf{Basic objects of Laguerre geometry}} 
        &
        \multicolumn{2}{c}{\textbf{Discrete and smooth L-conjugate nets}}
        \\[2pt] 
        \multicolumn{1}{c}{\textbf{Euclidean model (\ref{ssec:Lag-standard})}} & \multicolumn{1}{c}{\textbf{Cyclographic model (\ref{ssec:Lag-cyclo})}} &
        \multicolumn{1}{c}{\textbf{Euclidean model (\ref{sec:L-net},\ref{ssec:L-conj})}} & \multicolumn{1}{c}{\textbf{Cyclographic model (\ref{sec:L-net},\ref{ssec:L-conj-cyclo})}}
        %\\
        %\multicolumn{1}{c}{\textbf{Euclidean model (Sec.~\ref{ssec:Lag-standard})}} & \multicolumn{1}{c}{\textbf{Cyclographic model (Sec.~\ref{ssec:Lag-cyclo})}} &
        %\multicolumn{1}{c}{\textbf{Euclidean model (Secs.~\ref{sec:L-net},\ref{ssec:L-conj})}} & \multicolumn{1}{c}{\textbf{Cyclographic model (Secs.~\ref{sec:L-net},\ref{ssec:L-conj-cyclo})}}
        \\ 
        \midrule
        Oriented sphere & Point in $\mathbb{R}^{3,1}$ & Vertex sphere & Point-vertex in $\mathbb{R}^{3,1}$ \\
        Oriented cone   & Space-like line & Edge cone & Space-like edge \\
        Oriented plane  & %Isotropic 
        $\gamma$-hyperplane & Face plane & %Isotropic 
        $\gamma$-hyperplane attached to 
        face\\  
        Oriented contact %with 
        w.\,plane & %$\gamma$-hyperplane % THIS DOES NOT SOUND GOOD TO ME --- WE CAN DISCUSS IT IN PERSON 
        Incidence %with 
        w.\,$\gamma$-hyperplane
        %Inclusion into $\gamma$-hyperplane
        & Oriented contact cone-plane %along ruling
        %circle & Vertex-edge incidence \\ 
        %& edge in $\gamma$-hyperplane\\ 
        & Incidence edge--$\gamma$-hyperplane\\ 
        %& Inclusion edge$\,\subset\gamma$-hyperplane\\ 
        Laguerre transformation & Minkowski similarity & %Attached 
        Sphere congruence & Net in $\mathbb{R}^{3,1}$\\
        Offsetting operation & $x_4$-parallel translation & Net 
        %Tangents % THIS DOES NOT SOUND GOOD TO ME --- WE CAN DISCUSS IT IN PERSON
        in $\mathbb{R}^{3}$ & %Attached 
        $\gamma$-hyperplane congruence \\
        %Point (zero-radius sphere) & Point %in the hyperplane 
        %                                   with $x_4=0$ & & \\
        \bottomrule
    \end{tabular}    
    \caption{The informal correspondence between the Euclidean and the cyclographic model of Laguerre geometry. %In the case of discrete and smooth 
    %For L-conjugate nets, %the roles of the vertices and faces, as well as 
    %the roles of the net and the attached sphere/hyperplane congruence are naturally %inevitably 
    %swapped: 
    %\HP{what does this mean? I would remove this sentence} 
    Note that a net in $\mathbb{R}^{3,1}$ corresponds to a sphere congruence in $\mathbb{R}^3$ and a net in $\mathbb{R}^3$ corresponds to a $\gamma$-hyperplane congruence in $\mathbb{R}^{3,1}$. 
    For details, we refer to section numbers in parentheses.}
    \label{tab:correspondence}
    }
\end{table}

%%%%%%%%%%%%%%%%%%%%%%%%%%%%%%%%%%%%%%%%%%%%%%%%%%%%%%%%%%%%%%%%%%%%%%%%
\section{A few basics of Laguerre geometry and L-nets} \label{sec:basics}
%%%%%%%%%%%%%%%%%%%%%%%%%%%%%%%%%%%%%%%%%%%%%%%%%%%%%%%%%%%%%%%%%%%%%%%%

In this section, we provide a very short outline of Laguerre sphere
geometry and quad meshes with \edit{square grid} combinatorics, called L-nets, in this geometry. We do that in \edit{3-dimensional} Euclidean space (\edit{informally referred to as \emph{Euclidean model} of Laguerre geometry}; Sec.~\ref{ssec:Lag-standard}) and the so-called cyclographic model in 4-dimensional
Minkowski space (Sec.~\ref{ssec:Lag-cyclo}). \edit{We connect the models in Tab.~\ref{tab:correspondence} serving as a road-map of Secs.~\ref{sec:basics}--\ref{sec:limit}.} In the latter model, 
L-nets are probably more easily understood than in the \edit{Euclidean}
model. Most importantly, the cyclographic model is
particularly useful for the discussion of smooth analogs of L-nets
and the associated concept of L-conjugacy (see Sec.~\ref{sec:limit}). 

For detailed discussions of Laguerre geometry and its
various models, we refer to \citep{blaschke-1929,cecil2008lie,pottmann1998}. 
%More general types of meshes in Laguerre geometry and their
%representation in all kinds of models are
%studied in \citep{Anthony-Laguerre1}. 

\begin{figure}
    \centering
    \includegraphics[width=0.85\linewidth]{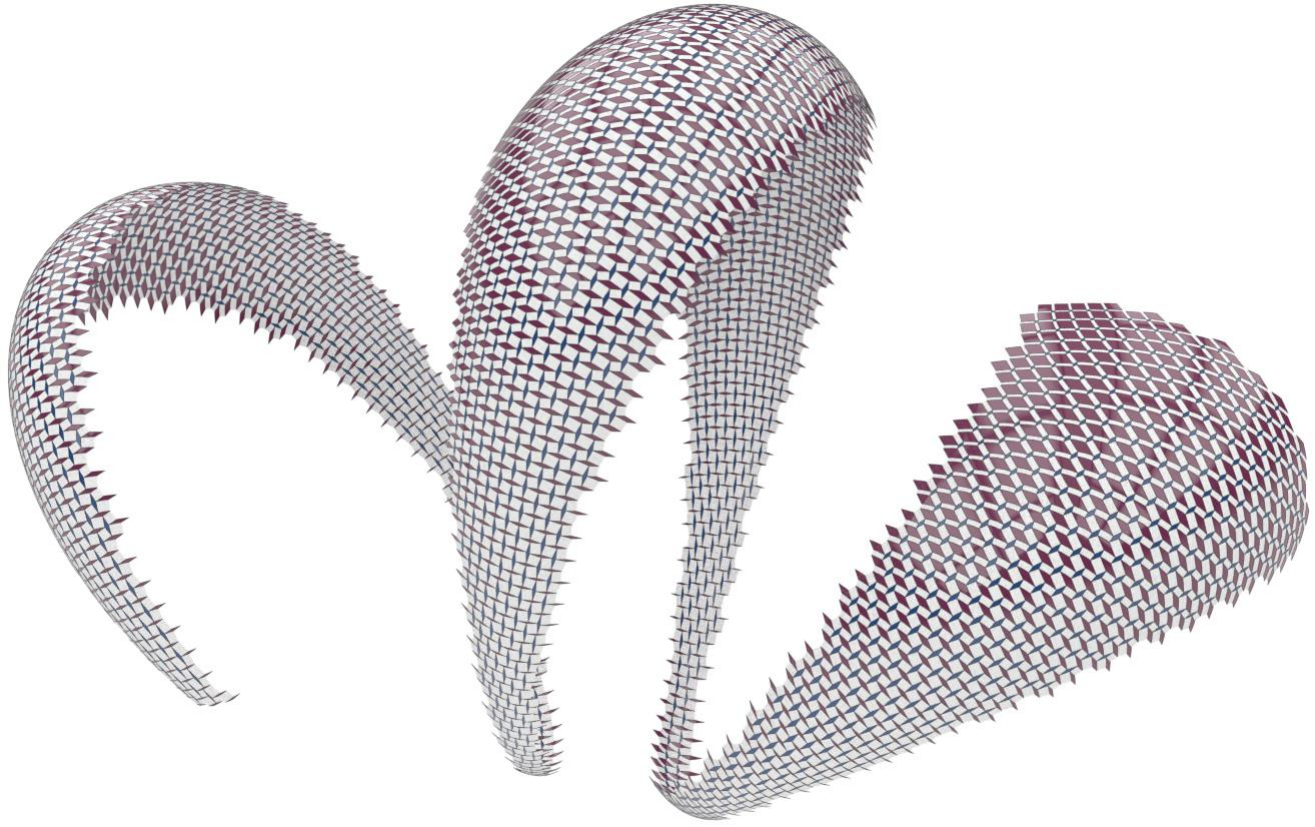}
    \caption{L-net approximation of a spiral-like surface. Smaller spherical panels appear in the narrow regions of the surface, whereas larger spherical panels  arise in the wider regions.}
    \label{fig:spiral_defM}
\end{figure}

\subsection{Laguerre Geometry in \edit{3-dimensional} Euclidean space} \label{ssec:Lag-standard}
%----------------------------------------------------------

First, we outline the main concepts informally, and then give \edit{rigorous} definitions.
The basic elements of Laguerre sphere geometry in Euclidean space $\R^3$ are
\emph{oriented spheres} %(or-spheres)  
and \emph{oriented planes}. % (or-planes). 
\emph{Orientation} is understood as the choice of 
a continuous family of unit normal vectors at all points of a surface. %normal direction for planes and spheres. 
\edit{\emph{Oriented contact} between elements means tangency and agreement in orientation.} 
Oriented spheres are allowed to have zero radius, i.e. degenerate \edit{to} points;
\edit{in this case,} %For elements degenerating to points, 
oriented contact is defined separately. 
Laguerre geometry studies properties of its elements that are invariant under \emph{Laguerre transformations} (see Fig.~\ref{fig:lagtrafo}). These
are pairs of bijective transformations of the sets of oriented spheres and oriented planes, 
respectively,
preserving oriented contact. %i.e., elements that are tangent and agree in orientation. 
%For elements degenerating to points, \edit{oriented contact} is defined separately. 
\edit{For example, an \emph{offsetting transformation}} takes an oriented plane or an oriented sphere to its offset at a fixed distance in the direction %opposite to
of the chosen normal. Laguerre transformations do not necessarily preserve points, as those are viewed as spheres of zero radius and can be mapped to other spheres. Likewise, lines can be mapped to \emph{oriented cones}. We may view an oriented cone as the envelope of oriented tangent planes that are in oriented contact with two distinct oriented spheres. % (if such planes exist). 
Oriented cones %are allowed to 
may degenerate \edit{to} %otational 
cylinders or lines and are preserved under Laguerre transformations.

\begin{figure}
    \centering
    \includegraphics[width=0.8\linewidth]{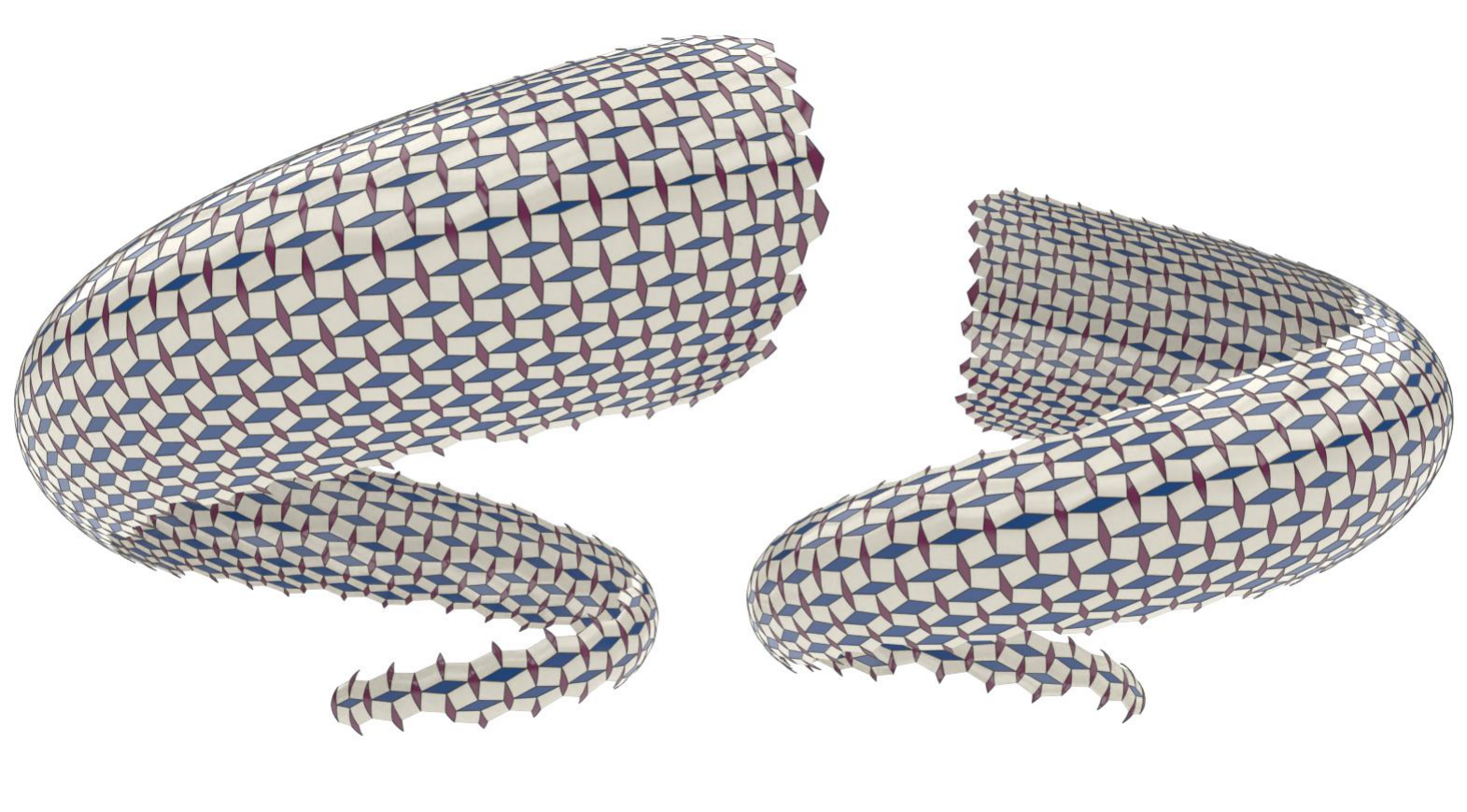}
    \caption{%Smooth quad-
    L-net approximation of a seashell-like surface.}
    \label{fig:seashellex5}
\end{figure}

Let us discuss these notions systematically and provide precise definitions.

\paragraph{Oriented spheres}  
An \emph{oriented sphere}, shortly \emph{or-sphere}, is %an oriented 
either a Euclidean sphere in $\R^3$ equipped with a continuous family of unit normal vectors at all points, or a point in $\R^3$. 
%An or-sphere is \emph{positively}
%the normals point inwards of the sphere, i.e. towards its center
%The \emph{signed radius} of the or-sphere is the radius taken with a sign depending on the orie
An or-sphere is given
by its center $c \in \R^3$ and \emph{signed radius} $r \in \R$, which is
defined as follows. %For the orientation of the Euclidean sphere, where 
If the normals point \edit{inward}, i.e., toward its center, the \emph{signed radius} is \edit{the} radius (with a positive sign).
For outward-pointing normals, the \emph{signed radius} is \edit{minus the} radius (i.e., the radius with a negative sign). For a point, the \emph{signed radius} is zero.
%which by definition is the radius of the sphere with a positive [negative] sign depending on if the normals 
%where $r >0$ expresses an orientation where the normals point inwards of the sphere, i.e. towards its center, $r=0$ characterizes a point, and $r<0$ belongs to outwards pointing normals. \MS{A more accurate definition is needed. MS is thinking of it.}

\paragraph{Oriented planes}
An \emph{oriented plane}  (\emph{or-plane}) $p$ is a plane in $\R^3$ equipped with a continuous family of unit normal vectors at all points (which \edit{is therefore} the same vector $n$ at all points). \edit{An or-plane $p$ is given by its unit normal vector $n$ and the \emph{signed distance} $h$ to the origin, which is defined as follows. 
If the plane contains the origin, the \emph{signed distance} is zero.
Otherwise, if the normal $n$ points toward the origin, the \emph{signed distance} is the distance (with a positive sign). If the normal points in the opposite direction, the \emph{signed distance} is minus the distance. % (i.e., the distance with a negative sign). For a point, 

The plane $p$ is thus the set of all $x \in \R^3$ satisfying the equation}
%% whose orientation is given by a unit normal vector $n$. 
%We describe $p$ by its Hesse normal form 
%
\be \la n,x \ra + h=0. \label{eq:or-plane}  \ee
%
%%where $n$
%%is the unit normal vector representing its orientation. 
%Here, the left side is the \emph{signed distance} from a point $x$ to $p$. In particular, the intercept $h$ is the signed distance from the or-plane $p$ to the origin.
%\edit{We describe $p$ by its unit normal vector $n$ and the signed distance $h$ to the origin.
%The \emph{signed distance} from a point $x \in \R^3$ to $p$ is $\langle n, x \rangle + h$; in particular, $h$ is the signed distance from the origin to $p$. The plane $p$ is thus the set $\{ x \in \R^3 \mid \langle n, x \rangle + h = 0 \}$.}

\paragraph{Oriented contact} \emph{Oriented contact} (\emph{or-contact}) of oriented spheres/planes
means tangency in the usual sense and agreement of unit normals
at a contact point. A point is in \emph{oriented contact} with an oriented sphere/plane if it lies on the sphere/plane.
%Thus, an oriented sphere $s=(c,r)$ with  center $c \in \R^3$ and signed radius $r$ is in oriented contact with the or-plane $p$ if and only if $c$ has \emph{signed distance} $r$ from $p$, i.e.
Thus, \edit{the or-plane $p$  is in oriented contact with} an oriented sphere $s=(c,r)$ with 
center $c \in \R^3$ and signed radius $r$ if
and only if \edit{$p$} has \emph{signed distance} $r$ from \edit{$c$}, i.e.
\be \la n,c \ra + h=r. \label{eq:contact} \ee
\paragraph{Laguerre transformations}
\emph{Laguerre transformations} are pairs of bijective transformations of the sets of \edit{or-}spheres and \edit{or-}planes, respectively, preserving oriented 
contact. \edit{A simple example is} the \emph{$d$-offsetting operation} which adds the same constant $d$ to the signed radius $r$ of each or-sphere and to the signed distance~$h$ from each or-plane to the origin, while preserving the centers $c$ of or-spheres and the normal vectors $n$ of or-planes. By~\eqref{eq:contact}, \edit{this} transformation preserves the \edit{or-}contact between or-spheres and or-planes. \edit{Euclidean isometries of $\mathbb{R}^3$ are also Laguerre transformations.}
%\edit{We refer to Section~\ref{ssec:Lag-cyclo} for a description of more non-trivial Laguerre transformations in the cyclographic model.} 

\paragraph{Oriented cones} A \emph{rotational cone} is the union of lines (\emph{rulings}) passing through a fixed point (\emph{vertex}) and forming a fixed angle (nonzero and non-right) with a fixed line in~$\mathbb{R}^3$ passing through the fixed point. An \emph{oriented rotational cone} is a rotational cone equipped with a continuous family of unit normal vectors at all points (excluding the vertex) such that the normal vectors along each ruling are the same. (The latter requirement is needed because the cone without the vertex splits into two components, which a priori can be oriented independently.) An \emph{oriented rotational cylinder} is the union of lines at a fixed nonzero distance from a fixed line, equipped with a continuous family of unit normal vectors at all points. A \emph{generalized oriented cone} (%or simply
\emph{or-cone}) is an oriented rotational cone or oriented rotational cylinder or (non-oriented) straight line in~$\mathbb{R}^3$.

\emph{Oriented contact} %(\emph{or-contact}) 
between an or-plane and a two-dimensional or-cone means tangency in the usual sense and agreement of unit normals at a contact point (distinct from the cone vertex). \emph{Oriented contact} between an or-plane and a line means that the line is contained in the or-plane. 
%\MS{An or-cone and an or-sphere are in \emph{oriented contact along a circle} if one of the following conditions holds:
%\begin{itemize}
%    \item both the or-cone and or-sphere are two-dimensional and are in oriented contact at each point of some circle of positive radius;
%    \item the or-cone is two-dimensional, whereas the or-sphere has radius $0$ and coincides with the vertex of the cone;
%    \item the or-cone is a line, and the or-sphere has radius $0$ and lies on it.
%\end{itemize}}

The set of or-planes in oriented contact with an or-cone coincides with the set of or-planes in oriented contact with a fixed pair of spheres $s_i=(c_i,r_i)$, where $i=0,1$. Namely, for an oriented rotational cone, these two spheres have radii $r_0 \ne r_1$, for an oriented rotational cylinder, we have $r_0=r_1 \ne 0$, and for a straight line, we have $r_0=r_1=0$. The or-spheres $s_0$ and $s_1$ must have common
oriented tangent planes, which requires $\|c_0-c_1\|^2 > (r_0-r_1)^2$.

All the or-spheres in the \emph{linear family} $s(t):=(1-t)s_0+ts_1 :=\left((1-t)c_0+tc_1,(1-t)r_0+tr_1\right)$, where $t\in \R$, are also in oriented contact with all those or-planes. Any of the or-spheres $s(t)$ is referred to as an or-sphere \emph{in oriented contact with the or-cone along a circle}. In particular, a point $s(t)$ (the or-sphere with $r(t):=(1-t)r_0+tr_1=0$) is the vertex of the cone or
a point of the straight line to which the or-cone degenerates in case of $r_0=r_1=0$; it is still viewed as having \emph{oriented contact with the or-cone along a circle}.

We use $\OS$ to denote the set of or-spheres, $\OP$ for the set of or-planes, and $\OC$ for the set of or-cones.

\subsection{Cyclographic Model}  \label{ssec:Lag-cyclo}
%-----------------------------------------------------------------
\begin{figure}
    \centering
    \includegraphics[width=0.7\linewidth]{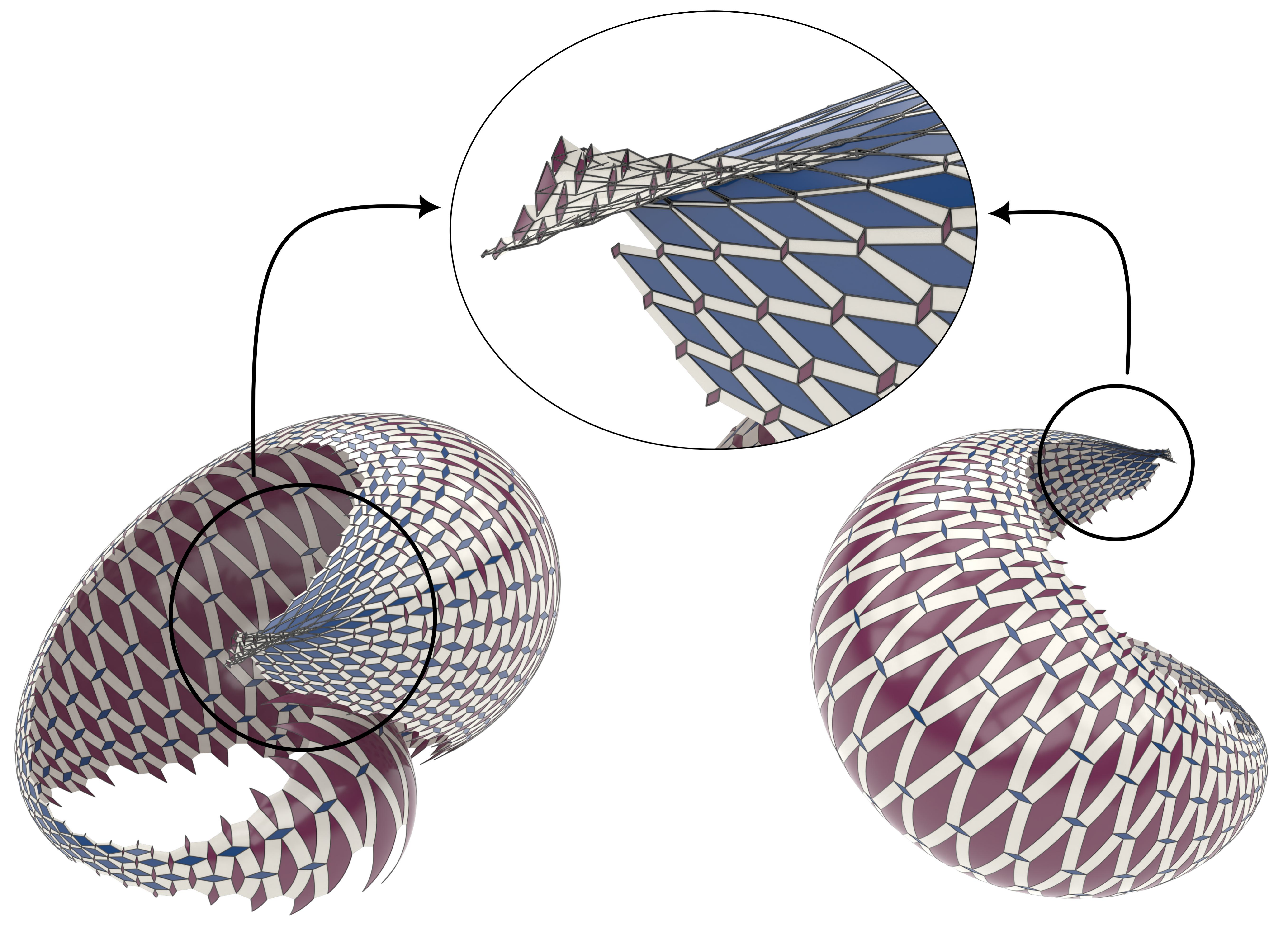}
    \caption{A Laguerre transformation of the L-net in Fig.~\ref{fig:seashellex5}. \edit{While such transformations offer a tool for exploring different designs, singularities may emerge in the process.} The inset shows a close-up of a region where singularities appear.}
    %Laguerre transformations allow us to explore different designs, but singularities can appear in the process.
    %\AR{A Laguerre transformation of the L-net in Fig.~\ref{fig:seashellex5}, illustrating shape exploration through Laguerre transformations. The inset shows a close-up of a region where singularities appear. While Laguerre transformations offer a tool for exploring different designs, singularities may emerge in the process.}}
    \label{fig:lagtrafo}
\end{figure}

The cyclographic model of Laguerre geometry considers an oriented sphere $s = (c, r)$ with 
center $c =(c_1, c_2, c_3)\in \R^3$ and signed radius $r$ as a point $S = (c_1, c_2, c_3, r)$ in Minkowski space $\mathbb{R}^{3,1}$ with the inner product $\lla \cdot , \cdot \rra$ of signature $(+,+,+,-)$. The map $\zeta\colon s \mapsto S$ is called the \emph{Minkowski lift}, and its inverse $\zeta^{-1}$ is %known as 
the \emph{cyclographic map}; \edit{see Tab.~\ref{tab:correspondence}}.

The Minkowski lift of the or-plane $p$ 
is defined as the set of points $X=(x, x_4) \in \mathbb{R}^{3,1}$, taken to or-spheres in oriented contact with $p$ by the cyclographic map. By~\eqref{eq:or-plane} and~\eqref{eq:contact}, this is expressed as
%W
\be x_4= \la n,x \ra + h, \label{eq:g-hyperplane} \ee
and represents a hyperplane $P=:\zeta(p) \subset \R^{3,1}$ forming the Euclidean angle $\gamma=\pi/4$ with the hyperplane $\R^3$ given by $ x_4=0$. A hyperplane of form~\eqref{eq:g-hyperplane} is called a \emph{$\gamma$-hyperplane} or
\emph{isotropic hyperplane}.
We denote by $\Gamma(\R^{3,1})$ the set of all $\gamma$-hyperplanes. Note
that \eqref{eq:g-hyperplane} can be written with the isotropic normal
vector $N=(n,1)$ that is both normal and parallel to the hyperplane,  as 
$$    \lla N, X \rra +h=0. $$

\paragraph{Lines in Minkowski space}
A line in Minkowski space is called \emph{space-like}, \emph{light-like} (\emph{isotropic}), or \emph{time-like}, respectively, if its direction vector $g$ satisfies  $\lla g, g \rra >0, \ \lla  g,  g \rra =0 , \ \lla  g,  g \rra < 0$, respectively. The cyclographic image of a space-like \edit{line is a family of or-spheres in oriented contact with an or-cone along a circle, and the line is viewed as the \emph{Minkowski lift} of that or-cone. The cyclographic image of an isotropic line is a family of or-spheres tangent to each other at a single point and sharing the same tangent plane there (thus forming a \emph{contact element}, but we do not need this terminology). A time-like line corresponds to a family of or-spheres whose centers lie on a fixed line orthogonal to a plane, while all spheres intersect that plane at a constant angle.}

\edit{
\paragraph{Planes in Minkowski space}
A 2-dimensional plane in $\R^ {3,1}$ is called \emph{space-like}, \emph{light-like (isotropic)}, or \emph{time-like} if it contains zero, one, or two isotropic lines
%, one %bundle of parallel real 
%isotropic line 
through each point, %of $\Pi$, %or if there is a real quadratic cone of isotropic lines %of $H$ 
%through each point of $\Pi$, 
respectively.}

\paragraph{Laguerre transformations} In $\R^{3,1}$, Laguerre transformations appear
as those bijective affine maps which map isotropic lines to isotropic lines. This implies that isotropic planes, characterized by containing a single real family of parallel
isotropic lines, are mapped to isotropic planes. Hence, in $\R^3$, we have two bijective
maps of the sets of or-spheres and or-planes, respectively, which preserve oriented contact (Fig.~\ref{fig:lagtrafo}). \edit{These transformations correspond to Minkowski similarities of $\mathbb{R}^{3,1}$, generated by the orthogonal group $O(3,1)$, translations, and scalings.} For instance, the $d$-offsetting operation appears as the translation $x_4\mapsto x_4+d$ in $\R^{3,1}$.

\subsection{L-nets}
\label{sec:L-net}
%---------------------------

We are now ready to define the main objects of interest, namely
quad meshes with \edit{square grid} combinatorics in Laguerre geometry. We call
them L-nets and define them as follows (see Fig.~\ref{fig:defL-net} to the left).

\begin{defn}[L-net] %\MS{MS will simplify it for a square grid}
An \textbf{$m \times n$ square grid} is the plane graph with the vertices at integer points $(i,j)$, where $0 \le i \le n, \
0  \le j \le m$, and edges joining the vertices at unit distance. Denote by  $V$ the set of vertices, by $E$ the set of edges, and by $F$ the set of bounded faces of the square grid.
An \textbf{L-net} \edit{in $\mathbb{R}^3$} is a triple of maps
\[
    \edit{\SV : V \to \OS}, \ \
    \CE : E \to \OC, \ \
    \edit{\FP : F \to \OP} ,
\]
satisfying the following conditions:
\begin{itemize}
    \item \edit{if an edge $e \in E$ contains a vertex $v \in V$, then the or-cone  $\CE(e)$ is in oriented contact with the or-sphere $\SV(v)$ along a circle};
    \item \edit{if an edge $e \in E$ meets a face $f \in F$, then the or-cone $\CE(e)$ is in oriented contact with the or-plane $\FP(f)$ along a ruling}. %\MS{We need to define what is ``oriented contact along a circle/a ruling''. This is non-trivial in the case when this circle degenerates to a point or the or-cone degenerates to a line.}
\end{itemize}
%A map $\VP: \Z^2 \to \OP$ is called an L-net if at each $(i,j) \in \Z^2$ the four oriented planes $p_{ij}:=\VP(i,j), p_{i+1,j},p_{i+1,j+1}$ and $p_{i,j+1}$ are in oriented contact with an or-sphere $s_{i,j}$.
%\MS{The latter condition is generically automatic, so that the whole definition looks weird. We'd better revert to the one from the previous paper.}
\end{defn}

\begin{wrapfigure}[9]{l}{0.30\linewidth}
\vspace{-0.3cm}
\hspace{0.3cm}

\begin{overpic}[width=0.9\linewidth]{WrapFigure_1.png}
\put(45,50){\contour{white}{$p_{ij}$}}
\put(90,30){\contour{white}{$s_{i+1,j}$}}
\put(1,55){\contour{white}{$s_{i,j+1}$}}
\put(32,10){\contour{white}{$s_{ij}$}}
\put(58,75){\contour{white}{$s_{i+1,j+1}$}}
\end{overpic}
\end{wrapfigure}

Our notation indicates that we view the \edit{or-spheres as vertices}%. %Likewise
, the \edit{or-planes as faces} %$s_{ij}$ are seen
, and %form the set $\SF$. As edges we consider
the or-cones %$c_{i+,j}$
as edges.
%that are in or-contact with planes $p_{i,j},p_{i+1,j}$ and or-spheres $s_{ij},s_{i,j-1}$; analogously we define the edge cones $c_{i,j+}$ in the $j$-parameter direction. All edge cones form the set $\CE$. To point out the presence of vertex planes, edge cones and face spheres, we will write an L-net also as a triple $(\VP,\CE,\SF)$. 
Quadrilateral patches of or-planes, or-cones, and or-spheres of an L-net define \edit{a} surface $\mathcal{L}$ as shown in the inset and Fig.~\ref{fig:QL}. We assume that the figure is self-explanatory
and do not further detail the boundaries of these patches. We see
that $\mathcal{L}$ can be a $C^1$ surface. % (unless some of or-cones or or-spheres degenerate). 

\edit{
%\begin{remark}
    The definition extends %naturally 
    to more general combinatorics, because the oriented contact conditions are imposed locally. For simplicity, % of exposition, 
    we restrict our study to square grid combinatorics; the general case is the subject of ongoing work.
%\end{remark}

}

%\HP{Maybe we will not need the following symbol.}
%We use $\sim$ to denote the or-contact relation between elements of an L-net. The notation $p\sim s$ means or-plane $p$ is in oriented contact at a single point with or-sphere $s$. The notation $p \sim c$ means or-plane $p$ is in oriented contact with or-cone $c$ along a ruling, and $s \sim c$ means or-sphere $s$ is in oriented contact with or-cone $c$ along a circle.

%\MS{MS will edit the following text to avoid the notation $p_{ij}$ and $s_{ij}$.}
For each vertex $(i,j)\in V$, denote \edit{$s_{ij}:=\SV(i,j)$.} For a face $f$ with the vertices $(i,j)$, $(i+1,j)$, $(i+1,j+1)$, $(i,j+1)$ \edit{(see inset figure)}, denote \edit{$p_{ij}:=\FP(f)$.} %By a slight abuse of notation, we
Sometimes, we also denote by \edit{$\SV$ and $\FP$} the images of the maps \edit{$\SV$ and~$\FP$.} % as well.

\begin{figure}
\centering
    \begin{overpic}[width=0.8\linewidth]{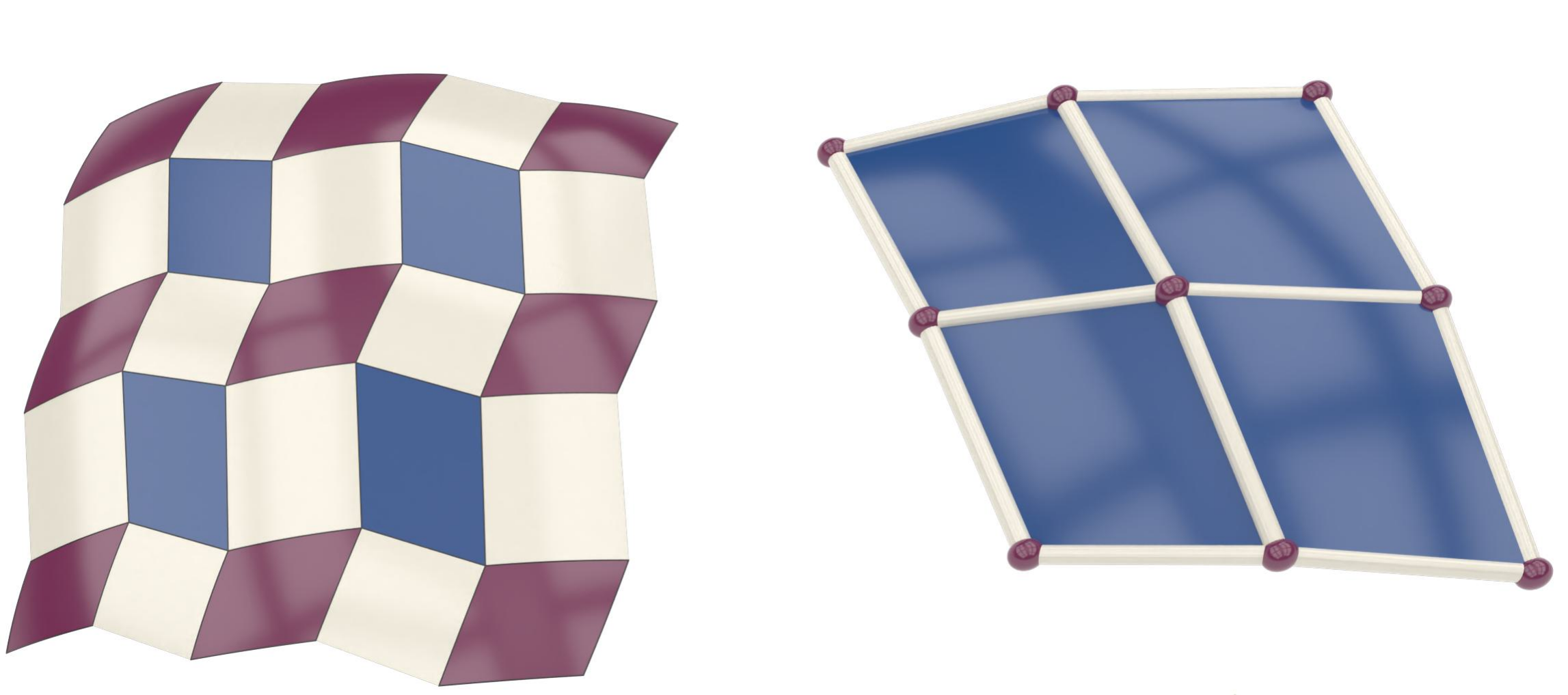}
    \put(2.5,6){\color{white} $s_{ij}$}
    \put(14.6, 6.5){\color{white} $s_{i+1,\  j}$}
    \put(31.5,5){\color{white} $s_{i+2,j}$}
    \put(3.7, 22.5){\color{white} $s_{i,j+1}$}
    \put(16.5,23){\color{white} $s_{i+1,j+1}$}
    \put(32.7,22.5){\color{white} $s_{i+2,j+1}$}
    \put(6.5,35){\color{white} $s_{i,j+2}$}
    \put(19,36.5){\color{white} $s_{i+1,j+2}$}
    \put(35.5,35.5){\color{white} $s_{i+2,j+2}$}
    \put(10,15){\color{white}$p_{ij}$}
    \put(25,15){\color{white}$p_{_{i+1,j}}$}
    \put(12,31){\color{white}$p_{_{i,j+1}}$}
    \put(27,31){\color{white}$p_{_{i+1,j+1}}$}
    \put(64,5){\contour{white}{$S_{ij}$}}
    \put(72,28){\contour{white}{$S_{i+1,j+1}$}}
    \put(80,6){\contour{white}{$S_{i+1,j}$}}
    \put(95,5){\contour{white}{$S_{i+2,j}$}}
    \put(55,27){\contour{white}{$S_{i,j+1}$}}
    \put(90,27){\contour{white}{$S_{i+2,j+1}$}}
    \put(52.5,37){\contour{white}{$S_{i,j+2}$}}
    \put(67.5,40){\contour{white}{$S_{i+1,j+2}$}}
    \put(84,40){\contour{white}{$S_{i+2,j+2}$}}
    \put(70,17){\color{white}$P_{ij}$}
    \put(83,17){\color{white}$P_{i+1,j}$}
    \put(61,32){\color{white}$P_{i,j+1}$}
    \put(75,33){\color{white}$P_{i+1,j+1}$}
    \end{overpic}
\caption{L-net elements. (Left) \edit{Euclidean model of Laguerre geometry}; (right)  cyclographic model.
}
\label{fig:defL-net}
\end{figure}
 %In the cyclographic model, $\sim$ is understood as the incidence relation between mesh elements.

\paragraph{Examples of L-nets}  \edit{If all spheres $s_{ij}$ of an L-net are points,
we get a Q-net (quad mesh with planar faces).} Q-nets are discrete
conjugate surface parameterizations and \edit{have played a fundamental role} in the development of
discrete differential geometry \citep{bobenko-2009-ddg}. To obtain a less degenerate
appearance of the L-net, we may apply a Laguerre transformation, for example, a $d$-offsetting
operation. %It adds a constant to the signed radius of each sphere and thus appears as translation $x_4 \mapsto x_4+d$ in $\R^{3,1}$. 
Now all \edit{spheres} have radius $d$ and edge cones
are cylinders of radius $d$. That L-net represents a smooth surface
in ``positively curved'' areas, but has singularities in areas of ``negative curvature''; see  Fig.~\ref{fig:L-examples}.

\begin{figure}
    \includegraphics[width=0.95\linewidth]{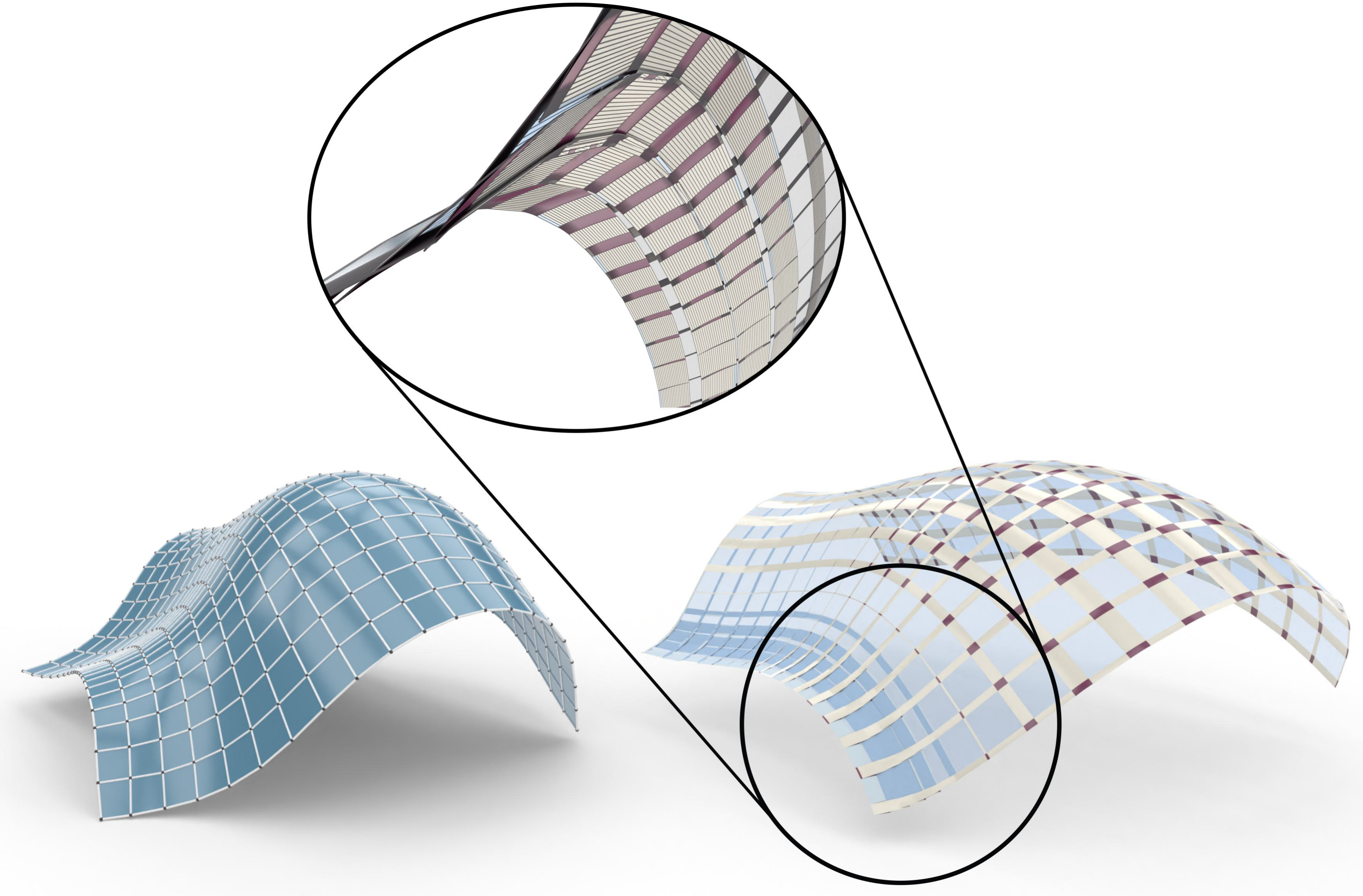}
\caption{A %conical 
Q-net (left) extended to an L-net (right) by offsetting. Spheres are shown in red and planar faces as translucent blue panels to reveal the singular behavior. The inset highlights \edit{the} bottom view of the swallowtail singularities arising in regions of ``negative curvature''.
}
\label{fig:L-examples}
\end{figure}

A special case of the previous example is an L-net in which the four or-planes in or-contact with an or-sphere $s_{ij}$ are in or-contact with another or-sphere $\bar{s}_{ij}$, and thus are in or-contact with an or-cone $c_{ij}$ that is in or-contact with $s_{ij}$ and $\bar{s}_{ij}$ along circles. If spheres $s_{ij}$ are points, such an L-net is a \emph{conical net}, known as a Laguerre geometric
object and a discrete principal curvature parameterization \citep{bobenko-2009-ddg,liu+2006}.
It gives rise to infinitely many less degenerate L-nets, since we can replace the point $s_{ij}$
with any sufficiently small or-sphere $s'_{ij}$ in oriented contact with $c_{ij}$ along a circle, while keeping all the or-planes $p_{ij}$ and adjusting the edges of the L-net accordingly.
The freedom in choosing the sphere radius provides
a simple solution to our approximation problem. We may approximate the
given surface $f$ by a conical net, \edit{obtained by computing principal curvature directions on $f$, extracting a quad mesh aligned with these directions, and optimizing it to satisfy the conical net constraints \citep{liu+2006},} and then modify it to an L-net by replacing the vertices with spheres that are close to the prescribed spheres $s_f$ tangent to $f$. %(Fig.~\ref{fig:L-examples}).

This solution is very restricted. We will see that there is a much larger variety of solutions that may be preferred in a given application. 

\paragraph{\edit{L-nets in the cyclographic model}} 
In the cyclographic model, an L-net is a quad mesh with
\edit{square grid} combinatorics in $\mathbb{R}^{3,1}$ %, \edit{matching the combinatorics of} to the one in the \edit{Euclidean model of Laguerre geometry} 
(see Fig.~\ref{fig:defL-net} to the right). Its vertices $S_{ij}$ are the Minkowski lifts of the or-spheres $s_{ij}$. 
Its edges are segments of space-like lines and represent the edge cones.
Its faces $P_{ij}$
are not necessarily planar quadrilaterals $S_{i-1,j-1}S_{i,j-1}S_{ij}S_{i-1,j}$, lying 
in the isotropic hyperplanes corresponding to the planes $p_{ij}$. % (with $0<i<m$ and $0<j<n$). 
\edit{This leads to the following definition; the Minkowski lift then induces a one-to-one correspondence between L-nets in $\mathbb{R}^3$ and $\mathbb{R}^{3,1}$.}
%\AR{This description gives a one-to-one correspondence between L-nets in $\mathbb{R}^3$ and quad meshes with square grid combinatorics in $\mathbb{R}^{3,1}$ whose faces lie in isotropic hyperplanes, via the Minkowski lift.}

\edit{
\begin{defn}[L-net] Let $\mathrm{Graff}_1^+(\R^ {3,1})$ be the set of space-like lines in $\R^ {3,1}$. %\edit{Let $V$, $E$, and $F$ be the sets of vertices, edges, and bounded faces of the $m \times n$ square grid, respectively.}
An \textbf{L-net} in $\mathbb{R}^{3,1}$ is a triple of maps
\[
    \edit{S : V \to \mathbb{R}^{3,1}}, \ \
    \edit{C : E \to \mathrm{Graff}_1^+(\R^ {3,1})}, \ \
    \edit{P : F \to \Gamma(\R^ {3,1})},
\]
preserving incidences, i.e., \edit{for any $v \in V$, $e \in E$, and $f \in F$, the inclusion $v\subset e$ implies $S(v)\subset C (e)$ and $e\subset f$ implies $C(e)\subset P (f)$}.
\end{defn}

}

%Clearly, the Minkowski lift induces a one-to-one correspondence between L-nets in $\mathbb{R}^3$ and $\mathbb{R}^{3,1}$.}

\paragraph{L-net as discrete surface and attached sphere congruence} 
We conclude this section by sketching \edit{the} informal meaning of the notion of an L-net.
Namely, an L-net can be viewed %shows that it is 
as a discrete version of
a 2-dimensional set of or-planes that may be seen as oriented tangent
planes of a surface $f(u,v)$ in $\R^3$. Moreover, it is also a discrete version of a 2-dimensional set of or-spheres $s(u,v)$ that are tangent to the surface $f$.
Such a set of spheres is often called a %\emph{
sphere congruence. %}. 
Hence,
we have a discrete dual (i.e., plane based) representation of a surface $f$ and
a sphere congruence $s$ that is tangent to $f$, in other words, %. We will speak of a 
%\emph{
sphere congruence $s$ attached to the surface $f$. \edit{It is important to note that an L-net is therefore not a discretization of a general sphere congruence or of a general tangent plane congruence, but rather the special case in which the two are coupled through this oriented contact, as developed in the following sections.}

%In other words,}
Moving to the cyclographic model, we have a discrete 2-dimensional surface $S$ and
an attached set of isotropic hyperplanes. \edit{Note that the role of the surface and the attached sphere/hyperplane congruence has now been swapped.} These hyperplanes define one ``discrete
envelope'' of the discrete sphere congruence. The other ``discrete
envelope'' is in general not represented as an L-net. The smooth counterpart
is a surface $S(u,v)$ and an attached set of tangent isotropic hyperplanes 
$P(u,v)$ that define one envelope of the sphere congruence $s(u,v)$ in $\R^3$.
This smooth analog will be discussed in the next two sections in
detail, as it provides the key for the approximation of a given surface
$f$ by an L-net whose or-spheres are close to a given sphere congruence $s(u,v)$ attached
to~$f$. 
 
%%%%%%%%%%%%%%%%%%%%%%%%%%%%%%%%%%%%%%%%%%%%%%%%%%%%%%%%%%%%%%%%%
\section{%Smooth limit of discrete L-nets and L-conjugacy \MS{
Smooth L-nets and L-conjugacy} \label{sec:limit}
%%%%%%%%%%%%%%%%%%%%%%%%%%%%%%%%%%%%%%%%%%%%%%%%%%%%%%%%%%%%%%%%%%%

We are now ready to present the main concept of our paper, namely, \edit{smooth}
%limits of
L-nets. They are L-conjugate
nets on a surface with respect to an attached sphere congruence. Since the
appearance of L-nets in the cyclographic model is simpler and quite close
to the familiar Q-nets, we prefer to start our study there (Section~\ref{ssec:L-conj-cyclo})
and then turn to the \edit{Euclidean model of Laguerre geometry} in Section~\ref{ssec:L-conj}.

\subsection{L-conjugacy in the cyclographic model} \label{ssec:L-conj-cyclo}
%---------------------------------------------------------------------------

Recall that an L-net in $ \R^{3,1}$ is a quad mesh $S$ %with vertices $S_{ij}$ 
whose faces lie in isotropic hyperplanes $P_{ij}$. The smooth analog is a surface $S(u,v)$ with an attached family
of tangent isotropic hyperplanes $P(u,v)$. As analogs of faces in $S$, hyperplanes
$P(u,v)$ contain the partial derivatives $S_u, S_v, S_{uv}$. The situation is similar to Q-nets and their smooth analogs, \edit{where conjugacy requires $S_{uv}$ to lie in the tangent plane spanned by $S_u, S_v$; analogously, L-conjugacy requires $S_{uv}$ to lie in the attached isotropic hyperplane $P(u,v)$.} This motivates the following definition. %\MS{Maybe we switch to a rectangle instead of $\R^2$}
\edit{Recall that $\Gamma(\R^{3,1})$ is the set of all isotropic hyperplanes in $\R^{3,1}$.}

\begin{defn}[L-conjugate net] 
    Throughout, $S\colon \R^2 \to \R^{3,1}$ is a \textbf{regular smooth 
%\edit{spacelike} 
\edit{space-like or light-like}
net}, that is, a smooth map such that %$S_u(u,v)\not\,\parallel S_v(u,v)$
\edit{$S_u(u,v) $ and $S_v(u,v)$ 
span a 2-dimensional space-like or light-like plane} at each point $(u,v) \in \R^2$. We equip it with an \textbf{attached tangent isotropic hyperplane congruence} $P\colon\R^2 \to \Gamma(\R^{3,1})$,
%\edit{, where $\Gamma(\R^{3,1})$ denotes the set of all isotropic hyperplanes in $\R^{3,1}$,}
that is, a smooth map such that 
    the %isotropic 
    hyperplane $P(u,v)\ni S(u,v)$ and $P(u,v)\parallel S_u(u,v), S_v(u,v)$ %$ is parallel to both $S_u(u,v)$ and $S_v(u,v)$ 
    at each point $(u,v) \in \R^2$.
    
    The net $S$ is called an \textbf{L-conjugate net} in $\R^{3,1}$ with respect to the attached congruence $P$ if $S_u, S_v, S_{uv}$ are parallel to %the isotropic hyperplane 
    $P(u,v)$ at each point $(u,v) \in \R^2$.
    % 
    %A map $S: \R^2 \to \R^{3,1}$ is called an \emph{L-conjugate net} in $\R^{3,1}$ with respect to 
    %an attached set of tangent isotropic hyperplanes $P$
    %if at every $(u,v) \in \R^2$, vectors $S_u, S_v, S_{uv}$ are parallel to the isotropic hyperplane $P(u,v)$. 
\end{defn}

\begin{remark}

\edit{Generically, the isotropic hyperplane $P(u,v)$ is uniquely determined by 
$S_u, S_v, S_{uv}$, except for the special case where $S_u, S_v, S_{uv}$ are 
linearly dependent and their span reduces to the tangent plane at $S(u,v)$,
when it is precisely a conjugate net in the usual sense. In the case where the tangent 
plane is space-like, there are exactly two isotropic hyperplanes 
%$P(u,v), P'(u,v)$ 
through it. As we shall see 
in a subsequent publication, %this describes the case where
then both envelopes of the sphere congruence in the Euclidean model of Laguerre 
geometry are represented by L-conjugate nets.}
\end{remark}

%In our general set up, the net $S$ is allowed to have isotropic tangent planes, but for the approximation by discrete L-nets, they should be excluded to avoid singularities.

The net $S$ takes a sufficiently small neighborhood of each point $(u,v)\in\mathbb{R}^2$ to a smooth surface in $\R^{3,1}$, still denoted by $S$ (by a slight abuse of notation).
For our application, it is important to define L-conjugate tangent directions of the
surface $S$ and not just an entire L-conjugate parameterization. Moreover, we need
to be able to compute L-conjugate directions from any parameterization of $S$.
We therefore \edit{arrive at the following ``self-evident''} %formulate the following 
definition.

\begin{defn}[L-conjugate tangents]
   A \textbf{reparametrization} of the pair $(S,P)$ is a pair of maps $(S\circ D,P\circ D)$ for some  diffeomorphism $D\colon \R^2 \to \R^2$. Two tangents $T_1 \ne T_2$  (or tangent vectors $T_1 \not\,\parallel T_2$) at a surface point $X=S(u,v)\in \R^{3,1}$
   % $X$ of the surface $S(\R^2) \subset \R^{3,1}$ 
   are called \textbf{L-conjugate} with respect to the attached congruence $P$ if there is a reparametrization $(\bar S,\bar P)$ of $(S,P)$
   such that $\bar  S_u\parallel T_1$, $\bar S_v\parallel T_2$, and $\bar S_u, \bar S_v, \bar S_{uv}\parallel \bar P$ at the point $X$.

\end{defn}

The remainder of this subsection aims to find a simple characterization of L-conjugate tangents in terms of an arbitrary parameterization $S$. 

First, we obtain a characterization in terms of an analog of the second fundamental form.

\begin{thm} \label{th-II} %Let $S: \R^2 \to \R^{3,1}$ be a regular smooth net.  
Two non-parallel tangent vectors $A=a_1S_u+a_2S_v$ and $B=b_1S_u+b_2S_v$ at a surface point 
$S(u,v)\in \R^{3,1}$ %of the surface $S(\R^2) \subset \R^{3,1}$ 
are L-conjugate with respect to the attached congruence $P$ if and only if
\be
\mathrm{II}_{S,P,\edit{N}}(A,B):=L_{\edit{S,P,N}}a_1b_1 + M_{\edit{S,P,N}}(a_1b_2+a_2b_1) + N_{\edit{S,P,N}} a_2b_2=0, \label{eq:L-conj4}
\ee
%with $L_P,M_P,N_P$ as in \eqref{eq:2ndformP}.
where
\begin{eqnarray} \label{eq:2ndformP}
    &L_{\edit{S,P,N}}:=&\lla S_{uu},N\rra =- \lla S_u,N_u\rra, \nonumber \\
    &M_{\edit{S,P,N}}:=& \lla S_{uv},N\rra =- \lla S_u,N_v\rra=- \lla S_v,N_u\rra, \\
    &N_{\edit{S,P,N}}:=& \lla S_{vv},N\rra =- \lla S_v,N_v\rra, \nonumber
\end{eqnarray}
and $N(u,v)$ is a normal vector to $P(u,v)$ smoothly depending on $(u,v)$.
\end{thm}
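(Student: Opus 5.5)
We first verify the identities in \eqref{eq:2ndformP} and then compute how the condition ``$\bar S_{uv}\parallel \bar P$'' transforms under reparametrization. Since $P(u,v)$ contains the direction of $S_u,S_v$ and $N$ is normal to $P$, we have $\lla S_u,N\rra=\lla S_v,N\rra=0$ identically. Differentiating these with respect to $u$ and $v$ gives $\lla S_{uu},N\rra=-\lla S_u,N_u\rra$, $\lla S_{uv},N\rra=-\lla S_u,N_v\rra$, $\lla S_{vu},N\rra=-\lla S_v,N_u\rra$, $\lla S_{vv},N\rra=-\lla S_v,N_v\rra$. Equality of mixed partials then yields the two expressions for $M$. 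We also note that replacing $N$ by $\lambda N$ with a smooth nonvanishing $\lambda$ multiplies $\mathrm{II}$ by $\lambda$, so the vanishing condition does not depend on the choice of $N$. This matters because an isotropic $N$ is only defined up to scale.

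The key step is a chain-rule computation. Let $D(\bar u,\bar v)=(u(\bar u,\bar v),v(\bar u,\bar v))$ and $\bar S=S\circ D$. Then $\bar S_{\bar u}=u_{\bar u}S_u+v_{\bar u}S_v$ and $\bar S_{\bar v}=u_{\bar v}S_u+v_{\bar v}S_v$. Differentiating once more,
\[
\bar S_{\bar u\bar v}= u_{\bar u}u_{\bar v}S_{uu}+(u_{\bar u}v_{\bar v}+v_{\bar u}u_{\bar v})S_{uv}+v_{\bar u}v_{\bar v}S_{vv}+u_{\bar u\bar v}S_u+v_{\bar u\bar v}S_v .
\]
Since $\bar P=P\circ D$ has normal $N\circ D$ and $S_u,S_v\perp N$, pairing with $N$ gives $\lla \bar S_{\bar u\bar v},N\rra=\mathrm{II}_{S,P,N}(\bar S_{\bar u},\bar S_{\bar v})$. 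The conditions $\bar S_{\bar u},\bar S_{\bar v}\parallel\bar P$ hold automatically. Hence $\bar S_{\bar u\bar v}\parallel \bar P$ at $X$ if and only if $\mathrm{II}(\bar S_{\bar u},\bar S_{\bar v})=0$ there, because a vector is parallel to a hyperplane exactly when it is orthogonal to the normal. Note that the normal $N$ of an isotropic hyperplane is isotropic and lies in the hyperplane; this does not affect the equivalence, since the hyperplane is still $N^{\perp}$ with respect to the nondegenerate form.

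To conclude, suppose first that $A,B$ are L-conjugate. Then there is a reparametrization with $\bar S_{\bar u}=\alpha A$ and $\bar S_{\bar v}=\beta B$ at $X$, where $\alpha,\beta\neq 0$. Bilinearity gives $\alpha\beta\,\mathrm{II}(A,B)=0$, so $\mathrm{II}(A,B)=0$. Conversely, suppose $\mathrm{II}(A,B)=0$. Take $D$ to be the affine map
\[
(\bar u,\bar v)\mapsto (u_0,v_0)+\bar u(a_1,a_2)+\bar v(b_1,b_2),
\]
suitably translated so that it passes through $(u_0,v_0)$. This is a diffeomorphism because $A\not\parallel B$, it gives $\bar S_{\bar u}=A$ and $\bar S_{\bar v}=B$, and by the formula above it satisfies $\bar S_{\bar u\bar v}\parallel\bar P$ at $X$.

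There is no real obstacle here. The only point needing care is that L-conjugacy is a condition at a single point only, and the definition allows reparametrizations by diffeomorphisms of all of $\R^2$; the affine $D$ above satisfies this.
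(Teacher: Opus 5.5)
Your proof is correct and has the same structure as the paper's. You derive the identities by differentiating $\lla S_u,N\rra=\lla S_v,N\rra=0$, then reduce L-conjugacy to the vanishing of $\lla \bar S_{\bar u\bar v},N\rra$ for a reparametrization adapted to $A,B$. The one difference is how reparametrization invariance is obtained: the paper uses the coordinate-free expression $\mathrm{II}_{S,P,N}(A,B)=-\lla A,\partial N/\partial B\rra$, while you verify it by a direct chain-rule computation, and you also write down the affine reparametrization whose existence the paper only asserts.
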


\begin{proof}
%With their isotropic normal vectors $N$, hyperplanes $P$ attached to $S$ can be written as
%
%$$ P(u,v):\ \lla X-S(u,v), N(u,v) \rra =0,$$
%
%with
\edit{By the definition of an attached tangent isotropic hyperplane congruence $P$ and a normal vector $N$, we have}
\be \lla S_u, N \rra =\lla S_v, N \rra =0. \label{eq:normalP} \ee
Differentiating the latter equation, we get $\lla S_{uu},N\rra =- \lla S_u,N_u\rra$ and the other \edit{equalities} in~\eqref{eq:2ndformP}. Thus
\begin{multline}
        \mathrm{II}_{S,P,\edit{N}}(A,B)
\edit{=-\lla S_u,N_u\rra a_1b_1-\lla S_u,N_v\rra a_1b_2 - \lla S_v,N_u\rra a_2b_1-\lla S_v,N_v\rra a_2b_2}\\
=-\lla
    a_1 S_u + a_2 S_v,
    b_1 N_u + b_2 N_v
\rra
=-\lla A,\frac{\partial N}{\partial B} \rra .
\end{multline}

(In the latter expression, by a slight abuse of notation, $N$ is viewed as a function in a vicinity of the point $S(u,v)$ on %the surface 
$S(\R^2)$ and $\frac{\partial N}{\partial B}$ is its derivative in the $B$-direction.) %of the tangent vector $B$.)
\edit{Since $\frac{\partial N}{\partial B}$ is invariant under reparametrizations,
%This shows that 
$\mathrm{II}_{S,P,\edit{N}}(A,B)$ is, that is, $\mathrm{II}_{S,P,\edit{N}}(A,B)=\mathrm{II}_{\bar S,\bar P, N}(A,B)$ for any reparametrization $(\bar S,\bar P)$.}
% independent of the parametrization of the surface $S(\R^2)$.
In particular, %for any reparametrization $(\bar S,\bar P)$ %\colon \R^2 \to S(\R^2)$ 
%such that 
\edit{if} $\bar S_u=A$ and $\bar S_v=B$, we get $$\mathrm{II}_{\bar S,\bar P, \edit{N}}(A,B)=M_{\bar S,\bar P,N}=
\lla \bar S_{uv},N\rra.$$ 
Now, if $A$ and $B$ are L-conjugate with respect to $P$, then $\bar S_{uv}\parallel \bar P$ for suitable reparametrization, hence $\mathrm{II}_{S,P,\edit{N}}(A,B)=\mathrm{II}_{\bar S,\bar P, \edit{N}}(A,B)=0$. Conversely, if $\mathrm{II}_{S,P,\edit{N}}(A,B)=0$, then for any reparametrization $(\bar S,\bar P)$ such that $\bar S_u\parallel A$ and $\bar S_v\parallel B$, we have $\bar S_{uv}\parallel \bar P$ (in particular, such a reparametrization exists), hence $A$ and $B$ are L-conjugate with respect to $P$.
%Thus, $\mathrm{II}_P(A,B)=0$ if and only if $\bar S_{uv}\parallel P$, i.e., if and only if $A$ and $B$ are L-conjugate with respect to $P$. In this case, any parametrization $\bar S\colon \R^2 \to S(\R^2)$ such that $\bar S_u\parallel A$ and $\bar S_v\parallel B$ will satisfy the property $\bar S_{uv}\parallel P$.
\end{proof}

\edit{Note that $\mathrm{II}_{S,P,N}(A,B)$ depends on the choice of the normal vector $N$, but the condition $\mathrm{II}_{S,P,N}(A,B)=0$ is invariant.}
This theorem suggests the following extension of our definition.

\begin{defn}[Self-L-conjugate tangents] 
A tangent $T$ with a direction vector $A$ is \textbf{L-conjugate} to itself %(or \emph{self-L-conjugate}) 
with respect to~$P$ if $\mathrm{II}_{S,P,\edit{N}}(A,A)=0$ \edit{for some (hence any) choice of normall vector $N(u,v)$ to $P(u,v)$}.
\end{defn}

Our next result is guided by the following \edit{well-known} construction of
conjugate surface tangents in $\R^3$: Given
a curve $c$ on a \edit{generic} surface $s$, the envelope of tangent planes of $s$ along
$c$ is a developable surface, whose rulings are conjugate
to the tangents of $c$. This \edit{can be seen, for instance,} by considering the 
limit of a strip in a Q-net.

We proceed in a very similar way and first look at the discrete L-net $S$ informally.  
Polylines $i=\mathrm{const}$ and $j=\mathrm{const}$ are called discrete \emph{parameter
lines}. Moreover, we consider \emph{strips}, namely parts of $S$ bounded by adjacent parameter lines, say $i=i_0$ and $i=i_0+1$; likewise, we have strips in the other parameter direction. The edges in which adjacent faces
of a strip join are called its \emph{transversal edges}. 
They are discrete tangents and lie in the intersection
planes of the isotropic hyperplanes through adjacent
faces. These intersection planes are discrete characteristic planes of the
discrete envelope of hyperplanes represented by the strip. 
The discrete view suggests the following concepts. %result, which we prove in a moment.

\begin{defn} 
family of hyperplanes $P\colon [0;1]\to \Gamma(\R^{3,1})$ be given by
$P(t): \lla X-S(t), N(t) \rra =0$ for some smooth $S,N\colon [0;1]\to \R^{3,1}$. The family $P$ is \textbf{regular} if $\dot {N}\not\,\parallel N$ for each $t\in  [0;1]$.  
In this case, the union of planes $\Pi(t):={P}(t)\cap \dot{P}(t)$ over $t\in  [0;1]$, where
%introduce the planes
%$$\dot{P}=\dot{P}(u(t),v(t)):\  \lla X-S, \dot{u}N_u +\dot{v}N_v \rra =0,$$
$$\dot{P}(t):\  \lla X-S, \dot {N} \rra -\lla \dot {S}, {N} \rra=0,$$
is called the \textbf{envelope of the family $P$}, and each individual plane $\Pi(t)$ is called \edit{a} \textbf{characteristic plane} of the envelope. 
\end{defn}

\begin{prop} \label{prop:Lconj4} %\MS{Some nondegeneracy assumptions are required.}
   Assume that for a regular smooth curve $C\colon [0;1] \to S(\R^{2})$, $t\mapsto C(t)={S}(u(t),v(t))$, the 
   family of hyperplanes $P(u(t),v(t))$ is regular and thus has an envelope $D$. Then, at each point $C(t)$ of $C$, any tangent line to $S(\R^{2})$ contained in the characteristic plane \edit{$\Pi(t)$} of $D$ is L-conjugate to the tangent of $C$ with respect to~$P$. 
   %%%%%%%%%%%%
   %Given a curve $C$ on a 2-dimensional surface $S \subset \R^{3,1}$ with an attached family $P$ of tangent isotropic hyperplanes, we consider the envelope $D$ of hyperplanes $P$ along $C$. Then, at each point $C(t)$ of $C$, the intersection line of the characteristic plane of $D$ with the tangent plane of $S$ at $C(t)$  is L-conjugate to the tangent of $C$ w.r.t. the hyperplanes $P$. 
\end{prop}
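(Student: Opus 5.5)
The plan is to reduce the statement to the characterization of L-conjugacy in Theorem~\ref{th-II}, namely the vanishing of $\mathrm{II}_{S,P,N}(A,B)=-\lla A,\frac{\partial N}{\partial B}\rra$. Along the curve we write $S(t):=C(t)=S(u(t),v(t))$ and $N(t):=N(u(t),v(t))$. Then $\dot S=\dot u S_u+\dot v S_v=:B$ is the tangent vector of $C$, and $\dot N=\dot u N_u+\dot v N_v=\frac{\partial N}{\partial B}$ is the derivative of the normal field in the direction of $B$.

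First I simplify the equation of $\dot P(t)$. Since $B=\dot S$ is tangent to the surface and $P(t)$ is an attached tangent hyperplane, \eqref{eq:normalP} gives $\lla \dot S,N\rra=0$. Hence $\dot P(t)$ is the hyperplane $\lla X-S,\dot N\rra=0$ through $S(t)$. Regularity ($\dot N\not\parallel N$) guarantees that this is a genuine hyperplane distinct from $P(t)$. So $\Pi(t)=P(t)\cap\dot P(t)$ is a 2-plane through $C(t)$ with direction space $\{X: \lla X,N\rra=\lla X,\dot N\rra=0\}$.

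Next, take a tangent line to $S(\R^2)$ at $C(t)$ that lies in $\Pi(t)$, and let $A=a_1S_u+a_2S_v$ be its direction vector. It is automatically orthogonal to $N$, since it is tangent. Containment in $\Pi(t)$ adds exactly the condition $\lla A,\dot N\rra=0$, that is, $\lla A,\frac{\partial N}{\partial B}\rra=0$. By the identity established in the proof of Theorem~\ref{th-II}, this says $\mathrm{II}_{S,P,N}(A,B)=0$.

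Finally, I apply Theorem~\ref{th-II}. If $A\not\parallel B$, then $A$ and $B$ are L-conjugate. If $A\parallel B$, then the tangent of $C$ itself lies in $\Pi(t)$, and $\mathrm{II}(B,B)=0$ means the tangent is self-L-conjugate in the sense of the definition following Theorem~\ref{th-II}; the statement should be read with that convention. Throughout I also use the symmetry of $\mathrm{II}$, which follows from the equalities for $M$ in \eqref{eq:2ndformP}.

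The only delicate points are bookkeeping rather than real obstacles:
\begin{itemize}
\item the term $\lla\dot S,N\rra$ must drop out, which it does exactly because the curve lies on $S$;
\item the choice of normal $N$ is irrelevant, because replacing $N$ by $\lambda N$ changes $\dot N$ by a multiple of $N$ and leaves $\Pi(t)$ unchanged;
\item the degenerate case, where $\Pi(t)$ contains the whole tangent plane, is consistent: then $\mathrm{II}(\cdot,B)\equiv0$ and every tangent is L-conjugate to $B$.
\end{itemize}
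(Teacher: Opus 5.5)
Your proof is correct and follows the paper's argument. The paper likewise observes that a tangent direction $\lambda S_u+\mu S_v$ lies in $\Pi(t)=P\cap\dot P$ exactly when it is Minkowski-orthogonal to $\dot N=\dot u N_u+\dot v N_v$. It expands this condition, which is $\mathrm{II}_{S,P,N}(A,\dot C)=0$, and concludes via Theorem~\ref{th-II}. Your extra checks are details the paper leaves implicit: that $\lla \dot S,N\rra=0$ so $\dot P(t)$ passes through $C(t)$, that the choice of $N$ does not matter, and the self-L-conjugate case $A\parallel\dot C$.
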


\begin{proof}

The characteristic planes of the envelope $D$ are the planes $P \cap \dot{P}$.
If a tangent line is contained in such a plane, the direction vector 
 $\lambda S_u + \mu S_v$ is orthogonal to $\dot{N}=\dot{u}N_u +\dot{v}N_v $,
leading to 
$$
 \lambda \dot{u}\lla S_u,N_u\rra +\lambda \dot{v}\lla S_u,N_v\rra+
 \mu\dot{u}\lla S_v,N_u\rra + \mu \dot{v}\lla S_v,N_v\rra=0.
$$
By Theorem~\ref{th-II}, the vector is L-conjugate to %the tangent of 
$\dot{C}=\dot{u}S_u +\dot{v}S_v $ with respect to~$P$.

\end{proof}

%This proves also the following characterization of L-conjugacy.

\begin{remark}
    %Our derivations did not use the 
    \edit{The above results and their proofs remain true without the assumptions} that hyperplanes $P$ and therefore
 normals $N$ are isotropic \edit{and the net is space-like or light-like.}
 %; our results remain true without those assumptions.} 
 Hence, here we have a more general concept
    of conjugacy with respect to an attached set of tangent hyperplanes $P$.
    Moreover, L-conjugacy is clearly a concept of projective geometry,
    as is true for ordinary conjugacy in surface theory. \edit{We plan to return to this generalization in future work.}
\end{remark}

\subsection{L-conjugacy in Euclidean 3-space} \label{ssec:L-conj}
%----------------------------------------------------------------

Although it would be possible to base our approximation algorithm on the cyclographic model, it 
is preferred to work directly in the design space $\R^3$. Thus, we now transfer
the results to Euclidean 3-space. This is particularly important for a better understanding
of L-conjugacy and its relation to an extended version of curvature theory of surfaces
with respect to an attached sphere congruence (see Section~\ref{sec:curvature}). We start with an informal motivation, and then give precise definitions.

L-conjugacy in $\R^{3,1}$ has been motivated by geometrically
interpreting the edges of the quad mesh $S$ in two ways: as discrete tangents of
parameter lines and as tangential components of discrete characteristic planes of 
envelopes of isotropic hyperplanes. 

\begin{figure}
\centering
 \begin{overpic}[width=1\linewidth]{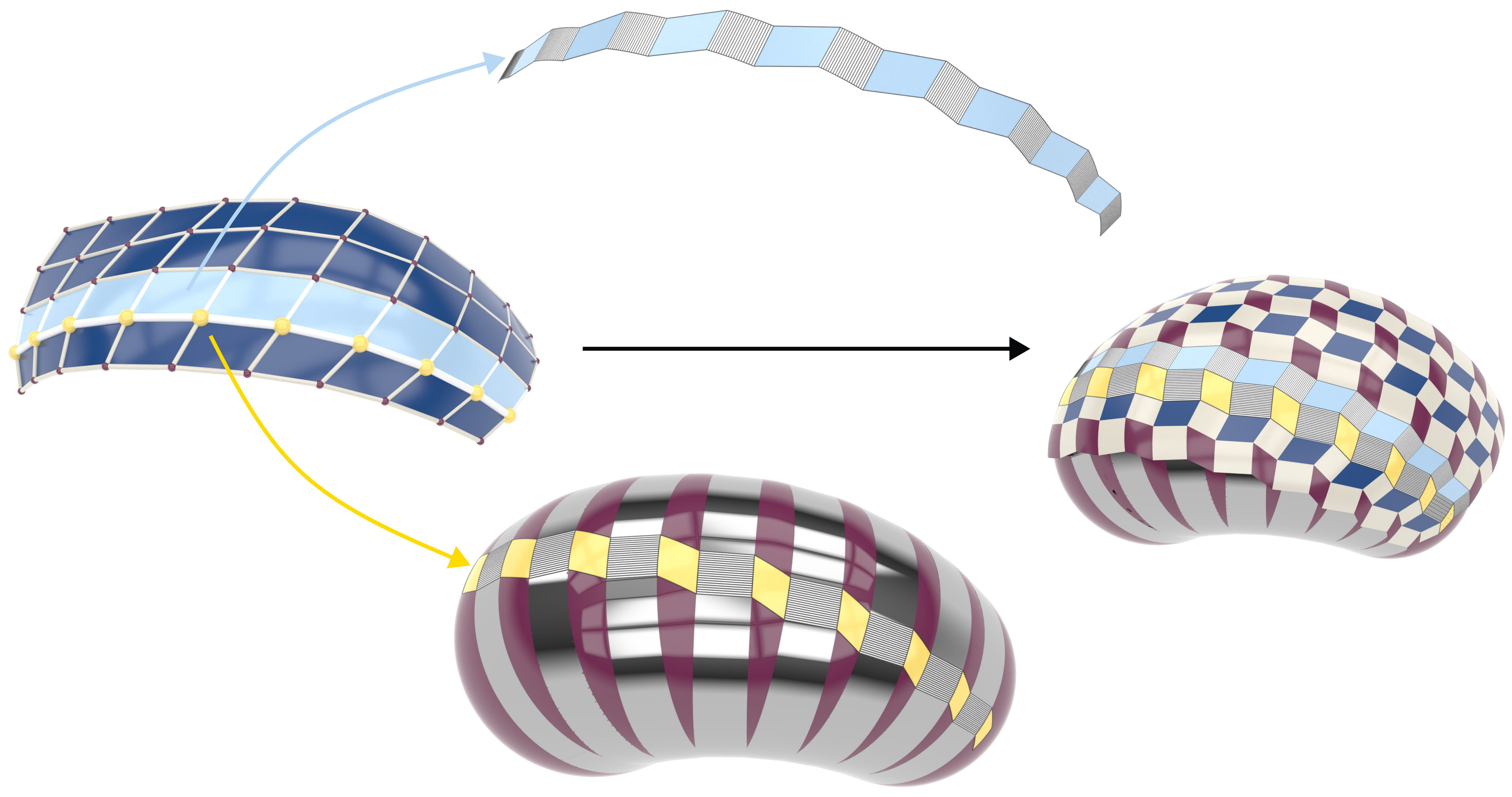}
 \put(5,42){$S \subset \R^{3,1}$}
 \put(45,22){$\Sigma \subset \R^3$}
 \put(45,45){$\Delta \subset \R^3$}
 \put(90,35){$\mathcal L \subset \R^3$}
     
 \end{overpic}
\caption{A guad mesh $S$ in $\R^{3,1}$ and its corresponding
L-net $\mathcal{L}$ in $\R^3$. A parameter line
of $S$ (yellow vertices) defines a tangent discrete channel surface $\Sigma$ formed by spheres of the attached 
congruence joined by or-cones, and a strip of $S$ (light-blue quads) defines a tangent discrete developable surface $\Delta$. }
\label{fig:conj3}
\end{figure}

%\HP{
%We start with a discrete L-net $S \subset \R^{3,1}$. 
In $\R^3$, edges of $S$ correspond to or-cones. A discrete parameter line of $S$ corresponds
to a discrete set of or-spheres joined by or-cones (Fig.~\ref{fig:conj3}). This is a discrete version of a
channel surface that is in oriented contact with an oriented surface $f$ and enveloped by spheres taken from the
selected attached sphere congruence. In the smooth limit, the rulings of the or-cones correspond to rulings of
tangential cones of the channel surface. They are principal curvature directions orthogonal to the characteristic circles. 

Strips of $S$ represent strips of an L-net $\Lf$ which are formed by segments of or-planes and 
or-cones (Fig.~\ref{fig:conj3}). They are discrete models of tangent developable surfaces of a surface $f$, and the rulings of the cones are discrete rulings. The boundaries of the discrete developable strips exhibit, in general, a staircase behavior. However, the
discrete rulings change \edit{continuously} along the strip.
We look at the discrete rulings of the strips, which are
also cone rulings, to define L-conjugate directions
of a surface $f$ with respect to an attached sphere congruence. 
We shall see that the opposite edges in each vertex quadrilateral should be viewed as two discrete versions of the same direction, and adjacent edges should be viewed as L-conjugate directions. See Fig.~\ref{fig:alignment}.
%\MS{This does not sound very convincing at this stage.}
%In each planar face we may view opposite edges as discrete versions of the same direction, and adjacent edges as L-conjugate directions. 

Turning to the smooth setting, we define L-conjugate directions
in $\R^3$ therefore as follows.
%}

%\MS{

\begin{defn}[L-conjugate net] \label{def:L-conj3}
Throughout, $f\colon \R^2 \to \R^{3}$ is a regular smooth net (a smooth map such that $f_u(u,v)\not\,\parallel f_v(u,v)$ for each $(u,v) \in \R^2$) without parabolic and umbilic points
(so that the principal curvatures $\kappa_1$ and $\kappa_2$ are distinct and non-vanishing at each point $(u,v) \in \R^2$).
% non-vanishing Gaussian curvature
%, that is, a smooth map such that $f_u(u,v)\not\,\parallel f_v(u,v)$ and the Gaussian curvature is non-zero at each point $(u,v) \in \R^2$. 
We equip the net with an \textbf{orientation}, that is, a choice of unit normal $n\colon \R^2 \to \R^{3}$, smoothly depending on %a point 
$(u,v) \in \R^2$.
% that is, a smooth map such that $n(u,v) \perp f_u(u,v),f_v(u,v)$ and $|n(u,v)|=1$ at each point $(u,v) \in \R^2$. 
Denote by $p(u,v)$ the oriented tangent plane at a point $f(u,v)$, with the orientation given by the normal $n(u,v)$. %Hereafter, the orientation of $p(u,v)$ is given by the normal $n(u,v)$.
We also equip the net %$f$ 
with an \textbf{attached tangent sphere congruence} $s\colon\R^2 \to \mathrm{OrSpheres}(\R^{3})$, that is, a %smooth 
map such that 
%$s(u,v)\ni f(u,v)$ and 
$s(u,v)$ is in oriented contact with the oriented tangent plane $p(u,v)$ at each point $f(u,v)$
and the Minkowski lift $S(u,v)$ of $s(u,v)$ is a regular smooth \edit{space-like or light-like} net in~$\R^{3,1}$. %\MS{It would be more aesthetic to state this condition in 3D terms.}

The net $f$ is called an \textbf{L-conjugate net} in $\R^{3}$ with respect to the attached congruence $s$
if the Minkowski lift of $s$ is an L-conjugate net in $\R^{3,1}$ with respect to the Minkowski lift of $p$. In this case, the directions conjugate to $f_u(u,v)$ and $f_v(u,v)$ in the tangent plane $p(u,v)$ of $f$ are called \textbf{L-conjugate}.
\end{defn}

Beware that $f_u(u,v)$ and $f_v(u,v)$ themselves are not considered L-conjugate with respect to %the attached congruence 
$s$; only their \emph{conjugate directions}, in the ordinary sense, are L-conjugate. In other words, the rulings of the two tangent developables along the isoparameter lines $u=\mathrm{const}$
and $v=\mathrm{const}$ are L-conjugate.  Only such a definition of $L$-conjugacy leads to a \edit{Laguerre invariant} concept. Likewise, the visualization of L-conjugate rulings in the discrete model appears nicely in the edges of each planar face, while
discrete parameter lines $u=\mathrm{const}$ and $v=\mathrm{const}$ are actually not directly present in a discrete L-net.

Now, we generalize \edit{the notion of L-conjugate directions} to an arbitrary parametrization $f(u,v)$. %, not necessarily an L-conjugate net.

\begin{defn}[L-conjugate tangents] \label{def:L-conj3-tangents}
   Two tangent vectors $a \not\,\parallel b$ at a surface point $x=f(u,v)\in \R^{3}$ %of the surface $f(\R^2) \subset \R^{3}$ 
   are %called 
   \textbf{L-conjugate} with respect to the attached congruence $s$ if \edit{there is a} reparametrization $(\bar f,\bar s)=
   (f\circ D,s\circ D)$ of $(f,s)$
   such that $\bar  f_u$ is conjugate to $a$, $\bar f_v$ is conjugate to $b$, and $\bar S_u$ is L-conjugate to $\bar S_v$ with respect to $\bar P$, where $\bar S$ and $\bar P$ are the Minkowski lifts of $s\circ D$ and $p\circ D$, respectively.
\end{defn}

%\AR{\begin{defn}[L-conjugate tangents — alternative]
%Two tangent vectors $a \not\,\parallel b$ at a surface point $f(u,v)$ are \textbf{L-conjugate} with respect to the attached sphere congruence $s$ if the corresponding tangents of the Minkowski lift $S$ are L-conjugate.
%\end{defn}}

\begin{figure}
    \centering
    \begin{tabular}{l r}
    \includegraphics[width=0.4\linewidth]{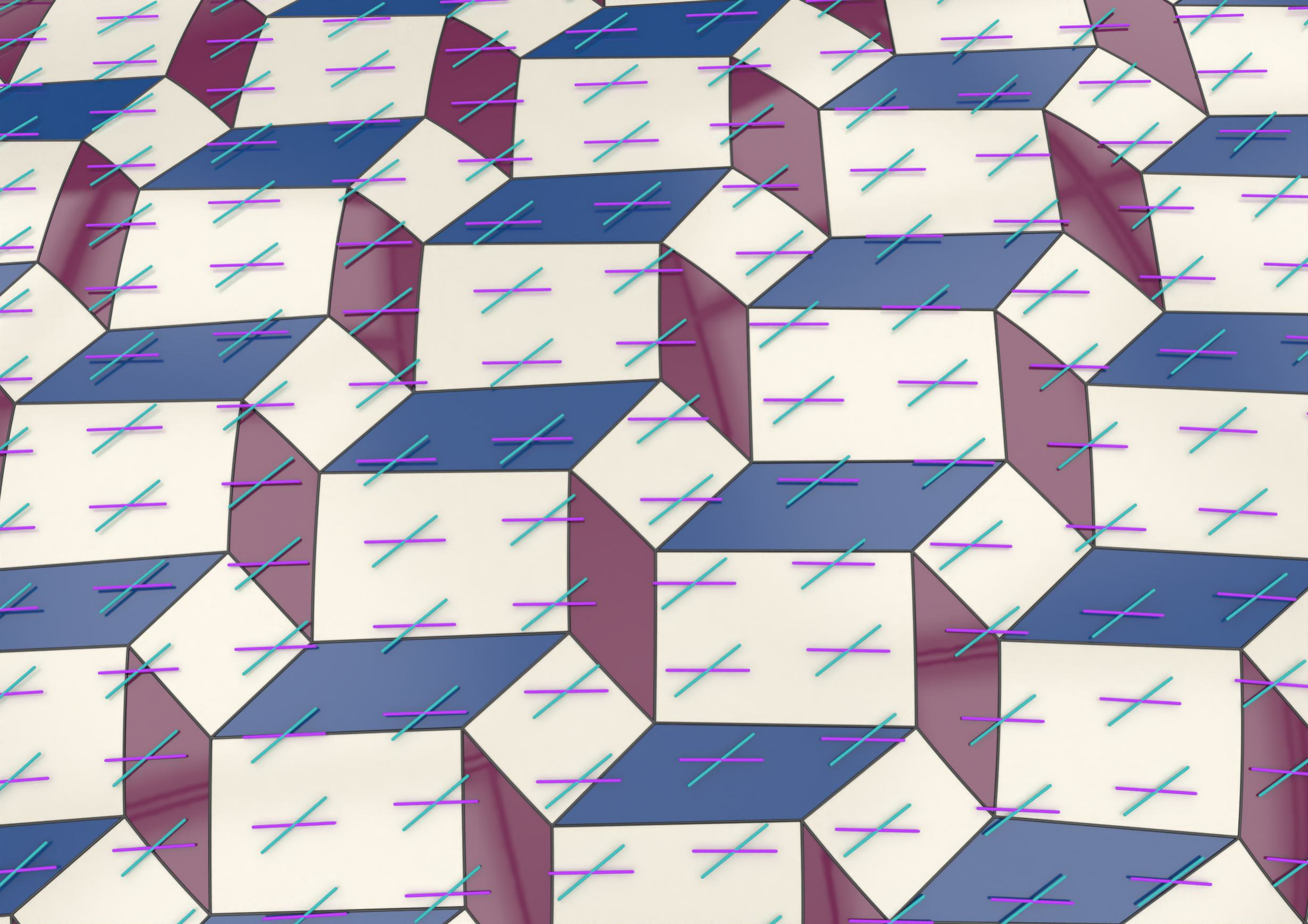}     &  
    \includegraphics[width=0.4\linewidth]{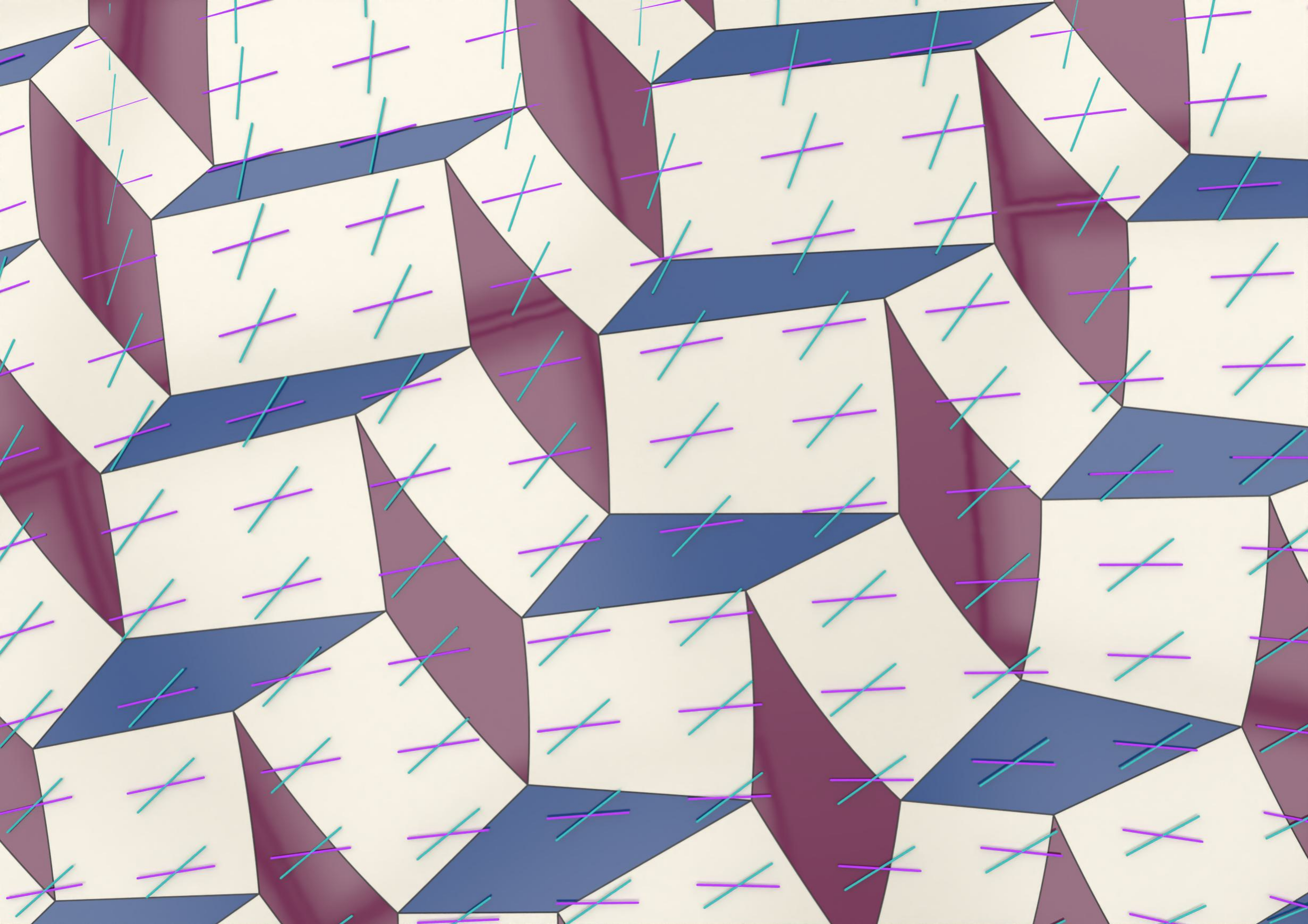}
    \end{tabular}
    
    \caption{The sides of planar quadrilaterals (blue) of an L-net approximating a surface with an attached tangent sphere congruence are aligned with L-conjugate directions (shown in cyan and magenta).}
    \label{fig:alignment}
\end{figure}
%}

We now turn to the analytical formulation of L-conjugacy and prefer to work in a local principal
parameterization $f(u,v)$ of the surface. % $f(\R^2)$. %\MS{This breaks down at an umbilic point.} 
This is sufficient for the geometric understanding
and for the application to approximation. %Let $n$ be the unit normals of $f$ defining its orientation. 
With $\kappa_1,\kappa_2$ as principal curvatures of $f$, our parameterization
satisfies
$$ n_u=-\kappa_1f_u,\ n_v= -\kappa_2 f_v,\ \langle f_u,f_v \rangle=0.  $$
We are going to consider a single surface point, and assume that at this point,
%Since we are only interested in a single surface point, we can even assume that there \MS{Do we actually need this assumption? If yes, then it should be in the theorem.}
%
$$ \langle f_u,f_u \rangle = \langle f_v,f_v \rangle=1.$$
Let the tangent or-sphere $s(u,v)$ have signed radius $r(u,v)$ and center $c_s(u,v)=f+r n$.
This defines $S$ and $N$ for the representation in $\R^{3,1}$,
$$    S=(c_s,r)=(f+r n,r), \ N=(n,1).$$
Inserting into \eqref{eq:2ndformP}, we obtain
\begin{eqnarray}  \label{eq:2ndLform3}
    &L_{\edit{S,P,N}}&=  -\langle f_u, n_u \rangle - r \langle n_u,n_u \rangle = \kappa_1 -r \kappa_1^2 ,  \nonumber \\
    &M_{\edit{S,P,N}}&=  -\langle f_u, n_v \rangle - r \langle n_u,n_v \rangle =0, \\
    &N_{\edit{S,P,N}}&=  -\langle f_v, n_v \rangle - r \langle n_v, n_v \rangle = \kappa_2 -r \kappa_2^2. \nonumber 
\end{eqnarray}
Inserting  \eqref{eq:2ndLform3} into \eqref{eq:L-conj4}, 
we get the condition
\be (\kappa_1-r \kappa_1^2)a_1b_1 +  (\kappa_2-r \kappa_2^2)a_2b_2=0 \label{eq:Lconjcontact} \ee
on two tangent vectors $a_1S_u+a_2S_v$ and $b_1S_u+b_2S_v$ in $\R^{3,1}$ to be L-conjugate.

We now want to express L-conjugacy in the
tangent space of $f$, spanned by $f_u,f_v$, following Definition~\ref{def:L-conj3-tangents}.
By this definition, \eqref{eq:Lconjcontact} is equivalent to L-conjugacy of two vectors $\bar{a}=\bar{a}_1f_u+\bar{a}_2f_v$ and $\bar{b}=\bar{b}_1f_u+\bar{b}_2f_v$ conjugate to $a={a}_1f_u+{a}_2f_v$ and $b={b}_1f_u+{b}_2f_v$, respectively (because $(S_u,S_v)$ and $(f_u,f_v)$ are transformed in the same way under any reparametrization of $(f,s)$).
The vector $\bar{a}$ is related to $a$ via ordinary conjugacy, \edit{which is equivalent to vanishing second fundamental form:}
 \be \kappa_1 a_1 \bar{a}_1 +  \kappa_2 a_2 \bar{a}_2 =0. \label{eq:conjf}
 \ee
%\AR{The vectors $a$ and $\bar{a}$ are related by ordinary conjugacy, meaning that the second fundamental form vanishes on them: $\mathrm{II}(a, \bar{a}) = \kappa_1 a_1\bar{a}_1 + \kappa_2 a_2\bar{a}_2 = 0$ in principal curvature coordinates.}
Rewriting~\eqref{eq:Lconjcontact} via \eqref{eq:conjf}, we get the following expression of L-conjugacy \edit{in terms of the \emph{signed principal curvature radii} $\rho_1:=\kappa_1^{-1}$ and $\rho_2:=\kappa_2^{-1}$ of the oriented smooth net $f$ at the point $f(u,v)$}.

\begin{thm} \label{thm:L-conj3-principal}
Consider %an oriented surface $f$ with an attached tangent or-sphere congruence $s$, and 
an orthonormal principal frame $(t_1,t_2,n)$ at a surface point $f(u,v)$. 
Then, two non-parallel tangent vectors $\bar{a}=\bar{a}_1t_1+\bar{a}_2t_2$ and $\bar{b}=\bar{b}_1t_1+\bar{b}_2t_2$ at $f(u,v)$ are L-conjugate with respect to the congruence $s$, if and only if
\be (\edit{\rho_2} -r) \bar{a}_1\bar{b}_1  + (\edit{\rho_1} -r)  \bar{a}_2\bar{b}_2 =0,
\label{eq:conjcurv}\ee
where $r$ is the signed radius of the or-sphere $s(u,v)$ and  \edit{$\rho_1$ and $\edit{\rho_2}$ are the signed principal curvature radii} of~$f$ at~$f(u,v)$. 
\end{thm}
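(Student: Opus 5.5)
The plan is to reduce the statement to Theorem~\ref{th-II} by working in a local principal parametrization and then translating between the coordinates of $S_u,S_v$ and those of $f_u,f_v$ via ordinary conjugacy.

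First, since $f$ has no umbilic points, a local principal parametrization $f(u,v)$ exists around the given point. After a linear rescaling of $u$ and $v$ we may assume $f_u=t_1$ and $f_v=t_2$ at that point. Then $n_u=-\kappa_1 f_u$, $n_v=-\kappa_2 f_v$, and $\langle f_u,f_v\rangle=0$. This step matters only because a reparametrization does not change which pairs of tangent directions are L-conjugate (Definition~\ref{def:L-conj3-tangents} and the invariance shown in the proof of Theorem~\ref{th-II}).

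Second, we lift to $\R^{3,1}$ with $S=(f+rn,r)$ and $N=(n,1)$. One checks $\lla S_u,N\rra=\lla S_v,N\rra=0$; the $r_u n$ and $r_u$ terms cancel. Hence $N$ is a legitimate normal of $P$, and the coefficients are those of~\eqref{eq:2ndLform3}. Theorem~\ref{th-II} then shows that $a_1S_u+a_2S_v$ and $b_1S_u+b_2S_v$ are L-conjugate if and only if~\eqref{eq:Lconjcontact} holds.

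The step I expect to need the most care is the passage from coordinates with respect to $(S_u,S_v)$ to the tangent vectors of $f$. Under any reparametrization $D$, the pairs $(S_u,S_v)$ and $(f_u,f_v)$ are both multiplied by the Jacobian of $D$. Therefore, by Definitions~\ref{def:L-conj3} and~\ref{def:L-conj3-tangents}, $\bar a,\bar b$ are L-conjugate if and only if the vectors $a,b$ that are ordinarily conjugate to them satisfy~\eqref{eq:Lconjcontact}, with the same coefficient pairs $(a_1,a_2)$ and $(b_1,b_2)$ used for $S$ and for $f$. Since $\kappa_1,\kappa_2\neq 0$, the conjugate direction is uniquely determined: the conjugacy condition~\eqref{eq:conjf} forces $(a_1,a_2)\parallel(\kappa_2\bar a_2,-\kappa_1\bar a_1)$, and likewise $(b_1,b_2)\parallel(\kappa_2\bar b_2,-\kappa_1\bar b_1)$. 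This correspondence is linear and invertible, so non-parallel $\bar a,\bar b$ give non-parallel $a,b$, and the hypothesis of Theorem~\ref{th-II} is met. The scale factors are irrelevant because the condition is homogeneous.

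Finally, we substitute these expressions into~\eqref{eq:Lconjcontact}. This gives
$$(\kappa_1-r\kappa_1^2)\,\kappa_2^2\,\bar a_2\bar b_2+(\kappa_2-r\kappa_2^2)\,\kappa_1^2\,\bar a_1\bar b_1=0.$$
Dividing by $\kappa_1^2\kappa_2^2\neq0$ and using $(1-r\kappa_i)/\kappa_i=\rho_i-r$ yields exactly~\eqref{eq:conjcurv}. Both implications hold because every step is an equivalence.
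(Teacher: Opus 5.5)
Your proposal is correct and follows essentially the same route as the paper. The shared steps are:
\begin{itemize}
\item a principal parametrization normalized at the point;
\item the lift $S=(f+rn,r)$, $N=(n,1)$, which gives the coefficients \eqref{eq:2ndLform3};
\item Theorem~\ref{th-II}, which yields \eqref{eq:Lconjcontact};
\item the observation that $(S_u,S_v)$ and $(f_u,f_v)$ transform alike under reparametrization, which lets you pass to the ordinarily conjugate vectors via \eqref{eq:conjf}.
\end{itemize}
Your explicit solution $(a_1,a_2)\parallel(\kappa_2\bar a_2,-\kappa_1\bar a_1)$ and the division by $\kappa_1^2\kappa_2^2$ spell out the ``rewriting'' step that the paper leaves implicit, and your check that $\lla S_u,N\rra=\lla S_v,N\rra=0$ is a verification the paper omits.
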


\edit{We observe that for $r=0$, L-conjugacy becomes ordinary conjugacy, and for $r\to \infty$, it becomes orthogonality.}

This theorem suggests the following extension of our definition.

\begin{defn}[Self-L-conjugate tangents] \label{def:self-conj} \edit{With} the notation of Theorem~\ref{thm:L-conj3-principal},
the tangent vector $\bar a$ is called \textbf{L-conjugate} to itself %(or \emph{self-L-conjugate}) 
with respect to~$s$ if 
\edit{$(\edit{\rho_2} -r) \bar{a}_1^2  + (\edit{\rho_1} -r)  \bar{a}_2^2 =0$.}
\end{defn}

Equation \eqref{eq:conjcurv} implies %immediately proves 
the following properties of the principal
curvature directions in the context of L-conjugacy.

\begin{cor} %\MS{This breaks down at an umbilic point.}
The principal tangents at a surface point $f(u,v)$ are the only orthogonal L-conjugate tangents with respect to $s(u,v)$. They are also the only ones that are simultaneously L-conjugate with respect to two attached sphere congruences that have different sphere radii at the point. 
\end{cor}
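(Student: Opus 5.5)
The plan is to work directly from Theorem~\ref{thm:L-conj3-principal}, i.e., from condition \eqref{eq:conjcurv} in an orthonormal principal frame $(t_1,t_2,n)$ at $f(u,v)$. Because $f$ has no umbilics, $\rho_1\ne\rho_2$. The Corollary then reduces to two elementary statements about the bilinear form
\[
B_r(\bar a,\bar b):=(\rho_2-r)\bar a_1\bar b_1+(\rho_1-r)\bar a_2\bar b_2 .
\]
The first is that, among pairs of orthogonal non-parallel directions, only the pair $(t_1,t_2)$ satisfies $B_r=0$. The second is that, for $r\ne r'$, a pair of non-parallel directions satisfies both $B_r=0$ and $B_{r'}=0$ only if it is the pair $(t_1,t_2)$.

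For the first claim, I first check that $(t_1,t_2)$ works. Taking $\bar a=(1,0)$ and $\bar b=(0,1)$ gives $\bar a_1\bar b_1=\bar a_2\bar b_2=0$, so $B_r=0$ for every $r$. This already shows that the principal tangents are L-conjugate with respect to any attached congruence, which also covers the existence half of the second claim. For uniqueness, I parametrize an orthogonal pair as $\bar a=(\cos\phi,\sin\phi)$ and $\bar b=(-\sin\phi,\cos\phi)$. Then
\[
B_r(\bar a,\bar b)=-\sin\phi\cos\phi\,\bigl[(\rho_2-r)-(\rho_1-r)\bigr]=\tfrac12\sin 2\phi\,(\rho_1-\rho_2).
\]
The $r$ terms cancel, so this vanishes iff $\sin2\phi=0$, since $\rho_1\ne\rho_2$. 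That forces $\{\bar a,\bar b\}=\{t_1,t_2\}$ up to scaling.

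For the second claim, I subtract the two conditions. This gives $(r'-r)(\bar a_1\bar b_1+\bar a_2\bar b_2)=0$, so $\langle\bar a,\bar b\rangle=0$. The pair is therefore orthogonal and L-conjugate with respect to $s$, and the first claim shows it is the principal pair.

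The only step needing care is the degenerate case where $\rho_1$ or $\rho_2$ is infinite, i.e., a vanishing principal curvature. The hypotheses of Definition~\ref{def:L-conj3} exclude parabolic points, so both $\rho_i$ are finite and the computation above is valid as written. If one also wanted to avoid dividing by curvatures, the same argument goes through with \eqref{eq:Lconjcontact} and \eqref{eq:conjf} directly, but I do not expect this to be needed.
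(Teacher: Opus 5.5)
Your proof is correct and follows the paper's route: the paper states the corollary as an immediate consequence of \eqref{eq:conjcurv}, and your two computations are exactly the verification it leaves implicit. The first is the $r$-independent value $\tfrac12\sin 2\phi\,(\rho_1-\rho_2)$ on orthogonal pairs; the second is subtracting the two conditions for $r\ne r'$ to force $\langle\bar a,\bar b\rangle=0$. Your justification, that $\rho_1\ne\rho_2$ and both $\rho_i$ are finite because umbilic and parabolic points are excluded throughout, is also the right one.
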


\edit{\begin{remark} Recall that in the previous definitions, umbilic points were excluded. However, the notion of L-conjugacy can be easily extended to this case via~\eqref{eq:conjcurv}. In this case, \eqref{eq:conjcurv} reduces to $(\rho - r)(\bar{a}_1\bar{b}_1 + \bar{a}_2\bar{b}_2) = 0$, where $\rho:={\rho_1} = \edit{\rho_2}$. 
%When $\edit{\rho_1} = \rho_2 = \rho$, the principal curvature radii coincide $r_1 = r_2 = \rho$, and equation~\eqref{eq:conjcurv} reduces to $(\rho - r)(\bar{a}_1\bar{b}_1 + \bar{a}_2\bar{b}_2) = 0$. 
If $r \neq \rho$, this forces $\bar{a} \perp \bar{b}$, so every pair of orthogonal directions is L-conjugate. If $r = \rho$, the equation is satisfied for all directions, meaning every direction is L-conjugate to every other.
\end{remark}}

\begin{remark} \label{rem:channel}
    Equation \eqref{eq:Lconjcontact} relates the tangents of contact curves of iso-parameter channel
    surfaces %(to $u=const$ or $v=const$) 
    or equivalently iso-parameter tangent developable surfaces with the base surface $f$ in an L-conjugate parametrization $f(\lambda,\mu)$, i.e., the smooth limit
    of an L-net approximating $f$. This is useful for the initialization of
    an approximating L-net, if one wants to base it on an initial selection of the discrete contact curves; see Sec.~\ref{sec:approx}.
    \end{remark}

%\HP{We may add an analytical proof of Prop.~\ref{prop:Lconj3}, if co-authors find the
%current geometric derivation via a limit process of the discrete case insufficient.
%I avoided it, since the simplest derivation I know of uses conjugacy in the sense
%of Moebius geometry, that is planned for a paper with TU Wien}

We conclude this section with the following informal description of L-conjugate
tangents (for which we do not specify the required non-degeneracy assumptions):
%
%Since cone rulings in a discrete L-net have a double meaning as explained above \MS{We need to be more specific here}, we also have the following construction of L-conjugate tangents. 
%
%\begin{prop} \label{prop:Lconj3} \MS{Nondegeneracy assumptions. A proof of the proposition is also needed.}
   Given a curve $c$ on the surface $f$, %an \MS{oriented} surface 
   %$f(\R^2) \subset \R^3$, % with an attached family of tangent or-spheres,
  consider the developable surface $\Delta$ defined as envelope of tangent or-planes of $f$ along $c$,  and the channel surface $\Sigma$ defined as envelope of
  or-spheres of $s$ along $c$ \edit{(Fig.~\ref{fig:conj3})}. Then, at each point %$c(t)$ 
  of $c$, the ruling of $\Delta$ is
   L-conjugate to that principal tangent of $\Sigma$ which is orthogonal to its 
   characteristic circle. \edit{Cf.~Fig.~\ref{fig:dualcurv}}.
%\end{prop}

\edit{A side remark is that the surface $f$ together with its unit normal field defines a Legendre immersion \citep{musso2018}; thus, our setup corresponds to a Legendre immersion with a choice of sphere at each contact element, given by the attached sphere congruence.}

%%%%%%%%%%%%%%%%%%%%%%%%%%%%%%%%%%%%%%%%%%%%%%%%%%%%%%%%%%%%%%%%%%%%%%%%%%%%%%
\section{L-conjugacy and dual curvature radii of a surface with respect to a tangent sphere congruence} \label{sec:curvature}
%%%%%%%%%%%%%%%%%%%%%%%%%%%%%%%%%%%%%%%%%%%%%%%%%%%%%%%%%%%%%%%%%%%%%%%%%%%%%%

In this section, we discuss informally the close relation of the L-conjugacy relation \eqref{eq:conjcurv}
to a dual viewpoint of surface theory that is based on oriented tangent planes rather than
points. In particular, there is a relation to dual curvature radii which are a dual counterpart
to normal curvatures. We also discuss the various instances of self-L-conjugate (L-asymptotic) directions (Sec.~\ref{ssec:dualcurv}). This is
relevant for approximation, since we need to avoid L-asymptotic directions. 
Despite that, we briefly study L-asymptotic parameterizations in Sec.~\ref{ssec:L-asymp}, mainly
due to the appearance of a Laguerre-geometric formulation of principal symmetric nets introduced in \citep{s-nets-2020}. Finally, we address important
special cases of attached sphere congruences (Sec.~\ref{ssec:special}) that are used in Sec.~\ref{sec:approx}, where we \edit{then} discuss
approximation with L-nets.

%---------------------------------------------------------------------------
\subsection{Classification of contact elements of a surface with
respect to an attached sphere congruence} \label{ssec:dualcurv}
%-----------------------------------------------------------------------

\paragraph{Dual curvature radii} Let us briefly summarize some basic facts on the
mentioned dual viewpoint of surface theory in Euclidean 3-space. We view a surface $f(u,v)$ as
set of oriented tangent planes $p(u,v)$ and instead of curves on $f$ we consider developable
surfaces enveloped by one-parameter families of tangent planes $p$. At a given surface point $f(u,v)$ we consider a tangent $e$ and the cylinder $C(e)$ enveloped by those or-planes $p$ of $f$ that are parallel to $e$. In other words, $C(e)$ has rulings parallel
to $e$ and is tangent to $f$ along a curve (see Fig.~\ref{fig:dualcurv}). The dual curvature radius
$\rho^*(e)$ to the ruling direction $e$ is defined as curvature radius (inverse normal curvature)
of the cylinder $C(e)$ in direction orthogonal to $e$. It is the signed radius of a right 
circular cylinder that is in 2nd order contact with $C(e)$ along the ruling~$e$. 

\begin{figure}
     \begin{overpic}[width=0.45\linewidth]{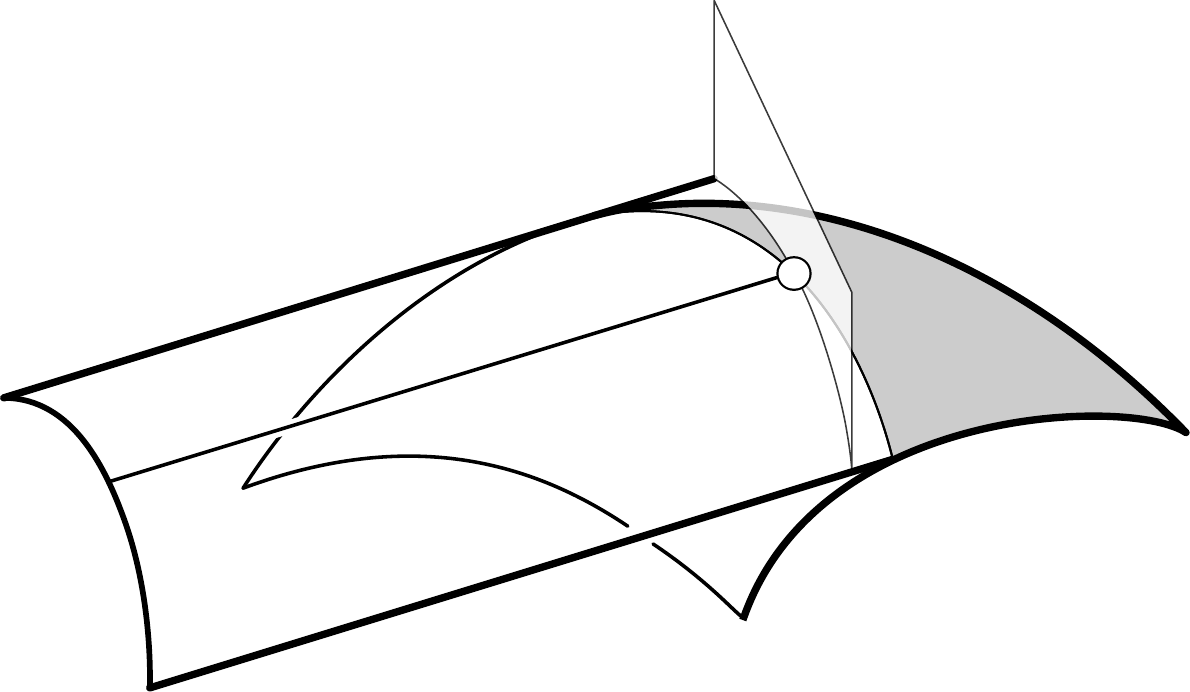}
    \put(48,25){\footnotesize\contour{white}{$e$}}
    \put(66,22){\footnotesize\contour{white}{$c$}}
    \put(61,45){\footnotesize\contour{white}{$\sigma$}}
    \put(13,8){\footnotesize\contour{white}{$C(e)$}}
  \end{overpic}
  \hfill
  \begin{overpic}[width=.5\linewidth]{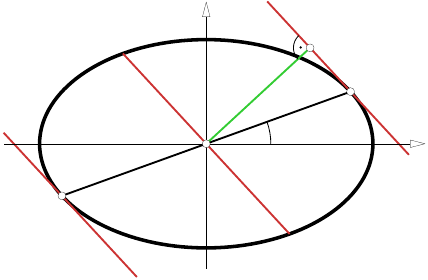}
    \put(65,32){\scriptsize\rotatebox{21}{$|\kappa_n|^{-1/2}$}}
    \put(51,40){\scriptsize\rotatebox{41}{$\sqrt{|\rho^*|}$}}
    \put(23,33){\scriptsize\rotatebox{0}{$|\kappa_1|^{-1/2}$}}
    \put(42,38){\scriptsize\rotatebox{90}{$|\kappa_2|^{-1/2}$}}
    \put(18,17){\scriptsize\rotatebox{21}{contact curve tgt.}}
    \put(52,29){\scriptsize\rotatebox{-47}{ruling direction}}
    \put(59,32){\scriptsize\rotatebox{0}{$\phi$}}
    \put(100,34){\small\rotatebox{0}{$t_1$}}
    \put(42,63){\small\rotatebox{0}{$t_2$}}
  \end{overpic}
\caption{Left: A cylinder $C(e)$ which is tangent to a given surface along a curve defines along its ruling $e$ a dual curvature radius $\rho^*$. It is the signed curvature
radius of the intersection curve $c$ with a plane  $\sigma$ orthogonal
to $e$. Right: The Dupin indicatrix at a surface point visualizes the
normal curvatures as distances $|\kappa_n|^{-1/2}$ of its points from the
center (contact point), and dual curvature radii via distances $\sqrt{|\rho^*|}$ of its tangents from the center. The parallel tangents are the
Dupin indicatrix of the  right circular cylinder that is in 2nd order contact
with the cylinder $C(e)$ along $e$. Images taken from \citep{s-nets-2020}. 
}
\label{fig:dualcurv}
\end{figure}

Fig.~\ref{fig:dualcurv} illustrates this at hand of the Dupin indicatrix at $f(u,v)$, which visualizes  normal curvatures $\kappa_n(\phi)$ for tangents that form the angle $\phi$ with the first principal tangent $t_1$. It is a radial diagram of
$|\kappa_n(\phi)|^{-1/2}$ and also visualizes dual curvature radii as explained in Fig.~\ref{fig:dualcurv}. 

Normal curvatures $\kappa_n(\phi)$ at a surface point satisfy Euler's formula
$$ \kappa_n(\phi)=\kappa_1\cos^2\phi +\kappa_2 \sin^2\phi.$$
Its dual
%The dual counterpart 
is Blaschke's formula for the dual curvature radius to the ruling \edit{direction} $t_1\cos \edit{\phi_e}  + t_2\sin \edit{\phi_e}$,
%$\phi$,
%
\be \rho^*(\phi_e)=\edit{\rho_2}\cos^2 \edit{\phi_e} + \edit{\rho_1}\sin^2 \edit{\phi_e}, \label{eq:blaschke} \ee
\edit{where $\rho_1:=\kappa_1^{-1}$ and $\rho_2:=\kappa_2^{-1}$ are the signed principal curvature radii of $f$ \edit{\cite[Proposition~2.8]{s-nets-2020}}.}

\paragraph{Self-L-conjugate tangents} According to Definition~\ref{def:self-conj}, %\eqref{eq:conjcurv}, 
a tangent vector
$\bar{a}=t_1\cos \edit{\phi_e}  + t_2\sin \edit{\phi_e}$ is
L-conjugate to itself with respect to an attached tangent sphere congruence $s$, if and only if
\be  (\edit{\rho_2}-r) \cos^2 \edit{\phi_e}  + (\edit{\rho_1}-r) \sin^2 \edit{\phi_e} =0, \label{eq:selfL}
\ee 
which, by \eqref{eq:blaschke}, expresses a vanishing difference $\rho^*(\edit{\phi_e})-r$ of dual curvature radius and sphere radius. These tangents are called \emph{L-asymptotic with respect to $s$}. They are real for $(\edit{\rho_1}-r)(\edit{\rho_2}-r)<0$ and symmetric
with respect to the principal directions, with which they form the angle $\phi_a$ given by
\be \tan \phi_a=\pm \sqrt{-\frac{\edit{\rho_2}-r}{\edit{\rho_1}-r}}. \label{eq:Lasymp} \ee
L-asymptotic tangents are fixed elements in the involutory projective map between L-conjugate
tangents at a surface point. 

\paragraph{L-characteristic tangents} L-conjugate tangents that are \edit{reflection-}symmetric %in the sense of reflection}
with respect to \edit{the} principal directions are called \emph{L-characteristic
tangents}. Their angles $ \pm \phi_c$ against the
first principal direction are computed via
\be \tan \phi_c=%\pm \sqrt{\frac{ \rho_{s,1}^*   }{\rho_{s,2}^*}}=
\pm \sqrt{\frac{\edit{\rho_2}-r}{\edit{\rho_1}-r}},\label{eq:L-char} \ee
where we applied~\eqref{eq:conjcurv} for $(a_1,a_2)=(b_1,-b_2)=(\cos\phi_c,\sin\phi_c)$. These directions are real for $(\edit{\rho_1}-r)(\edit{\rho_2}-r)>0$.

Just as we distinguish between
hyperbolic, parabolic, elliptic, and flat points in elementary surface theory, we have an analogous classification here. 
%
%It becomes even more obvious if 
To make it even more transparent, motivated by \eqref{eq:selfL}, we define \emph{dual curvature radii $\rho_s^*$ of a surface $f$ with respect to an attached sphere congruence $s$} as
\be    \rho_s^*:=\rho^*-r.      \label{eq:defduals}\ee
Clearly, these dual curvature radii with respect to  $s$ satisfy a generalized Blaschke formula
%in which we prefer to use the signed curvature radii $\rho_i=\kappa_i^{-1}$ of $f$, 
%
\be  \rho_s^*(\edit{\phi_e} )=(\rho_2-r)\cos^2 \edit{\phi_e} + (\rho_1-r)\sin^2 \edit{\phi_e}=
\rho_{s,1}^* \cos^2 \edit{\phi_e} + \rho_{s,2}^*\sin^2 \edit{\phi_e}. \label{eq:genblaschke} \ee
Note that the dual curvature radii $\rho_{s,1}^*, \rho_{s,2}^*$ with respect to  $s$ for rulings in 
first and second principal directions are 
$$ \rho_{s,1}^*=\rho_2-r,\  \rho_{s,2}^*=\rho_1-r.$$
Introduce the value
$$\Lambda:=\rho_{s,1}^*\rho_{s,2}^*.$$
%\AR{Since Laguerre transformations map contact elements to contact elements, 
It is more \edit{natural} to speak of \emph{contact elements} of the oriented surface (that is, pairs $(f(u,v),p(u,v))$ consisting of a surface point and the oriented tangent plane at the point), rather than surface points, \edit{because Laguerre transformations naturally act on contact elements rather than points.} 

We can now formulate the classification of surface contact elements with respect to an attached sphere congruence:
\begin{description}
\item[L-hyperbolic contact element ($\Lambda < 0$):] two real L-asymptotic tangents
are characterized by~\eqref{eq:Lasymp}.
\item[L-parabolic contact element ($\Lambda=0$ and exactly one
%principal curvature radius 
$\rho_{s,i}^* \ne 0$):] a single L-asymptotic tangent that is also principal; any tangent direction is L-conjugate to that direction. 
\item[L-elliptic contact element ($\Lambda>0$):] no real L-asymptotic tangent, but a pair of
principal symmetric and L-conjugate tangents given by (\ref{eq:L-char}).
\item[L-flat contact element ($\rho_{s,1}^*=\rho_{s,2}^*=0$):] the congruence sphere is in second order contact with $f$. This is only possible at umbilic points, which were excluded in our previous discussion; however, the notion of L-conjugacy can be easily extended to this case via~\eqref{eq:conjcurv}. 
\end{description}

%---------------------------------------------------------------------------
\subsection{L-asymptotic developables and L-asymptotic parameterizations} \label{ssec:L-asymp}
%--------------------------------------------------------------------------

Since points
are not preserved under Laguerre transformations, the same is true for curves
on surfaces. Instead of integrating special directions such as principal directions
or asymptotic directions to curves and further to isoparameter curves of special
parameterizations, we now have to view tangents as rulings of developable
surfaces that are tangent to $f$. We call them  \emph{tangent developables}. \edit{Recall that $f$ has no parabolic and umbilic points.}

For principal tangents, this is simple. A principal developable is just a developable surface that is tangent to $f$ along
a principal curvature line. 
If the curvature line follows the first principal
directions, the rulings of the principal developable are orthogonal to them and
thus parallel to the second principal direction. The arising principal parameterizations
constitute orthogonal smooth L-conjugate nets.

We now focus on L-asymptotic tangents (rulings) $e_a$  and integrate them to \emph{L-asymptotic 
developables}. The tangents of their contact curves with $f$ are conjugate to
the rulings $e_a$ in the usual sense, given by equation \eqref{eq:conjf}. At
each ruling of an L-asymptotic developable, the radius $r$ of the sphere of the congruence equals the dual curvature radius
$\rho^*$ of the ruling $e_a$. This sphere of radius $\rho^*$ is known as \emph{Mannheim
sphere} \citep{blaschke+1973} and is a \edit{dual} counterpart to the well-known Meusnier sphere \citep{blaschke+1973}. While the latter contains the osculating
circles of all surface curves through a given line element, the Mannheim sphere is
enveloped by all oriented rotational cones $\Gamma(e_a)$ which share the ruling $e_a$, have their vertex $v_c$ on 
$e_a$ and are in second order plane contact with $f$. Such an osculating cone $\Gamma(e_a)$ can
be constructed as follows: All tangent planes of $f$ through $v_c$ envelope a general cone
and $\Gamma(e_a)$ is in second order contact with that cone along the entire
ruling $e_a$ (see Fig.~7 in~\citep{s-nets-2020}). 

An \emph{L-asymptotic parameterization} is a parameterization $f(u,v)$ such that the
isoparameter tangent developables are L-asymptotic developables. The rulings
of the developables are symmetric with respect to the principal directions and at each
contact element, the common Mannheim sphere of the two isoparameter developables
equals the congruence sphere $s(u,v)$. This is \edit{related to} smooth
asymptotic parameterizations, where the common osculating plane of the isoparameter
curves equals the tangent plane of the surface. 
An L-asymptotic parameterization is the same as a smooth $S^*$-net introduced
in \citep{s-nets-2020}. %This is summarized and extended to $\R^{3,1}$ as follows.

In the cyclographic model, the sphere congruence $s(u,v)$ appears as a 
    surface $S(u,v) \subset \R^{3,1}$, \edit{such that all} isoparameter curves 
    have osculating planes that lie in the attached isotropic tangent hyperplanes $P(u,v)$.    
Indeed, since the tangents of isoparameter lines are self-L-conjugate, by~\eqref{eq:L-conj4}, we 
  get $L_P=N_P=0$. By \eqref{eq:normalP} and \eqref{eq:2ndformP}, this
  is equivalent to $\lla S_{u},N\rra= \lla S_{uu},N\rra =0$ and
  $\lla S_{v},N\rra= \lla S_{vv},N\rra =0$, which %together with $\lla N,N\rra =0$ expresses 
  means that the osculating planes of the isoparameter curves lie in 
  the attached isotropic hyperplanes.

Three consecutive or-planes in a discrete model of a 
developable surface determine an or-cone which is in oriented contact
with them. It is the discrete version of the osculating cone. 
This leads to the discrete versions of L-asymptotic parameterizations. They are not L-nets, but
counterparts to the well-studied A-nets \citep{bobenko-2009-ddg}.
One can view them as nets of or-planes, where we obtain exactly the discrete $S^*$-nets of  \citep{s-nets-2020}, 
or as nets of or-spheres, with the
following characterization:

\textit{
%\begin{prop} \MS{This is probably a definition, not a proposition.}
    A discrete L-asymptotic parameterization is a net of or-planes such
    that each or-plane $p_{ij}$ and its four %direct 
    neighbors $p_{i-1,j}$,
    $p_{i+1,j}$, $p_{i,j-1}$, $p_{i,j+1}$, are in or-contact to a
    common or-sphere $s_{ij}$. Viewing the structure as a net of or-spheres, each or-sphere $s_{ij}$ and its four %direct 
    neighbors $s_{i-1,j}$,
    $s_{i+1,j}$, $s_{i,j-1}$, $s_{i,j+1}$, are
    in or-contact to a common or-plane. 
%\end{prop}
}

Especially the second formulation is an obvious \edit{generalization} of discrete
A-nets; there, the spheres are points.

%--------------------------------------------------------
\subsection{Important special cases of attached sphere congruences} \label{ssec:special}
%---------------------------------------------------------------

\paragraph{Constant sphere radius} We start the discussion with a discrete
L-net \edit{with constant signed radius} $r$. At a \edit{plane} $p_{ij}$ of the L-net, we have four or-spheres of the same radius that are tangent to $p_{ij}$. Thus,  the centers of the four 
spheres lie in a plane that is parallel to $p_{ij}$ and
at signed distance $r$.
%Thus, \emph{an L-net \edit{with} a constant sphere radius is an offsetof a $Q^*$-net}. Switching the meaning of vertices and faces, 
\edit{This is the same as
an offset of a Q-net}. Clearly, all edge cones are cylinders of radius \edit{$|r|$}. 
\edit{Even in this simple case, our optimization approach can be preferable to working with the offset directly, as it can avoid singularities and improve approximation.}

\paragraph{Congruence of mid spheres} The mid sphere $s_m$ of a surface $f$ at a contact element
is the tangent or-sphere whose center is the midpoint of the principal curvature centers. Hence, its radius equals $r_m=(\rho_1+\rho_2)/2$. Being an average of the principal
spheres, the mid sphere provides a good local approximation of $f$. Moreover, like
the principal spheres, mid spheres are a concept of Laguerre geometry. A Laguerre transformation
maps mid spheres to mid spheres, in general changing their radii. 

If we attach the congruence of
mid spheres $s_m(u,v)$ to a surface $f$, we obtain as dual principal curvature
radii with respect to $s_m$,
$$  \rho_{s,1}^*=\rho_2-r_m=\frac{\rho_2-\rho_1}{2},\qquad  \rho_{s,2}^*=\rho_1-r_m=
\frac{\rho_1-\rho_2}{2}. $$
%
%Excluding the trivial case of a sphere $f$, 
At non-umbilic points, we have $\Lambda < 0$
and thus \emph{all contact elements are L-hyperbolic}. Due to \eqref{eq:Lasymp}, $\phi_a=\pm \pi/4$, hence the \emph{L-asymptotic tangents are orthogonal, bisecting the principal tangents}. L-conjugacy \eqref{eq:conjcurv} simplifies to
\be \bar{a}_1\bar{b}_1  -  \bar{a}_2\bar{b}_2 =0. 
\label{eq:conjcurvm}\ee
This expresses symmetry with respect to the L-asymptotic tangents. We see here 
relations to Euclidean minimal surfaces, which appear as that special case where
the mid spheres have radius zero and are the points of $f$. This is also the 
case in which L-conjugacy is the same as ordinary conjugacy. 

Using the attached congruence of mid spheres, we have frame fields of L-conjugate directions with a constant angle. They may provide interesting alternatives
to the principal directions.

\paragraph{Surfaces with only L-parabolic contact elements} At an L-parabolic
point, one principal curvature radius equals the radius $r$ of the congruence 
sphere. Hence, every surface $f$ has only L-parabolic contact elements if we attach
to it a congruence of principal spheres, say, the ones with radius $\rho_1$. Then,
any direction is L-conjugate to the second principal direction. This can provide
interesting types of flexibility for approximation \edit{and deserves further study}. 

%%%%%%%%%%%%%%%%%%%%%%%%%%%%%%%%%%%%%%%%%%%%%%%%%%%%%%%%%%%%%%%%%%%%%%%%%%%%%%
\section{Surface approximation with L-nets} \label{sec:approx}
%%%%%%%%%%%%%%%%%%%%%%%%%%%%%%%%%%%%%%%%%%%%%%%%%%%%%%%%%%%%%%%%%%%%%%%%%%%%%%

\begin{figure}
    \centering
    \begin{overpic}[width=1\linewidth]{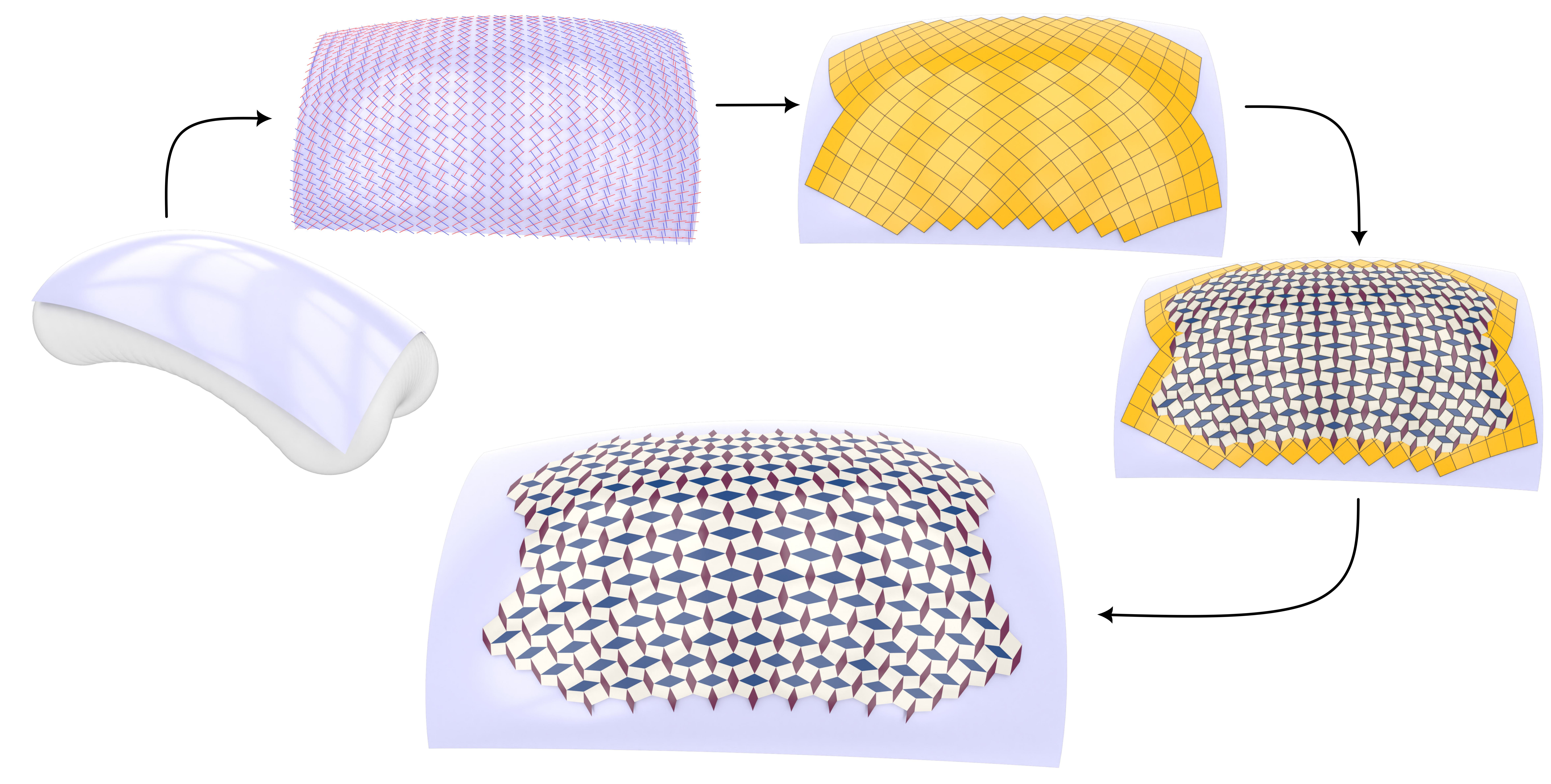}
    \put(5,20){\scriptsize Input surface}
    \put(10,17){\scriptsize $+$ }
    \put(4.5,15){\scriptsize sphere congruence }
    \put(0,47){\scriptsize  pseudo-L-conjugate}
    \put(4.7,44){\scriptsize  computation}
    \put(45,46){\scriptsize  Remeshing}
    \put(85,45){\scriptsize  Initialization}
    \put(80,8){\scriptsize Optimization}
    \end{overpic}
    \caption{Algorithm flow. Starting from an input surface with an attached sphere congruence, we compute pseudo-L-conjugate directions and remesh along these directions to obtain a quad mesh. We then initialize the L-net using planes at the vertices of the quad mesh, and refine it through a final optimization step. }
    \label{fig:flow}
\end{figure}

To approximate a surface with an L-net, we start with an input surface $f$ of positive Gaussian curvature and a signed radius function $r$, both defined on the same parameter domain and represented as tensor product B-splines.

An overview of our approach is as follows:
We use the B-spline representation to compute a smooth L-conjugate direction field over sampled surface points. This direction field guides a quad remeshing procedure, leading to a quadrilateral mesh aligned with the L-conjugate directions. This aligned quad mesh serves as the initialization for our L-net construction algorithm.
Finally, we refine the initial approximate L-net through numerical optimization using the Levenberg-Marquardt \citep{lm} algorithm to 
fulfill the constraints for an L-net with high accuracy while maintaining close
proximity to the reference surface and fairness (Fig.~\ref{fig:flow}). 

\subsection{Computation of L-conjugate directions}

\paragraph{Principal frames} 
The input surface $f(\lambda,\mu)$ is represented as a tensor 
product B-spline surface. At each sample point $f({\lambda}_i, 
{\mu}_j)$, we compute a principal frame $(t_1, t_2, 
n)$ and \edit{signed principal curvature radii $\rho_1$ and $\rho_2$}, where ${\lambda}_i$ and ${\mu}_j$ are obtained by 
uniform subdivision of the $\lambda$ and $\mu$ parameter 
domains. %To maintain consistency with the Laguerre geometry convention for oriented contact, we ensure that the surface has inward-pointing normals,
We choose $n$ inward-pointing, 
 so that 
\edit{$\rho_1,\rho_2\ge 0$}. 
%and all oriented spheres have positive signed radii. \MS{Unclear: the signed radii are not necessarily positive, even if the normals point inward the surface.}

\begin{figure}
    \centering
    \includegraphics[width=1\linewidth]{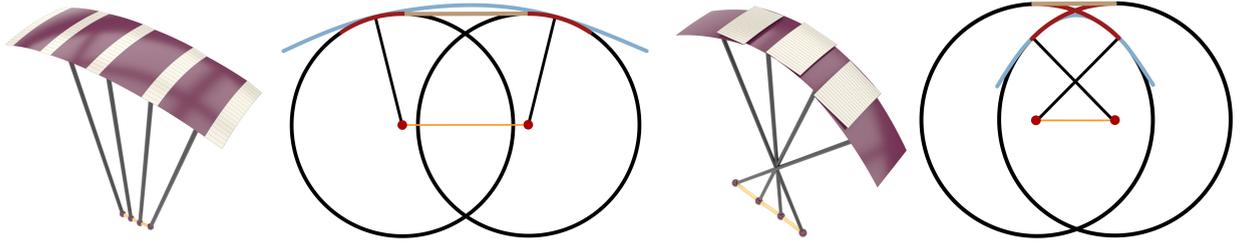}
    \caption{Spherical panels strips of an L-net aligned with the principal directions and their 2D counterparts. Black lines connect contact points on the 
    approximated surface/curve to the sphere/circle centers, 
    shown as red  dots, which are joined by orange lines. In the 2D diagrams, 
    brown lines represent common tangent 
    lines of both circles (the 2D counterparts of edge cones), red arcs correspond to the 2D counterparts of spherical panels, and 
    light-blue curves represent the approximated curve.  (Left) A radius function 
    within the safe range produces a smooth strip of spherical panels and cone 
    strips; (Right) A radius function %chosen with  
    $r > \min\{\rho_1, 
    \rho_2\}$ produces swallowtail singularities.}
    \label{fig:radiusImage}
\end{figure}

\paragraph{Signed-radius function}
The attached sphere congruence is represented by the signed radius function $r$. 
For reasonable approximation of the given surface, we require it to be positive. 
In the discrete setting, the choice of radius affects the overall appearance of the approximation. In particular, when approximating a surface $f$, singularities emerge when the sphere radii exceed the radii of the principal curvature spheres at the corresponding contact points.  Specifically, if the sphere radius satisfies $r > \min \{\rho_1, \rho_2\}$, swallow-tail singularities (containing regression curves) appear. For aesthetically pleasing approximations, we use $r < \min \{\rho_1, \rho_2\}$. See Fig.~\ref{fig:radiusImage}.

\paragraph{L-conjugate directions} 
For our initialization algorithm, we consider a discrete version of 
the net of contact curves of isoparameter tangent developable surfaces in 
a smooth L-conjugate net defined on $f$. Equivalently, this curve net is formed by the contact
curves of isoparameter channel surfaces in the attached sphere congruence.
This is not
fully Laguerre-geometric, but turned out to be easier to implement than
the plane-based definition of L-conjugate directions. Thus, we
are not initially computing L-conjugate directions, but directions that satisfy equation~\eqref{eq:Lconjcontact} (cf. Remark~\ref{rem:channel}). We call them \emph{pseudo-L-conjugate}. For their computation, we first prescribe one direction by specifying an angle $\theta$ against the first principal direction $t_1$. This angle is 
a design parameter (see Fig.~\ref{fig:bigpanels}). It may be set as a constant across all sampled points, or it can be defined as an angle field varying over $f$. However, a smoothly varying angle field is required
to arrive at a proper discrete version of a smooth L-net. The second direction is computed by solving equation~\eqref{eq:Lconjcontact}. In this way,
we obtain a pseudo-L-conjugate frame field. %via ordinary conjugacy, as explained in Sec.~\ref{ssec:L-conj}. 
We point out that the actual
initialization of the L-net, as explained below, indeed respects the associated L-conjugate frame
field, which is seen in the sides of the planar quadrilaterals (see Fig.~\ref{fig:alignment}).

\subsection{Quad remeshing and initialization}

\paragraph{Quad remeshing} Once the frame field is computed, we 
proceed with an anisotropic quad remeshing procedure 
implemented in the igl library \citep{jacobson:2018:libigl}, where LibQEx \citep{libQEX}
is used to extract the quad mesh when singularities 
are present. The extracted quad mesh $Q$ has edges aligned with the directions
of the frame field.

\begin{figure}
    \centering
    \includegraphics[width=1\linewidth]{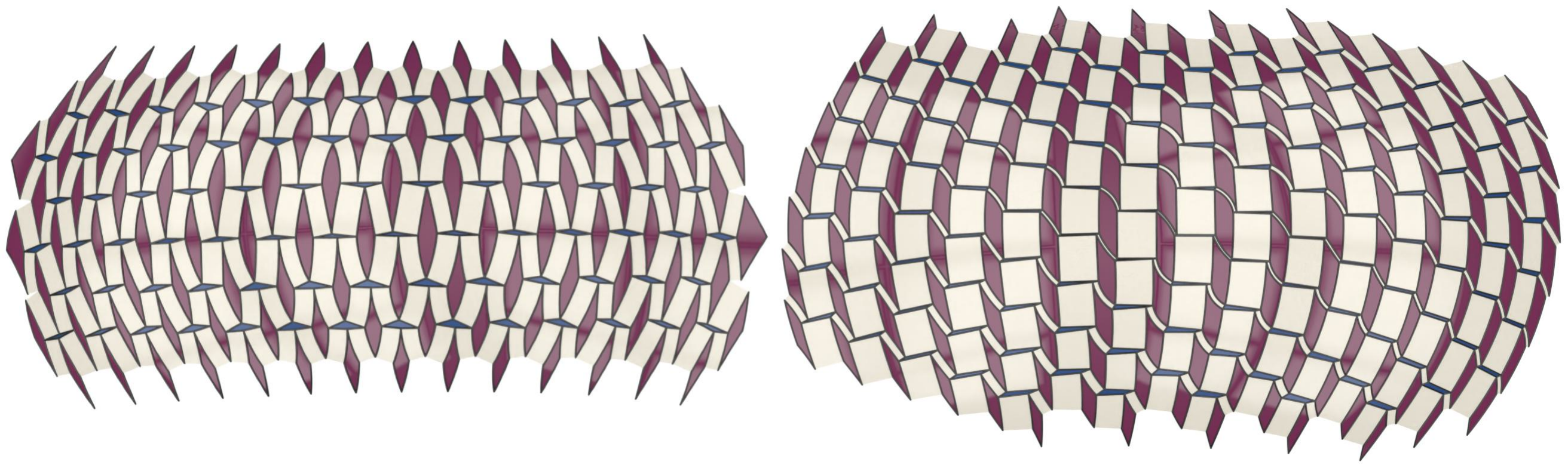}
    \caption{Two L-nets constructed from the same attached sphere congruence but with different initial L-conjugate directions. The initialization angle $\theta$ between the first L-conjugate direction and the first principal direction is set to constant value $\theta = \pi/3$ (left) and $\theta = \pi/6$ (right). }
    \label{fig:bigpanels}
\end{figure}

\paragraph{Initialization} To initialize an L-net, we 
can interpret the vertices of $Q$ as contact points 
of oriented tangent planes or as contact points 
of tangent oriented spheres. 

\edit{The first approach, placing planes at the vertices of $Q$, preserves the mesh extent near the boundary: when initializing with or-spheres instead, the boundary spheres are not used for panel generation, resulting in a shrunken L-net. However, the differences are visually negligible} (see Fig.~\ref{fig:Different Initialization}). Following 
the first interpretation, our L-net $L=(\SV, 
\CE, \FP)$ is computed as follows: 
(i) $Q$ and $L$ have \edit{dual combinatorics}, so for 
each vertex $v_i \in Q$, the corresponding 
plane $p_i \in \FP$ is the plane with normal 
$n_i$ (the normal of the reference surface $f$ at vertex $v_i$) and signed distance from the origin 
$h_i = - \langle v_i, n_i\rangle$; (ii) for each quad $q \in Q$, 
we define an initial sphere $s_f=
(c_f, r_f)$ by projecting the barycenter of $q$ onto 
the reference surface as $b_q = f(\lambda_q, \mu_q)$; the radius of the sphere is $r_f = r(\lambda_q, \mu_q)$ and its center is $c_f = b_q + r_f n(\lambda_q, \mu_q)$; (iii) we compute the cone strip for each pair of adjacent spheres. % $s_i \sim s_j$.
In general, this initial configuration is not an L-net since the spheres and planes are not exactly in oriented contact.
However, it provides a good initial guess for the final optimization algorithm to obtain an L-net.

\begin{figure}
    \centering
    \includegraphics[width=0.9\linewidth]{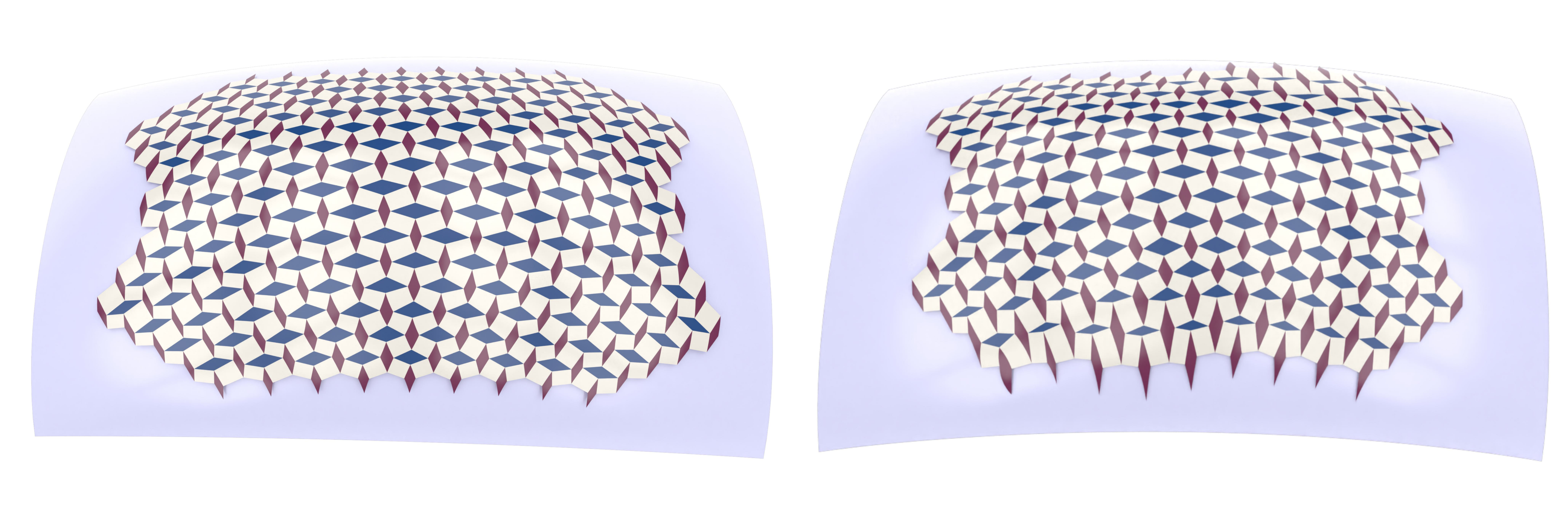}
    \caption{L-nets initialized interpreting vertices of the quad mesh $Q$ as contact points of oriented tangent planes (left) and tangent oriented spheres (right). }
    \label{fig:Different Initialization}
\end{figure}

%\paragraph{L-net data structure} [This part can be skipped if found irrelevant]
%We use a Half-edge data structure to handle the combinatorics of the L-net where oriented planes are considered as vertices, oriented spheres as faces and oriented cone strips as edges. [Details of the visualization and data structure but basically is just a Half-edge structure the details are more computational and probably not too relevant]

\subsection{Final optimization}

\paragraph{Variables} 
In the final optimization, our goal is to \edit{approximate $f$ with an L-net $L=(\edit{\SV}, \CE, \edit{\FP})$ with high accuracy}. The whole structure depends only on the or-spheres $s_{ij} = (c_{ij},r_{ij}) \in \edit{\SV}$ and or-planes $p_{ij} = (n_{ij},h_{ij}) \in \edit{\FP}$, so the variables in our Levenberg-Marquardt algorithm are the sphere centers $c_{ij}$, signed radii $r_{ij}$, plane normals $n_{ij}$, and signed distances $h_{ij}$ from the origin. Plane normals are constrained to be unit; we express this by vanishing energy 
\begin{equation}
    E_{\mathrm{unit}} = \sum_{p_{ij} \in \edit{\FP} } ( \|n_{ij}\|^2 -1)^2.
\end{equation}

\paragraph{Oriented contact} We need to ensure that the or-spheres and their neighboring or-planes are in oriented contact. This is expressed by vanishing energy %$E_{OC}$,
\begin{equation}
    E_{\mathrm{oc}} = \sum_{p_{ij} \in \edit{\FP}} \sum_{a, b = 0}^{1} \left( \langle c_{i+a,\ j+b}, n_{ij} \rangle + h_{ij} - r_{i+a,\ j+b} \right)^2.
\end{equation}
% \paragraph{L-net skeleton ?} [Not final version]
% An L-net is a Laguerre geometric object; however, to achieve aesthetically pleasing results, we must control the Euclidean elements of the L-net. The edge  network of an L-net consists of two  discrete families of polylines, each formed by alternating straight line segments and circular arcs that we refer to as the skeleton of the L-net. 

\paragraph{L-net Fairness}

\edit{As is well known in optimization, the energy needs to include fairness terms that penalize abrupt changes in the directions of the patch boundaries.
}

%\edit{By \emph{fairness} of an L-net we mean that the alternating straight segments and circular arcs forming its boundary curves change direction smoothly — without abrupt transitions — as detailed below.}

\begin{wrapfigure}[8]{r}{0.32\linewidth}
\vspace{-0.5cm}
\hspace{-0.2cm}
 \includegraphics[width=0.85\linewidth]{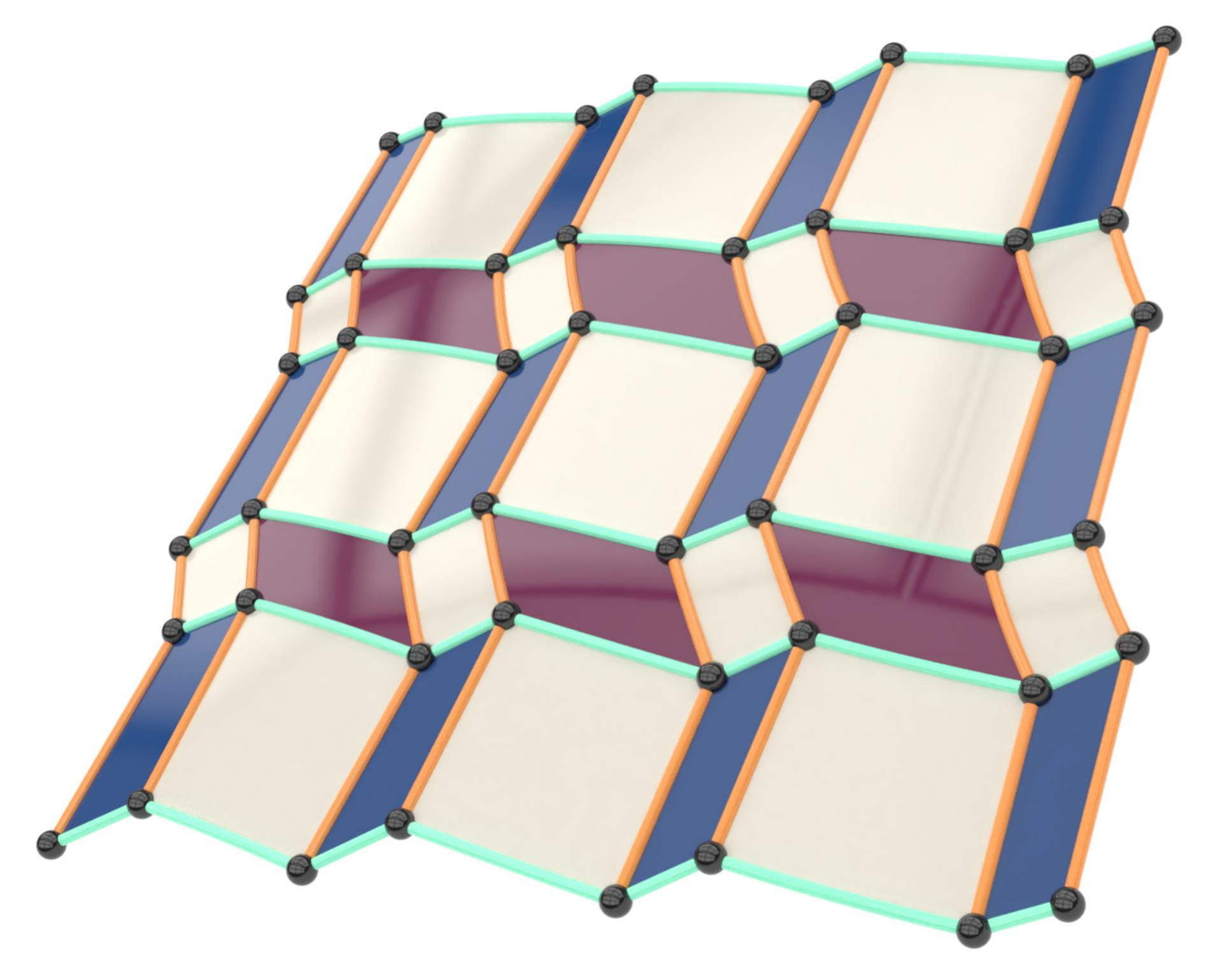}
\end{wrapfigure}
The boundary curves of the elements of an L-net form two families of
piecewise-smooth curves (see the orange and green curves in the inset), each consisting of alternating
straight line segments and circular arcs meeting at the contact points between 
or-spheres and or-planes. They follow the pattern $\ell_1, \gamma_1, \ell_2, 
\gamma_2, \ldots$, where $\ell_i$ denotes a straight segment and $\gamma_i$ a 
circular arc (Fig.~\ref{fig:curvesLnet}). To preserve the structure of L-nets, we impose 
constraints independently on each subsequence: consecutive straight segments 
$\ell_i, \ell_{i+1}$ are constrained regardless of the arc~$\gamma_i$ between 
them, and similarly for consecutive arcs~$\gamma_i, \gamma_{i+1}$.

\begin{figure}
\begin{center}
    \begin{overpic}[width=0.55\linewidth]{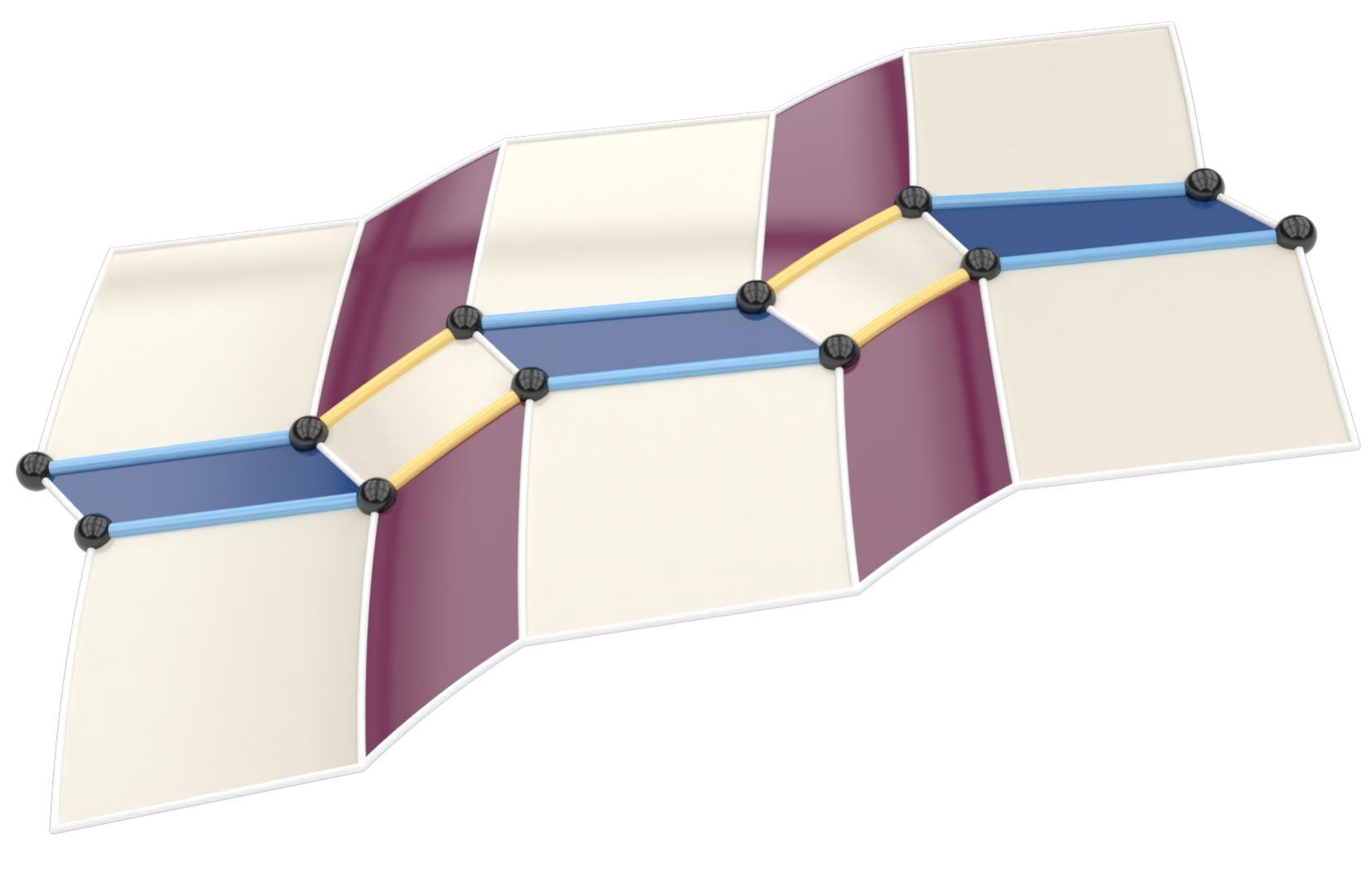}
    \put(17,22){\contour{white}{$\ell_1$}}
    \put(32,27){\contour{white}{$\gamma_1$}}
    \put(50,33){\contour{white}{$\ell_2$}}
    \put(67,38){\contour{white}{$\gamma_2$}}
    \put(80,42){\contour{white}{$\ell_3$}}
    \end{overpic}
    \caption{Piecewise-smooth curves alternating pattern between (light blue) straight segments $\ell_i$ and (yellow) circular arcs $\gamma_i$ along a strip.}
    \label{fig:curvesLnet}
\end{center}
\end{figure}

Aiming at an appropriate fairness term for the piecewise smooth curves,
we first need to express their vertices.
To constrain the straight segments, we consider three consecutive \edit{or-spheres} $s_i,s_j,s_k$ of signed radii $r_i,r_j,r_k$ and or-planes $p_0,p_1,p_2,p_3$ with normals $n_0,n_1,n_2,n_3$, where $p_0,p_3$ are in oriented contact 
with $s_i,s_j$ and $p_1,p_2$ are in oriented contact with %the spheres 
$s_j,s_k$ (see Fig.~\ref{fig:fairness}, left).  In this arrangement, we have two pairs of consecutive straight segments %$(\ell_i, \ell_{i+1}), \ (\bar \ell_i, \bar \ell_{i+1})$ that form part of the edges of the planes $p_0,p_1$ and $p_3,p_4$ respectively, where the endpoints of the straight segments are 
$\ell_i = (a_0,a_1)$, $\ell_{i+1} = (a_2,a_3)$, $\bar \ell_i = (b_0,b_1)$, and $\bar \ell_{i+1} = (b_2,b_3)$ with the endpoints %correspond to 
at the contact points between the or-spheres and or-planes: %given by 

\begin{align*}
    a_0 &= c_i - r_i  n_0, & b_0 &= c_i - r_i  n_3, \\
    a_1 &= c_j - r_j  n_0, & b_1 &= c_j - r_j  n_3, \\
    a_2 &= c_j - r_j  n_1, & b_2 &= c_j - r_j  n_2, \\
    a_3 &= c_k - r_k  n_1, & b_3 &= c_k - r_k  n_2.
\end{align*}
These segments are sides of the \edit{quadrilateral faces} contained in the planes $p_0$, $p_1$, $p_3$, and $p_2$, respectively.
The fairness term for straight segments is based on the idea that vectors 
$a_1-a_0$ and $a_3-a_2$ should be close to each other, and likewise for the $b$'s. This is expressed by a small value of the energy 
\begin{equation}
    E_{\ell\mathrm{fair}} = \sum_{s_i,s_j,s_k \text{  consecutive}}\left( \|a_1 - a_0 - a_3 + a_2\|^2 + \|b_1 - b_0 - b_3 + b_2\|^2\right).
\end{equation}

\begin{figure}
    \centering
    \begin{tabular}{l r}
              
\begin{overpic}[width=0.45\linewidth]
        {{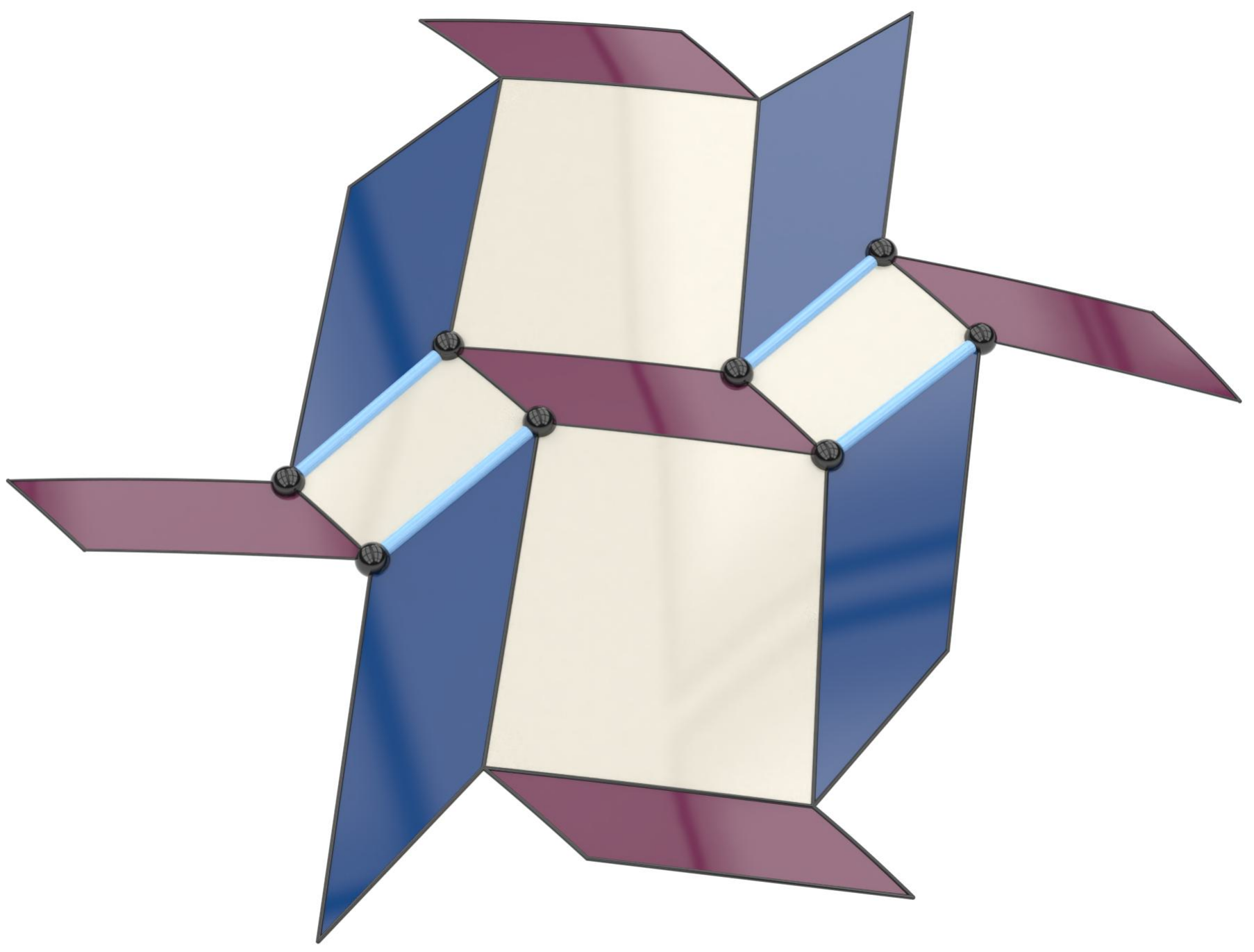}}
\put(24, 43){\scriptsize \contour{white}{$b_0$}}
\put(40, 50){\scriptsize \contour{white}{$b_1$}}
\put(57, 50){\scriptsize \contour{white}{$b_2$}}
\put(73, 55){\scriptsize \contour{white}{$b_3$}}
\put(30, 27.5){\scriptsize \contour{white}{$a_0$}}
\put(44, 37){\scriptsize \contour{white}{$a_1$}}
\put(65, 34){\scriptsize \contour{white}{$a_2$}}
\put(80, 43){\scriptsize \contour{white}{$a_3$}}
\put(12, 36){\contour{white}{$s_i$}}
\put(52, 42){\contour{white}{$s_j$}}
\put(86, 48){\contour{white}{$s_k$}}
\put(31, 21){\color{white}{$p_0$}}
\put(30, 54){\color{white}{$p_3$}}
\put(68, 27){\color{white}{$p_1$}}
\put(63, 58){\color{white}{$p_2$}}
    \end{overpic}  
    &
    \begin{overpic}[width=0.5\linewidth]
        {{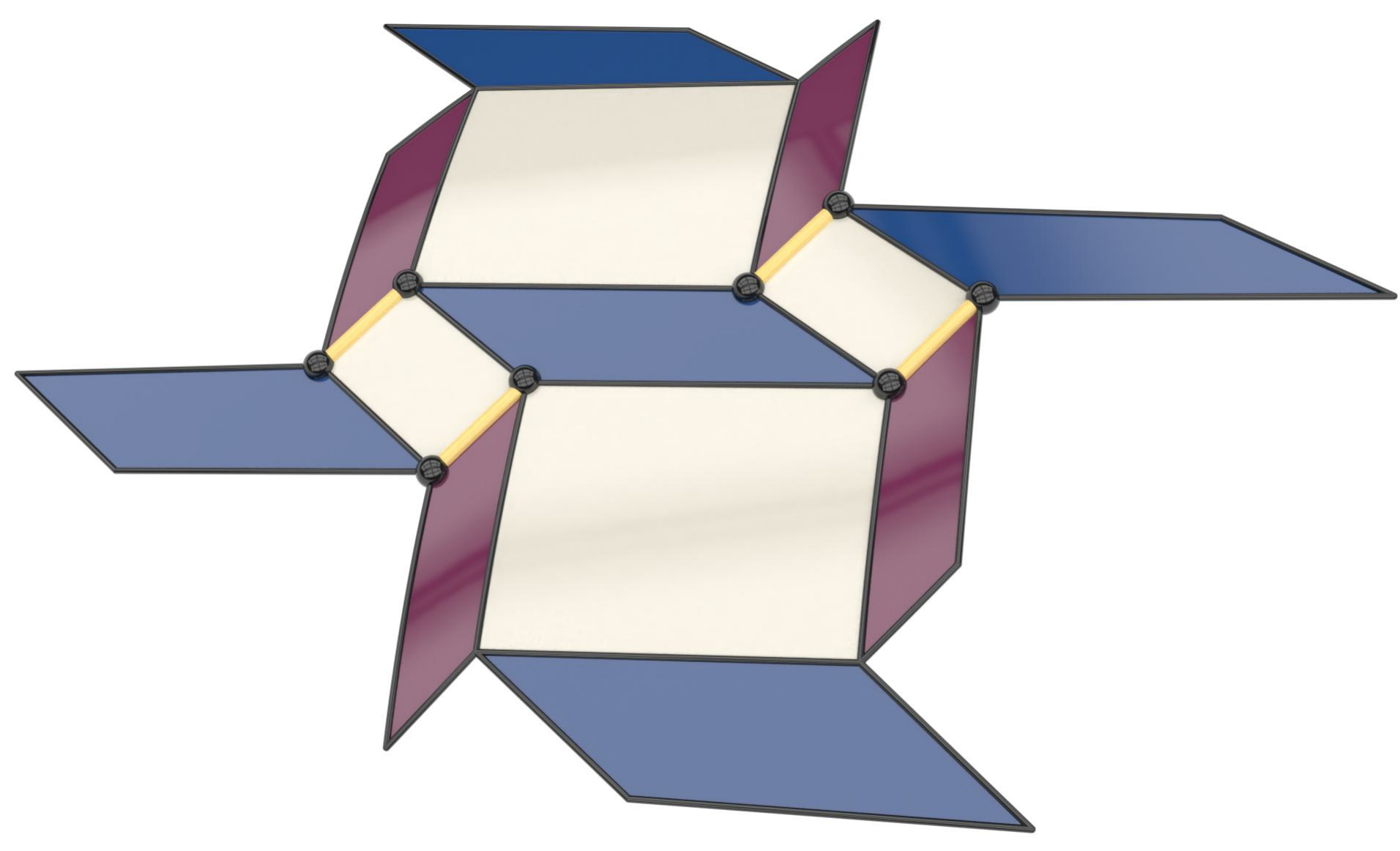}}
    \put(20, 36){\scriptsize \contour{white}{$\beta_0$}}
    \put(31, 42){\scriptsize \contour{white}{$\beta_1$}}
    \put(50, 43){\scriptsize \contour{white}{$\beta_2$}}
    \put(61, 47){\scriptsize \contour{white}{$\beta_3$}}
    \put(28, 24){\scriptsize \contour{white}{$\alpha_0$}}
    \put(36, 30){\scriptsize \contour{white}{$\alpha_1$}}
    \put(60, 30){\scriptsize \contour{white}{$\alpha_2$}}
    \put(70, 35){\scriptsize \contour{white}{$\alpha_3$}}
    \put(15, 30){\contour{white}{$p_i$}}
    \put(45, 36){\contour{white}{$p_j$}}
    \put(80, 42){\contour{white}{$p_k$}}
    \put(30, 17){\small \color{white}{$s_0$}}
    \put(64, 25){\small \color{white}{$s_1$}}
    \put(56.5, 51){\small \color{white}{$s_2$}}
    \put(27, 46){\small \color{white}{$s_3$}}
\end{overpic}  
\end{tabular}
    \caption{%Illustration of t
    The notation for the fairness energy terms $E_{\ell \mathrm{fair}}$ (left) and $E_{\gamma \mathrm{fair}}$ (right). }
    \label{fig:fairness}
\end{figure}

%We concentrate on the endpoints of the circular arcs  $\gamma_i = (\alpha_0,\alpha_1)$, $\gamma_{i+1} = (\alpha_2,\alpha_3)$, $\bar \gamma_i = (\beta_0,\beta_1)$ and $\bar \gamma_{i+1} = (\beta_2,\beta_3)$ given by 

\edit{This energy is well-defined even if the or-planes and or-spheres are not exactly in oriented contact, and its choice is motivated by the discrete approximation of gradually changing rulings.}

To constrain the circular arcs, we proceed in the same way. Consider 
three consecutive or-planes $p_i,p_j,p_k$ and or-spheres $s_0,s_1,s_2,s_3$,
where $s_0,s_3$ are in oriented contact with $p_i,p_j$ and $s_1,s_2$ are 
in oriented contact with $p_j,p_k$ (see Fig.~\ref{fig:fairness}, right). In this arrangement, we have two 
pairs of consecutive circular arcs %segments $(\gamma_i, \gamma_{i+1}), \ (\bar \gamma_i, \bar \gamma_{i+1})$ 
$\gamma_i = (\alpha_0,\alpha_1)$, 
$\gamma_{i+1} = (\alpha_2,\alpha_3)$, $\bar \gamma_i = (\beta_0,\beta_1)$, and 
$\bar \gamma_{i+1} = (\beta_2,\beta_3)$ with the endpoints
\begin{align*}
    \alpha_0 &= c_0 - r_0 n_i, & \beta_0 &= c_3 - r_3 n_i, \\
    \alpha_1 &= c_0 - r_0 n_j, & \beta_1 &= c_3 - r_3 n_j, \\
    \alpha_2 &= c_1 - r_1 n_j, & \beta_2 &= c_2 - r_2 n_j, \\
    \alpha_3 &= c_1 - r_1 n_k, & \beta_3 &= c_2 - r_2 n_k.
\end{align*}
These arcs are some of the boundary curves of the \edit{spherical panels} $s_0$, $s_1$, $s_3$, and $s_2$.
The fairness term for circular arcs is expressed by a small value of the energy
\begin{equation}
    E_{\gamma\mathrm{fair}} = \sum_{p_i,p_j,p_k \text{  consecutive}} \left(\|\alpha_1 - \alpha_0 - \alpha_3 + \alpha_2\|^2 + \|\beta_1 - \beta_0 - \beta_3 + \beta_2\|^2\right).
\end{equation}

\paragraph{Proximity Energies}
To ensure that the L-net approximates the reference surface $f$, we include terms that express the proximity of the \edit{plane-sphere} contact points in the L-net to the surface $f$. For each contact point $x$ in $L$, we denote by $x_f$ its closest-point projection onto $f$ and denote by $n_f$ the surface normal at $x_f$. We express the proximity in terms of the distance %$E_{prox}$ 
to the projected point $x_f$ and the distance %$E_{tan}$ 
to the tangent plane of the reference surface $f$ at $x_f$,
\begin{equation}
    E_{\mathrm{prox}} = \sum_{\edit{x} } \|x - x_f\|^2, \quad E_{\mathrm{tan}} = \sum_{\edit{x}} \langle x - x_f, n_f \rangle^2.
\end{equation}

\paragraph{Tangential distance} %[Optional]
\edit{For visual appearance and design purposes, it may be desirable}
for the spherical panels to be larger than
the conical and planar quads. This can be 
addressed by minimizing the width of the cone strips connecting adjacent 
spheres. The width equals the tangential distance between two or-spheres 
$s_i$ and $s_j$, and is computed as  $d_{tang}(s_i, s_j) = \sqrt{\|c_i - c_j\|^2 - (r_i - r_j)^2}$. 
We introduce the following energy term to minimize the tangential distance 
over all pairs of \edit{adjacent or-spheres} of the L-net:
\begin{equation}
    E_{\mathrm{td}} = \sum_{s_i, s_j \text{ adjacent}} \left(\|c_i - c_j\|^2 - (r_i - r_j)^2 \right)^2.
\end{equation}

\begin{figure}
    \centering
    \includegraphics[width=1\linewidth]
    {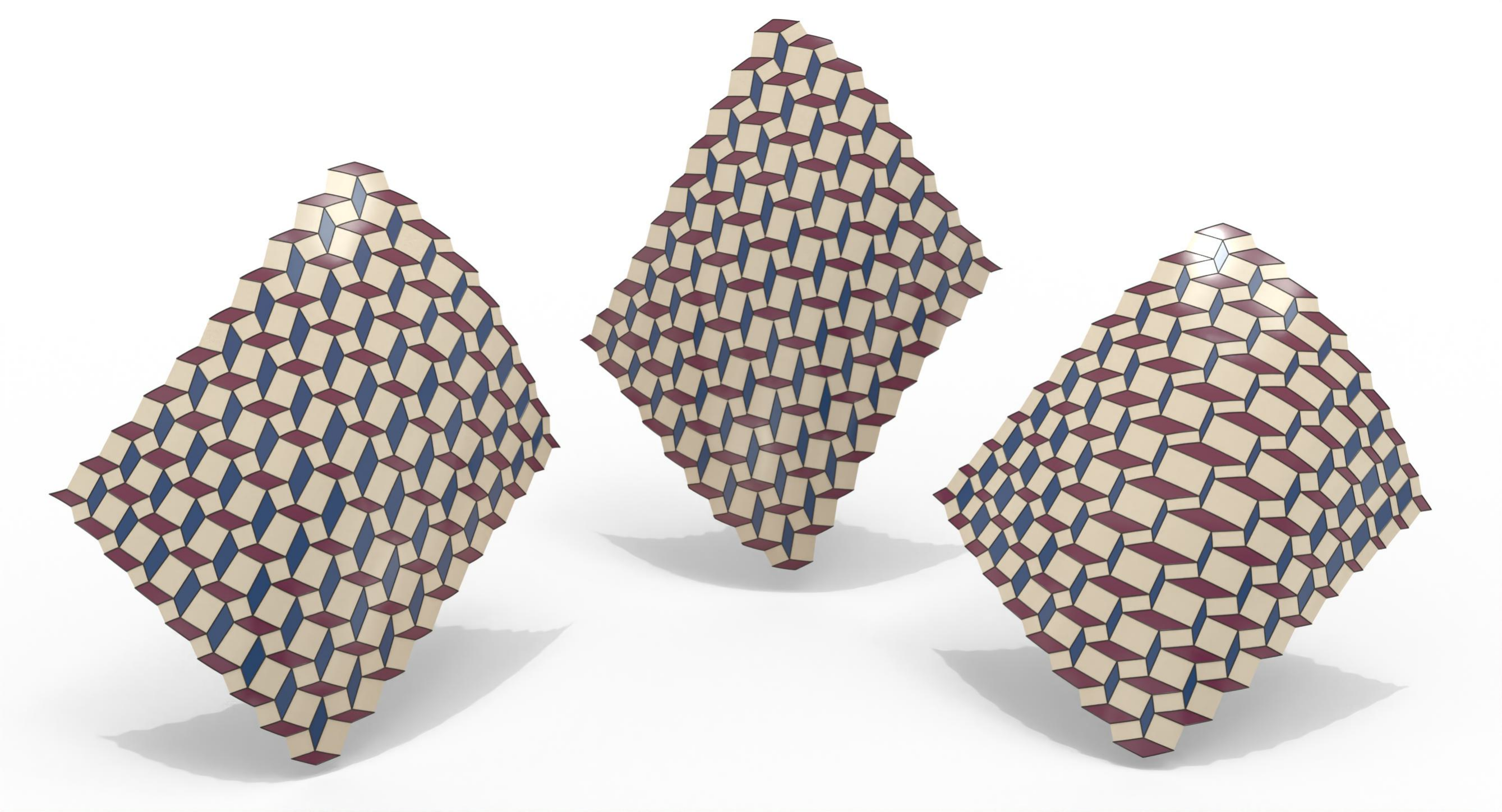}
    \caption{(Mid) L-mesh initialization using L-conjugate directions computed using $\theta(u,v) = \pi/4$ for the first direction and multiplier factor for the signed radius function $\tau = 0.75$.  (Left) Final optimization without tangential distance minimization. (Right) Final optimization with tangential distance minimization. }
    \label{fig:tdEffect}
\end{figure}

\paragraph{Optimization}
We also include the standard Levenberg-Marquardt regularization term $E_{\mathrm{reg}}=\|X- X^{\mathrm{prev}}\|^2$ to avoid 
abrupt changes during the optimization. Here $X$ is the vector of all
variables and $X^{\mathrm{prev}}$ contains their values from the
previous iteration. 
Combining all energy terms with their corresponding weights yields the 
total energy
\begin{equation}
\begin{split}
    E = \, &\omega_{\mathrm{oc}} E_{\mathrm{oc}} + \omega_{\gamma\mathrm{fair}} E_{\gamma\mathrm{fair}} + \omega_{\ell\mathrm{fair}} E_{\ell\mathrm{fair}} 
    %\\           &
    + \omega_{\mathrm{prox}} E_{\mathrm{prox}} + \omega_{\mathrm{tan}} E_{\mathrm{tan}} + \omega_{\mathrm{td}} E_{\mathrm{td}} + 
           \omega_{\mathrm{reg}} E_{\mathrm{reg}} + \omega_{\mathrm{unit}} E_{\mathrm{unit}}.
\end{split}
\end{equation}
All examples were computed with $\omega_{\mathrm{reg}} = 10^{-4}$, 
$\omega_{\ell\mathrm{fair}} = 10^{-3}$, $\omega_{\gamma\mathrm{fair}} = 10^{-3}$, $\omega_{\mathrm{unit}}=10$, 
and $\omega_{\mathrm{oc}} = 1$, where the only modifications appear on the proximity terms, and \edit{whether we include} the tangential distance minimization term or not (see Fig.~\ref{fig:tdEffect}).
Both fairness energies $\omega_{\ell\mathrm{fair}}$ and $\omega_{\gamma\mathrm{fair}}$ 
are multiplied by $0.1$ every ten steps to decrease the influence of these terms. After a certain number of iterations or convergence, we run 20 additional optimization 
steps with all weights set to zero except for $\omega_{oc}$ and $\omega_{unit}$. The resulting 
change in the L-net is minimal, but this final pass ensures that all 
elements satisfy the oriented contact constraints close to machine precision. It also addresses the fact that fairness energies,
which should just be small but in general not vanish, 
do not fight against the hard contact constraints.

%%%%%%%%%%%%%%%%%%%%%%%%%%%%%%%%%%%%%%%%%%%%%%%%%%%%%%%%
\section{Results, discussion, and future research}
%%%%%%%%%%%%%%%%%%%%%%%%%%%%%%%%%%%%%%%%%%%%%%%%%%%%%%%%%

\begin{figure}
    \centering
    \includegraphics[width=0.8\linewidth]{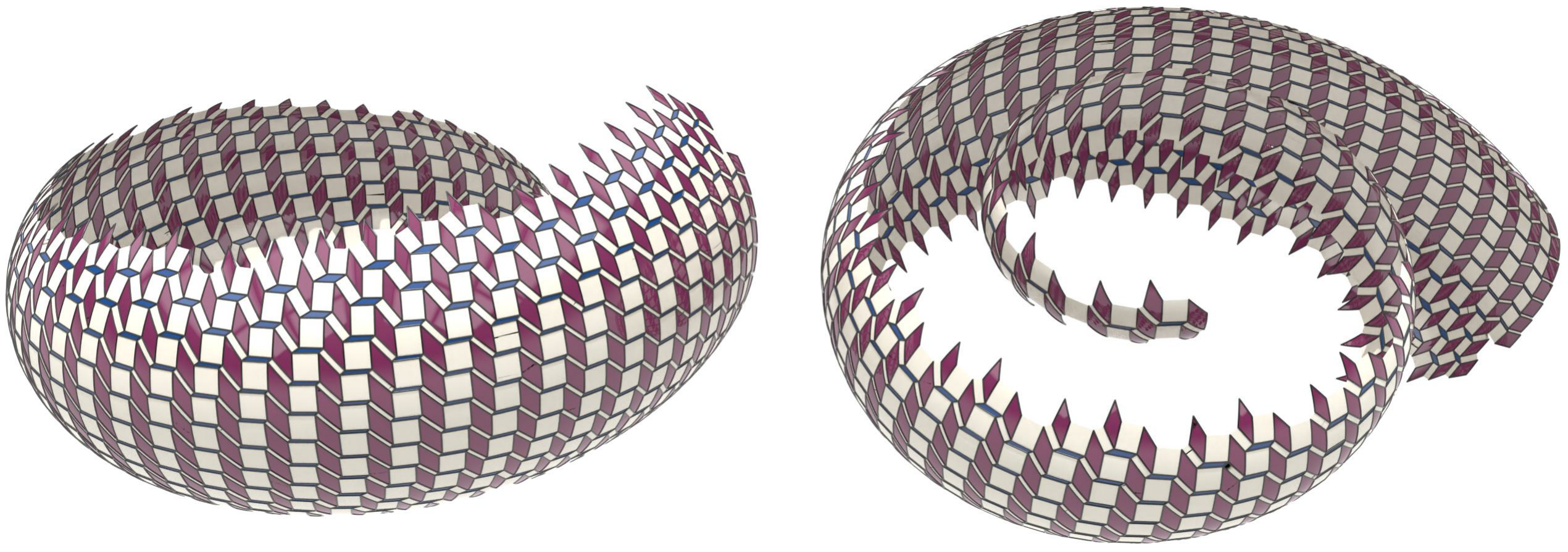}
    \includegraphics[width=0.8\linewidth]{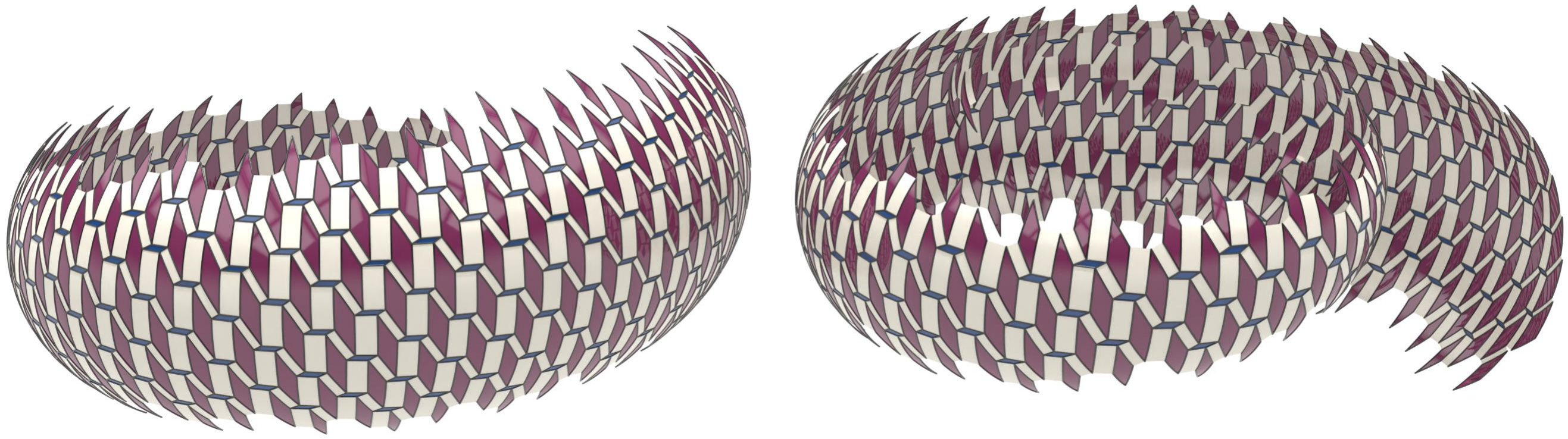}
    \caption{Spiral L-net designs obtained with different initializations. (Top) $\tau = 0.90$, $\theta_{\cos}^{(0,\pi/2)}(u)$; (bottom) $\tau = 0.95$, $\theta_C = \pi/4$. }
    \label{fig:nautiex1}
\end{figure}

For our results, we define a signed radius function 
$r(u,v) = \tau \cdot \min\{\rho_1, \rho_2\}$, 
with $\tau \in (0,1)$ ensuring that the sphere radius remains strictly 
below the smallest principal curvature radius, avoiding
swallow-tail singularities.

\begin{table}[H]
    \centering\resizebox{\columnwidth}{!}{%   % or \textwidth
    \begin{tabular}{ | l | l | l | c | c| c| c | c | c | c |}
    \hline 
   Fig	&	$\tau$	&	$\theta(u,v)$	&	$\omega_{\mathrm{prox}}$	&	$\omega_{\mathrm{tan}}$	&	$\omega_{\mathrm{td}}$	&	T(ms)/iter	&	iter	&	Residual	&	Residual $E_{\mathrm{oc}}$ 	\\	\hline \hline
\ref{fig:spiral_defM}	&	0.90	&	$\theta_C =  5 \pi/12 $	&	$10^{-4}$	&	$10^{-4}$	&	$10^{-3}$	&	$446.9$	&	$100$	&	$2.08 \cdot 10^{-7}$	&	$8.94 \cdot 10^{-18}$	\\	
\ref{fig:seashellex5}	&	0.75	&	$\theta_C = \pi/4 $	&	$10^{-4}$	&	$10^{-4}$	&	$10^{-5}$	&	$84.1$	&	$100$	&	$3.30 \cdot 10^{-5}$	&	$1.61 \cdot 10^{-26}$	\\
%\ref{fig:bigpanels}(top-left)	&	0.94	&	$\theta_C = \pi/3 $	&	$10^{-4}$	&	$10^{-4}$	&	$5 \cdot 10^{-3}$	&	$11.3$	&	$150$	&	$2.62 \cdot 10^{-6}$	&	$2.94 \cdot 10^{-14}$	\\	
\ref{fig:bigpanels} (left)	&	0.94	&	$\theta_C =  \pi/3 $	&	$10^{-4}$	&	$10^{-4}$	&	0	&	$10.5$	&	$150$	&	$4.75 \cdot 10^{-8}$	&	$8.41 \cdot 10^{-28}$	\\	
%\ref{fig:bigpanels}(top-right)	&	0.94	&	$\theta_C =  \pi/6$	&	$10^{-2}$	&	$10^{-2}$	&	$10^{-3}$	&	$13.2$	&	$150$	&	$2.54 \cdot 10^{-6}$	&	$8.54 \cdot 10^{-14}$	\\	
\ref{fig:bigpanels} (right)	&	0.94	&	$\theta_C = \pi/6$	&	$10^{-2}$	&	$10^{-2}$	&	0	&	$13.9$	&	$150$	&	$1.97 \cdot 10^{-7}$	&	$1.52 \cdot 10^{-27}$	\\	

\ref{fig:tdEffect} (left)	&	0.75	&	$\theta_C = \pi/6$	&	$10^{-4}$	&	$10^{-4}$	&	0	&	$7.4$	&	$200$	&	$5.30 \cdot 10^{-9}$	&	$1.93 \cdot 10^{-28}$	\\	
\ref{fig:tdEffect} (right)	&	0.75	&	$\theta_C = \pi/6 $	&	$10^{-4}$	&	$10^{-4}$	&	$10^{-3}$	&	$8.4$	&	$200$	&	$7.62 \cdot 10^{-7}$	&	$1.56 \cdot 10^{-26}$	\\	
\ref{fig:nautiex1} (top)	&	0.90	&	$\theta_{cos}^{(0, \pi/2)}(u) $	&	$10^{-4}$	&	$10^{-4}$	&	$10^{-3}$	&	$40.2$	&	$200$	&	$6.10 \cdot 10^{-6}$	&	$2.04 \cdot 10^{-26}$	\\	
\ref{fig:nautiex1} (bot)	&	0.95	&	$\theta_C = \pi/4 $	&	$10^{-4}$	&	$10^{-4}$	&	$5 \cdot 10^{-3}$	&	$40.4$	&	$200$	&	$9.70 \cdot 10^{-6}$	&	$8.70 \cdot 10^{-27}$	\\	
\ref{fig:paterns} (top-left)	&	0.80	&	$\theta_{0}^{5 \pi/12}(u) $	&	$10^{-4}$	&	$10^{-4}$	&	0	&	$19.2$	&	$150$	&	$2.26 \cdot 10^{-8}$	&	$1.51 \cdot 10^{-23}$	\\	
\ref{fig:paterns} (top-right)	&	0.80	&	$\theta_{0}^{\pi/3}(v) $	&	$10^{-4}$	&	$10^{-4}$	&	0	&	$22.0$	&	$150$	&	$2.63 \cdot 10^{-8}$	&	$2.37 \cdot 10^{-26}$	\\	
\ref{fig:paterns} (bot-left)	&	0.80	&	$\theta_{cos}^{(0, \pi/2)}(u) $	&	$10^{-4}$	&	$10^{-4}$	&	0	&	$13.6$	&	$150$	&	$1.98 \cdot 10^{-8}$	&	$8.23 \cdot 10^{-28}$	\\
\ref{fig:paterns} (bot-right)	&	0.80	&	$\theta_{cos}^{(0, \pi/2)}(v) $	&	$10^{-5}$	&	$10^{-5}$	&	0	&	$15.6$	&	$150$	&	$5.22 \cdot 10^{-8}$	&	$1.99 \cdot 10^{-24}$	\\

\hline
% \ref{fig:seashellex6}	&	0.85	&	$\theta_{cos}^{(0, \pi/2)}(v)$	&	$10^{-4}$	&	$10^{-4}$	&	0 &	$24.7$	&	$150$	&	$3.57 \cdot 10^{-8}$	&	$1.04 \cdot 10^{-11}$	
    \end{tabular}}
    \caption{Summary of parameter initializations, weights, residual, and times used for all our examples. The residual accounts only for the 
contributions of $E_{\mathrm{oc}}$, $E_{\mathrm{prox}}$, and $E_{\mathrm{tan}}$. As a quality measure of the L-net, we show the residual of $E_{\mathrm{oc}}$ measured after the extra optimization steps. All computations were performed using a Python implementation of our 
algorithm, running on a MacBook Pro with an Apple M1 chip (8-core CPU, 
8-core GPU, and 16-core Neural Engine). }
    \label{tab:results}
\end{table}

The first direction of the L-conjugate pair is 
initialized via a function $\theta(u,v) \in [0,\pi/2]$, representing 
the angle of the direction against the first 
principal direction. We consider three types of initialization:
\begin{itemize}
    \item[-] \emph{Constant:} $\theta_C(u,v) = c$, for some $c \in [0,\pi/2]$.
    \item[-] \emph{Linear:} along the $u$- and $v$-directions, respectively,
    \begin{align}
        \theta^{\theta_{max}}_{\theta_{min}}(u) &= (1-u)\,\theta_{min} + u\,\theta_{max}, \\
        \theta^{\theta_{max}}_{\theta_{min}}(v) &= (1-v)\,\theta_{min} + v\,\theta_{max}.
    \end{align}
    \item[-] \emph{Cosine:} along the $u$-direction and $v$-directions, respectively
    \begin{align}
        \theta_{\cos}^{(\theta_{min},\ \theta_{max})}(u) = \frac{\theta_{min} + \theta_{max}}{2} 
        + \frac{\theta_{max} - \theta_{min}}{2}\cos(2 \pi\,u), \\
        \theta_{\cos}^{(\theta_{min},\ \theta_{max})}(v) = \frac{\theta_{min} + \theta_{max}}{2} 
        + \frac{\theta_{max} - \theta_{min}}{2}\cos(2 \pi\,v).
    \end{align}
\end{itemize}

\begin{figure}
    \centering
    \begin{overpic}[width=1\linewidth]{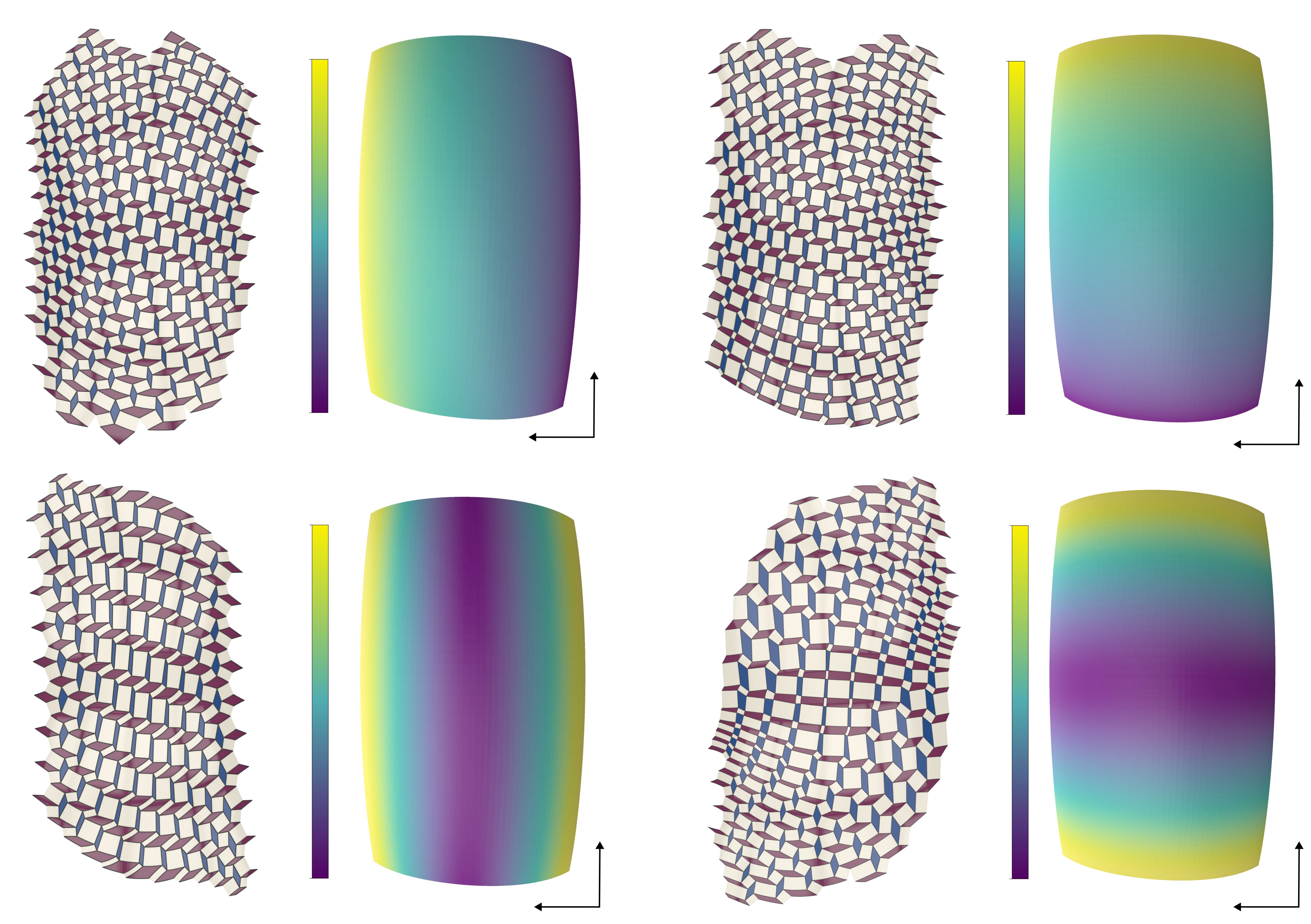}
    \put(38,36.6){$u$}
    \put(45,42.5){$v$}
    \put(92,36.6){$u$}
    \put(99,42.5){$v$}
    \put(38,1.1){$u$}
    \put(45,6.9){$v$}
    \put(92,1.1){$u$}
    \put(99,6.9){$v$}
    \put(21,37.9){$0$}
    \put(20,65){$\frac{5\pi}{12}$}
    \put(74.5,37.9){$0$}
    \put(74.5,65){$\frac{\pi}{3}$}
    \put(21,2){$0$}
    \put(21,30.1){$\frac{\pi}{2}$}
    \put(74.5,2){$0$}
    \put(74.5,30.1){$\frac{\pi}{2}$}
    \end{overpic}
    \caption{L-net patterns obtained from different initialization functions $\theta(u,v)$, 
    which prescribe the angle between the first direction of an L-conjugate pair 
    and the first principal direction at each point $(u,v)$ of the parameter domain. 
    For each pattern, the \emph{right} image shows the scalar field $\theta(u,v)$ 
    visualized on the surface (color scale shown in the color bar). (Top, left to right) $\theta_0^{5\pi/12}(u)$ and $\theta_0^{\pi/3}(v)$; (bottom, left to right) $\theta_{\cos}^{(0,\,\pi/2)}(u)$ and $\theta_{\cos}^{(0,\,\pi/2)}(v)$. }
    \label{fig:paterns}
\end{figure}

A summary of our results is provided in Table~\ref{tab:results}. Our method 
produces a variety of design patterns for a given input surface, where the user 
can explore different choices of radius and L-conjugate angles (see, for example, Figs.~\ref{fig:nautiex1} and~\ref{fig:paterns}). We observe that minimizing $E_{\mathrm{td}}$ has a fairness effect on the 
sphere center mesh; however, it also tends to shrink the L-net globally, 
as shown in Fig.~\ref{fig:tdEffect}, and the overall improvement in spherical 
panel size is minimal. We also observe that, in general, larger spherical panels are obtained 
when the radius $r$ is chosen close to the smallest principal curvature radius. 
However, the L-conjugate directions also influence the final panel shapes, as 
seen in Fig.~\ref{fig:paterns}, where despite using the same sphere congruence 
relatively close to the smallest principal curvature spheres, the resulting 
spherical panels exhibit considerable size variation. Similarly, Fig.~\ref{fig:spiral_defM} illustrates this effect, where smaller spherical panels appear in the narrow regions of the spiral design and larger ones in the wider regions. This indicates that obtaining larger panels 
likely has to be addressed prior to the choice of the conjugate frame
field and requires further exploration.

\label{sec:results}

\paragraph{Conclusion and future work}
We have introduced the new concept of L-conjugate parameterizations of a
surface with respect to an attached sphere congruence. Based on this theoretical
framework, we devised an algorithm for the approximation of positively curved
surfaces by L-nets. These are smooth surfaces formed by quadrilateral planar,
conical, and spherical patches. 

We are currently working on the M\"obius geometric counterpart, namely
the approximation of a surface $f$ by a quad mesh with spherical faces and
circular edges so that the spheres are close to those of an attached sphere
congruence. This is related to another concept of conjugacy,
which possesses similarly nice properties as L-conjugacy. 

The sphere geometries of  M\"obius and Laguerre are sub-geometries of Lie
sphere geometry, which would be another direction for future work. Special
cases of the corresponding
Lie geometric meshes can be obtained by applying  M\"obius transformations to
L-nets. This removes the restriction to positively curved surfaces. The
resulting surfaces are composed of spherical patches and patches of Dupin
cyclides, arranged in a checkerboard pattern. The cyclides are the images
of the cones under M\"obius transformations. The resulting surfaces have
more degrees of freedom than the cyclidic nets, which are principal curvature parameterizations that consist of smoothly joined %Dupin 
cyclide patches~\citep{bo+2011,bobenko+2012}.

Another interesting direction is the application of analogs (of Minkowski lifts) of our L-nets in 4-dimensional \emph{simply isotropic space} with the metric of signature $(+,+,+,0)$. 
%In the smooth case, 
These are lifts of ordinary conjugate nets in $\mathbb{R}^3$ to 4-dimensional space. 
We can then use them to initialize optimization towards L-nets in $\mathbb{R}^{3,1}$ by gradually changing the metric. Such a general approach has been proved successful for a class of \emph{Euclidean} problems, for instance, the construction of asymptotic nets with a constant node angle, % and flexible conjugate nets, 
as demonstrated in a recent series of papers \citep{%pirahmad2024area,pirahmad2025,
SkopenkovYorov2025CRPC,Yorov-Pottmann-Skopenkov-23,yorov2025}, where the last one serves as an introduction to this approach and summarizes the achievements. Having an application to \emph{Minkowskian} geometry is of additional interest.

\bibliographystyle{cas-model2-names}
%\bibliographystyle{cas-model1-num}
%\bibliography{ref,sphere-mesh}
\bibliography{referencesMerged}

\printcredits

\section*{Funding}
This research has been supported by King Abdullah University of Science and Technology
baseline funding.

\section*{Data availability}
Data sharing is not applicable to this article, as no datasets were generated or analyzed
during the current study.

\section*{Declaration of competing interest}
The authors declare that they have no known competing financial interests or personal relationships that could have appeared to influence the work reported in this paper

\end{document}